\theoremstyle{plain}
\newtheorem{theorem}{Theorem}[section]
\newtheorem{lemma}[theorem]{Lemma}
\newtheorem{proposition}[theorem]{Proposition}
\newtheorem{corollary}[theorem]{Corollary}
\theoremstyle{definition}
\newtheorem{example}[theorem]{Example}
\newtheorem{remark}[theorem]{Remark}
\def\calA{{\mathcal A}}
\def\calB{{\mathcal B}}
\def\calC{{\mathcal C}}
\def\calF{{\mathcal F}}
\def\calK{{\mathcal K}}
\def\calL{{\mathcal L}}
\def\calO{{\mathcal O}}
\def\calT{{\mathcal T}}
\def\calV{{\mathcal V}}
\def\calW{{\mathcal W}}
\def\bfA{{\mathbf A}}
\def\bfF{{\mathbf F}}
\def\bfG{{\mathbf G}}
\def\bfH{{\mathbf H}}
\def\bfN{{\mathbf N}}
\def\bfa{{\mathbf a}}
\def\bf1{{\mathbf 1}}
\def\bf2{{\mathbf 2}}
\def\rmr{{\mathrm r}}
\def\frA{{\mathfrak A}}
\def\frB{{\mathfrak B}}
\def\frC{{\mathfrak C}}
\def\ra{\rightarrow}
\def\la{\leftarrow}
\def\LRa{\Leftrightarrow}
\def\LLRa{\Longleftrightarrow}
\def\es{\emptyset}
\def\se{\subseteq}
\def\ve{\varepsilon}
\def\Si{\Sigma}
\def\si{\sigma}
\def\Om{\Omega}
\def\om{\omega}
\def\Ga{\Gamma}
\def\rSi{\mathrm{r}(\Si)}
\def\dt{\mathrm{dt}}
\def\hg{\mathrm{hg}}
\def\leaf{\mathrm{leaf}}
\def\lr{\mathrm{lr}}
\def\pr{\mathrm{pr}}
\def\ran{\mathrm{ran}}
\def\root{\mathrm{root}}
\def\run{\mathrm{run}}
\def\sub{\mathrm{sub}}
\def\supp{\mathrm{supp}}
\def\tde{\widetilde{\delta}}
\def\tdei{\tilde{\delta}^{-1}}
\def\tDe{\widetilde{\Delta}}
\def\wh{\widehat}
\def\SX{\Sigma X}
\def\OY{\Omega Y}
\def\GX{\Gamma X}
\def\SXt{T_\Sigma(X)}
\def\OYt{T_\Omega(Y)}
\def\GXt{T_\Gamma(X)}
\def\SXc{C_\Sigma(X)}
\def\algA{\mathcal{A} = (A,\Sigma)}
\def\algB{\mathcal{B} = (B,\Sigma)}
\def\nalgA{\mathfrak{A} = (A,\Sigma)}
\def\nalgB{\mathfrak{B} = (B,\Sigma)}
\def\pA{\wp\frA = (\wp(A),\Si)}
\def\SXta{\calT_\Sigma(X)}
\def\recG{\mathbf{G} = (\calB,b_0,\pi)}
\def\recA{\mathbf{A} = (\calA,a_0,\alpha)}
\def\recNA{\mathbf{NA} = (\frA,I,\alpha)}
\def\recF{\mathbf{F} = (\calA,a_0,\om)}
\def\recNF{\mathbf{NF} = (\frA,I,\om)}
\def\recNG{\mathbf{NG} = (A,\Si,X,\gamma,\iota,\om)}
\def\recpNF{\wp\mathbf{NF} = (\wp\frA,I,\pi)}
\def\pNF{\wp\NF}
\def\NA{\mathbf{NA}}
\def\NF{\mathbf{NF}}
\def\NG{\mathbf{NG}}
\def\NH{\mathbf{NH}}
\begin{document}

\title{Fuzzy Deterministic Top-down Tree Automata}

\author{Eija Jurvanen\footnote{jurvanen@utu.fi}\:  and Magnus Steinby\footnote{steinby@utu.fi}\\
\emph{Department of Mathematics and Statistics}\\
\emph{University of Turku}\\
\emph{20014 Turku, Finland}}

\date{}

\maketitle

\begin{abstract}
In this paper we introduce and study fuzzy deterministic top-down (DT) tree automata over a lattice $\calL$. The $\calL$-fuzzy tree languages recognized by these automata are said to be DT-recognizable, and they form a proper subfamily $DRec_\calL$ of the family of $Rec_\calL$ of all regular $\calL$-fuzzy tree languages. We prove a Pumping Lemma for $DRec_\calL$  from which several decidability results follow. The closure properties of $DRec_\calL$ under various operations are established. We also characterize DT-recognizability in terms of $\calL$-fuzzy path languages, and prove that the path closure of any regular $\calL$-fuzzy tree language is DT-recognizable, and that it is decidable whether a regular $\calL$-fuzzy tree language is DT-recognizable. In most of the paper, $\calL$ is just any nontrivial bounded lattice, but sometimes it is assumed to be distributive or even a bounded chain.
\end{abstract}

\noindent{\small \textbf{Keywords:} $\calL$-fuzzy tree language, $\calL$-fuzzy tree automaton, deterministic top-down automaton, decidability}
\smallskip


\section{Introduction}\label{Introduction}
A finite tree automaton may process input trees either bottom-up (frontier-to-root) starting at the leaves and proceeding towards the root, or top-down (root-to-frontier) starting at the root and moving towards the leaves.  While the family \emph{Rec} of the recognizable (or regular) tree languages is defined both by deterministic and nondeterministic bottom-up automata as well as by nondeterministic top-down automata, the deterministic top-down (DT) automata give a proper subfamily \emph{DRec} of \emph{Rec}. These facts were established for binary trees by Magidor and Moran \cite{MaMo69} who also observed that any member $T$ of \emph{DRec} is fully determined by the labeled paths from the root to a leaf appearing in its trees: if every path in a given tree $t$ appears in at least one tree belonging to $T$, then also $t$ must be in $T$. This characterization of \emph{DRec}-languages was explored further in a general form in \cite{Cou78} and \cite{Vir80}. The path closure requirement means that even some very simple regular tree languages cannot be recognized by a DT automaton. On the other hand, the derivations in any context-free grammar can be represented by a DT-recognizable tree language \cite{GeSt84}, the regular types for logic programs considered in \cite{YaSh91} are actually DT-recognizable tree languages, and in \cite{MaNS08} it is argued that certain DT automata suffice to express markup language schema definitions. Moreover, DT tree automata and the family \emph{DRec} have some interesting properties of their own, and there is a fairly extensive literature on DT automata and their extensions. We refer the reader to items \cite{Cou78,Esi86,GeSt78,GeSt84,GeSt97,Jur92,Jur95,MaMo69,MaNS08,Ste17,Vir80} of the bibliography and the references in them.

Various types of fuzzy automata and languages have been studied already since the 1960s, the first works appearing quite soon after Zadeh \cite{Zad65} had introduced the idea of fuzziness. For surveys and bibliographies covering the topic, one may consult \cite{Asv96}, \cite{MoMa02} and \cite{Rah09}. In particular, fuzzy tree recognizers have also been studied by some authors (cf. \cite{BoBo10,EsLi07,InFu75,Rah09}, for example).
In this paper we introduce and study finite fuzzy deterministic top-down tree recognizers and the fuzzy tree languages defined by them.

Much of the theory of fuzzy automata and languages has been done within Zadeh's classical setting where the degrees of membership are real numbers taken from the interval  $[0,1]$. However, also more general notions of fuzziness are common. Here we shall consider fuzzy tree languages with membership degrees in a bounded lattice $\calL = (L,\leq)$. Thus they are $\calL$-fuzzy sets in the sense of Goguen \cite{Gog67}. It is customary to assume that the lattice $\calL$ is distributive and complete  \cite{Gog67,Rah09}, or even completely distributive \cite{EsLi07}, but usually we need neither distributivity nor completeness because our automata compute the degrees of acceptance  using the meet-operation only. This means also that there is no need to assume that $\calL$ is locally finite, i.e., that finitely generated sublattices of $\calL$ are finite. Therefore, in most of the paper,  $\calL$ is just any nontrivial lattice with a least element $0$ and a greatest element $1$. However, when dealing with nondeterministic fuzzy tree recognizers and general regular fuzzy tree languages, we assume that $\calL$ is distributive,  and in Section \ref{se:Path closure} it is a bounded chain.

In what follows, DT stands for \emph{deterministic top-down} and NDT for \emph{nondeterministic top-down}.

In Section \ref{se:TreesContextsPaths} we define several concepts and symbols pertaining to trees as well as some general notation.

All our tree recognizers, both classical and fuzzy, are essentially deterministic or nondeterministic top-down algebras equipped with some starting and acceptance mechanisms. Such algebras are considered in Section \ref{se:Top-down algebras and recognizers}. For DT algebras we define the notion of a run tree, and with any NDT algebra we associate a subset algebra which is a DT algebra.
Then we present the definitions of NDT and DT tree recognizers. The former define the family $Rec$ of all recognizable (regular) tree languages while the latter yields the subfamily $DRec$ of DT-recognizable tree languages.

For any ranked alphabet $\Si$ and any leaf alphabet $X$, the set of $\SX$-trees is denoted by $\SXt$, and  an
$\calL$-fuzzy $\SX$-tree language is any mapping $\Phi : \SXt \ra L$. The $\calL$-fuzzy NDT tree recognizers defined in Section \ref{se:L-fuzzy tree languages} are  equivalent to the $\calL$-fuzzy bottom-up tree recognizers of \cite{EsLi07}, and hence they recognize the recognizable, or regular, $\calL$-fuzzy tree languages. We also show that if $\calL$ is distributive, then these recognizers can be simplified by making the state transitions and the set of initial states crisp. We conclude the section by showing that the equivalence problem of $\calL$-fuzzy NDT tree recognizers is decidable.

In Section \ref{se:L-fuzzy DT tree recognizers} we define our DT tree recognizers and the family $DRec_\calL$ of DT-recognizable $\calL$-fuzzy tree languages, and establish some of their basic properties. In particular, we prove a Pumping Lemma from which several decidability results are derived,  including the decidability of the Emptiness, Finiteness and Equivalence Problems of DT tree recognizers.

In Section \ref{se:DRec_L, Rec_L and DRec} we study connections between $\calL$-fuzzy DT-recognizable, $\calL$-fuzzy recognizable tree languages,  and usual DT-recognizable tree languages.
In Section \ref{se:Closure properties} we define some operations on $\calL$-fuzzy tree languages. The operations are natural generalizations of well-known tree language operations and similar to the corresponding operations on tree series. For example top-concatenations, translations and inverse translations, and inverse tree homomorphisms are shown to preserve DT-recognizability. On the other hand, as one would expect, all the negative closure properties of $DRec$ are inherited by $DRec_\calL$.

The ordinary DT-recognizable tree languages are precisely the path closed regular tree languages; a tree language $T$ is path closed if it contains any tree $t$ such that every labeled path in $t$ leading from the root to a leaf appears in some member of $T$.  In Section \ref{se:Fuzzy path languages} we introduce $\calL$-fuzzy path languages and path closures. We also associate with any $\calL$-fuzzy DT tree recognizer $\bfF$ an $\calL$-fuzzy path language $\Lambda_\bfF$, and show that this is recognized by a unary $\calL$-fuzzy DT recognizer obtained from $\bfF$. The main result of this section is a characterization of the DT-recognizable $\calL$-fuzzy tree languages in terms of $\calL$-fuzzy path languages from which it follows that any DT-recognizable $\calL$-fuzzy tree language is path closed.

In Section \ref{se:Path closure} we study further fuzzy path languages and DT-recognizability, now under the assumption that the lattice of membership values is a bounded chain $\calC = (C,\leq)$. For any $\calC$-fuzzy NDT tree recognizer $\NF$, we introduce the $\calC$-fuzzy path language $\Lambda_{\NF}$ defined by $\NF$. We also define the subset recognizer $\wp\NF$ of $\NF$, which is a $\calC$-fuzzy DT tree recognizer, and show that $\Lambda_{\wp\NF} = \Lambda_{\NF}$.

In the theory of crisp DT tree recognizers many results depend on the `normalization' of NDT and DT tree recognizers (cf. \cite{GeSt78,GeSt84,Jur95}). When the membership value lattice is a chain $\calC$, we may introduce a fuzzy version of this notion, and show that any $\calC$-fuzzy NDT tree recognizer $\NF$ can be normalized. Then we prove that if $\NF$ is normalized, then $\wp\NF$ recognizes the $\calC$-fuzzy path closure of the $\calC$-fuzzy tree language recognized by $\NF$. From this result we can infer that the $\calC$-fuzzy path closure of any regular $\calC$-fuzzy tree language is DT-recognizable, and that a regular $\calC$-fuzzy tree language is DT-recognizable exactly in case it is path closed, and that it is decidable whether a recognizable $\calC$-fuzzy tree language is DT-recognizable.

We also consider normalized $\calC$-fuzzy DT tree recognizers. Any $\calC$-fuzzy DT tree recognizer can be normalized, and we show that normalized $\calC$-fuzzy DT tree recognizers have some special properties.

\section{Trees, contexts and paths}\label{se:TreesContextsPaths}

Let us first introduce some general notation.
We may write $A := B$ to emphasize that $A$ is defined to be equal to $B$. The cardinality of a set $A$ is denoted by $|A|$, and the set of all subsets of $A$ by $\wp(A)$. For any nonnegative integer $n$, let $[n] := \{1,\ldots,n\}$.

For any direct product $A_1\times \cdots \times A_m$ ($m\geq 1$) and any $i\in [m]$, the $i^{th}$ \emph{projection map} $A_1\times \cdots \times A_m \ra  A_i, (a_1,\ldots,a_m) \mapsto a_i,$ is denoted by $\pr_i$, and it is also extended to sets of vectors:
\[
\pr_i : \wp(A_1\times \cdots \times A_m) \ra  \wp(A_i), V \mapsto \{\pr_i(\bfa) \mid \bfa \in V\}.
\]

For any alphabet $X$, $X^*$ denotes the set of all (finite) words over $X$. The empty word is denoted by $\ve$. Subsets of $X^*$ are called \emph{languages} (over $X$).

Let $\Sigma$ be a {\em ranked alphabet}, i.e., a finite set of operation symbols, which does not contain nullary symbols. For each $m \geq 1$, $\Sigma_m$ denotes the set of $m$-ary symbols in $\Sigma$. If $\Si$ consists of the symbols $f_1,\ldots,f_k$ of the respective arities $m_1,\ldots,m_k$, we may write $\Si = \{f_1/m_1,\ldots,f_k/m_k\}$. The set $\mathrm{r}(\Si) := \{m \in \rmr(\Si) \mid \Si_m \neq \es\}$ is called the \emph{rank type} of $\Si$. If $\rmr(\Si) = \{1\}$, then $\Si$ is said to be \emph{unary}.
In what follows, $\Sigma$ is always a ranked alphabet and $X$ is an ordinary finite non-empty alphabet, called a {\em leaf alphabet}, disjoint from $\Sigma$. The set $T_\Sigma(X)$ of $\SX${\em -trees}  is the least set such that $X\subseteq T_\Sigma(X)$, and   $f(t_1,\ldots,t_m)\in T_\Sigma(X)$ for all $m \in \rmr(\Si)$, $f\in \Sigma_m$ and $t_1,\ldots,t_m\in T_\Sigma(X)$.
A $\Sigma X${\em -tree language} is any subset of $T_\Sigma(X)$. We also speak about \emph{trees} and \emph{tree languages} without specifying the alphabets. A \emph{family of tree languages} is a map $\calV$ that assigns to each pair $\Si,X$ a set $\calV(\Si,X)$ of $\SX$-tree languages. We write $\calV = \{\calV(\Si,X)\}$. For any families of tree languages $\calV$ and $\calW$, $\calV \se \calW$ means that $\calV(\Si,X) \se \calW(\Si,X)$ for all $\Si$ and $X$, and unions and intersections of such families are defined by similar componentwise conditions.

The  \emph{root (symbol)} $\root(t) (\in \Si \cup X)$, the set of \emph{subtrees} $\sub(t)$, the set of \emph{leaf symbols} $\leaf(t) (\se X)$, and the \emph{height} $\hg(t)$ of a $\SX$-tree $t$ are defined as follows: for any $x\in X$ and $t = f(t_1,\ldots,t_m)$,
\begin{itemize}
  \item[(1)] $\root(x) = x$, $\sub(x) = \leaf(x) = \{x\}$, and $\hg(x) = 0$,  and
  \item[(2)] $\root(t) = f$, $\sub(t) = \{t\}\cup \sub(t_1) \cup \ldots \cup \sub(t_m)$, $\leaf(t) = \leaf(t_1)\cup\ldots\cup\leaf(t_m)$, and $\hg(t) = \max\{\hg(t_1),\ldots,\hg(t_m)\}+1$.
\end{itemize}

Let $\xi$ be a symbol that does not belong to our  alphabets $\Si$ or $X$. A \emph{$\SX$-context} is a $\Si(X\cup\{\xi\}$)-tree in which $\xi$ appears exactly once. The set of $\SX$-contexts is denoted by $\SXc$. For any $p,q\in \SXc$ and $t\in \SXt$, let $p(t)$ and $p(q)$ be the $\SX$-tree and the $\SX$-context obtained from $p$ by replacing the $\xi$ by $t$ and $q$, respectively. Furthermore, let $p\cdot q := p(q)$ and $p\cdot t := p(t)$. The \emph{depth} $\dt(p)$ of a $\SX$-context is $0$ if $p = \xi$ and $\dt(q)+1$ if $p = f(\ldots,\xi,\ldots)\cdot q$ with $f\in \Si$ and $f(\ldots,\xi,\ldots), q\in \SXc$. Clearly, $\SXc$ forms  a monoid with $p\cdot q$ as the product  and $\xi$ as the unit element.

\medskip

\noindent \textbf{Convention.} The frequently occurring phrases {\em deterministic top-down} and {\em nondeterministic top-down} will be abbreviated to DT and NDT, respectively.

\medskip

The DT-recognizable tree languages are characterized by the labeled paths appearing in their trees. These paths will play an important role also in our study of fuzzy DT-recognizable tree languages. The paths in a $\SX$-tree are formally defined using the {\em path alphabet}
\[
\Gamma \, := \, \bigcup \{\Sigma_m \times [m] \mid m \in \rSi\}.
\]
A pair $(f,i) \in \Gamma$ is written simply as $f_i$. In what follows, $\Gamma$ is always the path alphabet of the ranked alphabet $\Si$ considered.
 The set $\delta(t)\se \GXt$ of \emph{paths} in a $\SX$-tree $t$ is defined by
\begin{itemize}
  \item[{\rm (1)}] $\delta(x) = \{x\}$ for $x\in X$, and
  \item[{\rm (2)}] $\delta(t) = f_1\delta(t_1) \cup \ldots \cup f_m\delta(t_m)$ for $t = f(t_1,\ldots,t_m)$.
\end{itemize}
When we regard $\Gamma$ as a unary ranked alphabet, then the \emph{path language} $\delta(T) := \bigcup\{\delta(t) \mid t \in T\} (\se \GXt)$ of a $\SX$-tree language $T$ is  a set of unary trees in Polish form. The \emph{path closure} $\Delta(T) := \delta^{-1}(\delta(T))$ of $T\se \SXt$ consists of all the $\SX$-trees $t$ such that $\delta(t) \se \delta(T)$, and
$T$ is \emph{path closed} if $T=\Delta(T)$. Note that $\delta^{-1}(U) := \{t\in \SXt \mid \delta(t) \se U\}$ is path closed for every $U\se \GXt$.
Often it is convenient to treat $\Ga$ as an ordinary alphabet and view paths in $\SX$-trees as expressions  $wx$, where $w \in \Gamma^*$ and $x\in X$.

\begin{example}\label{ex:Paths} Let $\Si = \{f/2,g/1\}$ and $X = \{x,y\}$. Now $\Gamma = \{f_1,f_2,g_1\}$. If $t = f(g(f(x,x)),y)$, then $\delta(t) = \{f_1g_1f_1x, f_1g_1f_2x, f_2y\}$. For the $\SX$-tree language $T = \{f(x,y),f(y,x)\}$, we get  $\delta(T) = \{f_1x,f_2y,f_1y,f_2x\}$ and $\Delta(T) = \{f(x,y),f(y,x),f(x,x),f(y,y)\}$.
\end{example}

Recall that a \emph{$\Sigma$-algebra} $\algA$ consists of a nonempty set $A$ and a $\Sigma$-indexed family of operations $f^\calA: A^m \ra  A$, where the arity $m$ is that of the symbol $f (\in \Si_m)$.
The $\SX$-{\em term algebra} $\calT_\Si(X) = (T_{\Sigma}(X), \Sigma)$ is defined by $f^{{\mathcal T}_{\Sigma}(X)}(t_1, \ldots , t_m)  \,=\, f(t_1,\ldots , t_m)$ ($m \in \rSi, f \in \Sigma_m$, $t_1, \ldots , t_m \in T_{\Sigma}(X)$).
It is well known that the maps $\SXt \ra  \SXt, t \mapsto p(t),$ where $p$ is a $\SX$-context, are precisely the \emph{translations} of the term algebra $\calT_\Si(X)$ (cf. \cite{BuSa81,Coh81, Ste05}, for example).

\section{Top-down algebras and recognizers}\label{se:Top-down algebras and recognizers}
All the tree recognizers to be considered in this paper are essentially deterministic or nondeterministic  finite ``top-down algebras'' equipped with  some starting and acceptance mechanisms. For easier reference, we present all the basic definitions and facts concerning such algebras in this section.

A {\em \textup{(}finite\textup{)} DT $\Sigma$-algebra} $\algA$  consists of a nonempty (finite) set $A$ and a $\Sigma$-indexed family of {\em top-down operations} $f_{\mathcal{A}} :A \ra
A^m \ \ \ (f \in \Sigma),$ where the arity $m$ is that of $f (\in \Sigma_m)$.

Subalgebras, homomorphisms, congruences, quotient algebras and direct products can be defined for DT algebras in a natural way, and the usual basic relations hold between these notions \cite{GeSt78,GeSt84,Vir80,Esi86}. For example, the \emph{direct product}
$\calA\times\calB = (A\times B,\Si)$  of two DT $\Si$-algebras $\algA$ and $\algB$ is the DT $\Si$-algebra  such that for any $m\in \rSi$, $f\in \Si_m$, $a\in A$ and $b\in B$,
$
f_{\calA\times\calB}((a,b)) = ((a_1,b_1),\ldots,(a_m,b_m)),
$
where $(a_1,\ldots,a_m) = f_\calA(a)$ and $(b_1,\ldots,b_m) = f_\calB(b)$.

Finite DT $\Si$-algebras will serve as top-down tree automata, and their elements are then called \emph{states}. If the state of $\calA$ at a node $\nu$ of a tree is $a$ and the label of that node is $f\in \Si_m$, then $\calA$ enters the $m$ descendant nodes of $\nu$ in the respective states $a_1,\ldots,a_m$, where $(a_1,\ldots,a_m) = f_\calA(a)$. One way to represent the computations of DT algebras is offered by run trees.
For any DT $\Sigma$-algebra $\algA$, let $\Si\times A$ be the ranked alphabet with $\mathrm{r}(\Si\times A) = \rSi$ and $(\Si\times A)_m = \Si_m \times A$ for every $m\in \rSi$.  The \emph{run (tree)} $\run(\calA,t,a)$ of $\calA$ on a tree $t \in \SXt$ starting in a state $a\in A$ is a  $(\Sigma\times A)(X \times A)$-tree defined as follows:
\begin{enumerate}
    \item[(1)] $\run(\calA,x,a) = (x,a)$ for $x \in X$, and
    \item[(2)]  $\run(\calA,f(t_1, \dots, t_m),a) = (f,a)(\run(\calA,t_1,a_1), \dots,\run(\calA,t_m,a_m))$ if $f_{\calA}(a) = (a_1, \dots, a_m)$.
\end{enumerate}
Since the behaviors of the recognizers considered here are defined in terms of the leaf symbols appearing in run trees, we introduce the abbreviation $\lr$ for the composition of the functions $\leaf$ and $\run$, i.e.,
\[
\lr(\calA,t,a) := \leaf(\run(\calA,t,a))
\]
for any  $t\in \SXt$ and $a\in A$.

Any path word $w\in \Gamma^*$ induces a mapping $w^\calA : A \ra  A$ as follows.
\begin{itemize}
  \item[(1)] $a\ve^\calA = a$ for every $a\in A$.
  \item[(2)] $a(f_iu)^\calA = \pr_i(f_\calA(a))u^\calA$ for any $a\in A$, $f_i\in \Gamma$ and $u\in \Gamma^*$.
\end{itemize}

When $\calA$ is viewed as a top-down tree automaton, $aw^\calA$ is the state in which $\calA$ reaches the leaf at the end of the path described by $w$ if it starts  in state $a$ at the root of a tree containing that path.

A  \emph{(finite) NDT $\Si$-algebra} $\nalgA$ consists of a (finite) nonempty set $A$ and a $\Si$-indexed family of \emph{NDT operations} $f_\frA : A \ra  \wp(A^m)$ ($f\in \Si_m$).
We associate with any $w \in \Ga^*$ a mapping $w^\frA : A \ra  \wp(A)$ as follows:
\begin{itemize}
  \item[(1)] $a \ve^\frA = \{a\}$ for every $a\in A$, and
  \item[(2)] $a ( f_iu)^\frA = \bigcup \{\pr_i(\bfa) u^\frA \mid \bfa \in f_\frA(a)\}$ for any $a\in A$,  $f_i\in \Ga$ and $u \in \Ga^*$.
\end{itemize}
Now $aw^\frA$ can be interpreted as the set of states in which $\frA$ may reach the leaf at the end of the path represented by the word $w$ when started in state $a$ at the root of a tree containing the path.
The maps $w^\frA$ are extended to maps $\wp(A) \ra  \wp(A)$ in the natural way: $Hw^\frA := \bigcup\{aw^\frA \mid a \in H\}$ for any $w\in \Ga^*$ and $H \se A$. Note that for $w = f_iu$, we have $a w^\frA = \pr_i(f_\frA(a))u^\frA$.

The \emph{subset algebra} $\pA$ of an NDT $\Si$-algebra $\nalgA$ is the DT $\Si$-algebra such that for all $m\in \rSi$, $f\in \Si_m$ and $H \se A$, $f_{\wp\frA}(H) = (H_1,\ldots,H_m)$ with
$H_i := \bigcup\{\pr_i(f_\frA(a)) \mid a\in H\}  (i\in [m])$.

It is easy to prove the following lemma by induction on the word $w$.

\begin{lemma}\label{le:HwpA = Un awA}
If $\nalgA$ is any NDT $\Si$-algebra, then $Hw^{\wp\frA} = Hw^\frA$ for any $H\se A$ and $w\in \Ga^*$.
\end{lemma}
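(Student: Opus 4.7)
The natural approach is induction on the length of $w \in \Ga^*$. The base case $w = \ve$ is immediate from the definitions: on the one hand $H\ve^{\wp\frA} = H$ because $\ve^{\wp\frA}$ is the identity map on $\wp(A)$ (the DT algebra $\pA$ interprets $\ve$ as identity), and on the other hand $H\ve^\frA = \bigcup\{a\ve^\frA \mid a \in H\} = \bigcup\{\{a\} \mid a \in H\} = H$ by the definition of $\ve^\frA$ and its extension to subsets.

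For the inductive step, write $w = f_iu$ with $f_i \in \Ga$ and $u \in \Ga^*$. Set $H_i := \bigcup\{\pr_i(f_\frA(a)) \mid a \in H\}$, so that $f_{\wp\frA}(H) = (H_1,\ldots,H_m)$ by the definition of the subset algebra. Then, unfolding the DT action on $\pA$,
\[
H(f_iu)^{\wp\frA} \;=\; \pr_i(f_{\wp\frA}(H))\, u^{\wp\frA} \;=\; H_i\, u^{\wp\frA},
\]
and by the induction hypothesis applied to $u$ and the subset $H_i \se A$, this equals $H_i\, u^\frA$.

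On the NDT side, the definition gives $a(f_iu)^\frA = \pr_i(f_\frA(a))\,u^\frA$ for each $a\in A$ (as noted just before the lemma), so
\[
H(f_iu)^\frA \;=\; \bigcup\{a(f_iu)^\frA \mid a\in H\} \;=\; \bigcup\{\pr_i(f_\frA(a))\,u^\frA \mid a\in H\}.
\]
The only genuine point to check is that the extension of $u^\frA$ to $\wp(A)$ distributes over unions, i.e.\ $(G\cup G')u^\frA = Gu^\frA \cup G'u^\frA$; but this is immediate from the definition $Gu^\frA := \bigcup\{bu^\frA \mid b \in G\}$. Applying this distributivity, the last union equals $\bigl(\bigcup\{\pr_i(f_\frA(a)) \mid a\in H\}\bigr)u^\frA = H_i\, u^\frA$, which matches the DT side.

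I do not anticipate a real obstacle here; the argument is essentially bookkeeping once the correct parsing of $\pr_i(f_\frA(a))$ as a \emph{set} (under the setwise extension of $\pr_i$) is recognised, and once one notes the distributivity of the setwise extension $(\,\cdot\,)u^\frA$ over unions. Everything else reduces to applying the two definitions and invoking the induction hypothesis on the strictly shorter word $u$.
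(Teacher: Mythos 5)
Your proof is correct and follows exactly the route the paper indicates (the paper itself only remarks that the lemma ``is easy to prove by induction on the word $w$'' and gives no details). The induction on $w$, the identification of both sides with $H_i u^\frA$ in the step $w = f_iu$, and the explicit check that the setwise extension of $u^\frA$ distributes over unions are all sound.
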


Among the numerous devices for defining the recognizable, or regular, tree languages, the NDT tree recognizers are the most convenient ones here. For general presentations of the theory of finite tree automata and regular tree languages, the reader may consult the references \cite{TATA,Eng75,GeSt84,GeSt97}.

An \emph{NDT $\SX$-recognizer}, $\recNA$ consists of a finite NDT $\Si$-algebra $\nalgA$,  a set $I\se A$ of \emph{initial states}, and a \emph{final state assignment} $\alpha : X \ra  \wp(A)$. The sets $T(\NA,a)$ of $\SX$-trees accepted by $\NA$ starting at the root in a state $a\in A$ are defined as follows:
\begin{itemize}
  \item[(1)] for any $x\in X$ and $a\in A$, $x \in T(\NA,a) $ if and only if  $a\in \alpha(x)$;
  \item[(2)] for $t = f(t_1,\ldots,t_m)$ and any $a\in A$, $t\in T(\NA,a)$ if and only if $t_1\in T(\NA,a_1),\ldots,t_m\in T(\NA,a_m)$ for some $(a_1,\ldots,a_m) \in f_\frA(a)$.
\end{itemize}
The tree language \emph{recognized} by $\NA$ is $T(\NA) := \bigcup\{T(\NA,a) \mid a \in I\}$.
A $\SX$-tree language $T$ is said to be \emph{recognizable}, or \emph{regular}, if $T = T(\NA)$ for some NDT $\SX$-recognizer $\NA$. Let $Rec = \{Rec(\Si,X)\}$ be the family of all regular tree languages.

It is well known that the family $Rec$ is also defined by both nondeterministic and deterministic bottom-up tree recognizers, but that deterministic top-down tree recognizers yield a proper subfamily of it.

A \emph{DT $\SX$-recognizer}, is a system $\recA$, where $\algA$ is a finite DT $\Si$-algebra, $a_0\in A$ is the \emph{initial state}, and $\alpha : X \ra  \wp(A)$ is the \emph{final state assignment}. For any $a\in A$, let $T(\bfA,a)$ be the set of all $\SX$-trees $t$ such that $b\in \alpha(x)$ for every $(x,b)$ in $\lr(\calA,t,a)$. The tree language \emph{recognized} by $\bfA$ is $T(\bfA) := T(\bfA,a_0)$. A $\SX$-tree language $T$ is said to be \emph{DT-recognizable} if $T = T(\bfA)$ for some DT $\SX$-recognizer $\bfA$. Let $DRec = \{DRec(\Si,X)\}$ be the family of DT-recognizable tree languages.

The tree language recognized by  $\bfA$ may be defined also in terms of paths:
$
T(\bfA) = \{t \in \SXt \mid  a_0w^\calA \in \alpha(x) \text{ for every } wx \in \delta(t)\}.
$

It is well known that $DRec$ is properly included in the family $Rec$ of all recognizable tree languages.
As shown in \cite{MaMo69},  a regular tree language is DT-recognizable if and only if it is path closed.
This implies that some very simple tree languages, like the set $\{f(x,y),f(y,x)\}$ considered in Example \ref{ex:Paths}, are not DT-recognizable. For more on
DT-recognizable tree languages cf. \cite{Cou78,GeSt84,GeSt97,Jur92,Vir80} and especially \cite{Jur95}.

\section{$\calL$-fuzzy tree languages}\label{se:L-fuzzy tree languages}

We shall consider fuzzy tree languages with membership degrees in a bounded lattice $\calL = (L,\leq)$. Thus they are \emph{$\calL$-fuzzy sets} in the sense of Goguen \cite{Gog67}. However, here we usually need neither distributivity nor completeness because in our fuzzy DT tree automata the degrees of acceptance are computed using the meet-operation $\land$ only, and the subsets of $L$ associated with these automata are always finite.  When distributivity or completeness is needed, this will be explicitly noted. In Section \ref{se:Path closure} the lattice is  a bounded chain $\calC$.
The classical fuzzy sets of Zadeh \cite{Zad65} are obtained when $\calC$ is the real interval $[0,1]$ with the usual $\le$-relation. It is well known that chains are distributive and that distributive lattices are \emph{locally finite} (cf. \cite{Gra71}, for example.)

\medskip

\noindent \textbf{Convention.} In what follows, unless stated otherwise, $\calL = (L,\leq)$ is a nontrivial bounded lattice. Its least and greatest element are denoted by 0 and 1, respectively.
In contexts involving constructions or decidability, it is assumed that all needed  meets and joins in $\calL$ can be effectively formed.

\medskip

An \emph{$\calL$-fuzzy $\SX$-tree language} is any map $\Phi : \SXt \ra  L$.
The \emph{support} of $\Phi$ is the $\SX$-tree language
$\supp(\Phi) := \{t\in \SXt \mid \Phi(t) > 0\}$.
If the support is finite, say $\supp(\Phi) = \{t_1,\ldots,t_n\}$, we may write
$
\Phi = \{t_1/\Phi(t_1),\ldots,t_n/\Phi(t_n)\}
$.
The \emph{range} of $\Phi$ is the set $\ran(\Phi) := \{\Phi(t) \mid t \in \SXt\}$.  If $\ran(\Phi) \se \{0,1\}$, then $\Phi$ is said to be \emph{crisp}. For any $c\in L$, let $\widetilde{c}$ be the constant $\calL$-fuzzy function $\SXt \ra  L, t \mapsto c$.

With any $\SX$-tree language $T$, we associate the crisp $\calL$-fuzzy $\SX$-tree language  $T^\chi$, the \emph{characteristic function} of $T$, defined by $T^\chi(t) = 1$ for $t\in T$, and $T^\chi(t) = 0$ for $t\notin T$.
The following facts are obvious.

\begin{remark}\label{re:supp and char} If $\Phi : \SXt \ra  L$ is crisp, then $\supp(\Phi)^\chi = \Phi$, and $\supp(T^\chi) = T$ for every $T \se \SXt$.
\end{remark}

All the usual operations on fuzzy sets  apply to $\calL$-fuzzy $\SX$-tree languages, too. In particular, the \emph{union} $\Phi \cup \Psi$ and the \emph{intersection} $\Phi \cap \Psi$ of two $\calL$-fuzzy $\SX$-tree languages $\Phi$ and $\Psi$ are defined by
$(\Phi \cup \Psi)(t) = \Phi(t) \vee \Psi(t)$ and $(\Phi \cap \Psi)(t) = \Phi(t) \wedge \Psi(t) \: (t\in \SXt)$.
The set $L^{\SXt}$ of all $\calL$-fuzzy $\SX$-tree languages forms with respect to the inclusion relation defined by
\[
\Phi \se \Psi \: \LLRa \: (\forall t \in \SXt)(\Phi(t) \leq \Psi(t))
\]
a lattice in which $\Phi \vee \Psi = \Phi \cup \Psi$ and $\Phi \wedge \Psi = \Phi \cap \Psi$.

A \emph{family of $\calL$-fuzzy tree languages} is a map $\calF$ that assigns to each pair $\Si, X$ a set $\calF(\Si,X)$ of $\calL$-fuzzy $\SX$-tree languages. Again, we write $\calF = \{\calF(\Si,X)\}$ and define inclusion, unions and intersections by the natural alphabetwise conditions.

It seems that top-down fuzzy tree recognizers have not received any attention in the literature while bottom-up fuzzy tree recognizers appear in various forms (cf. \cite{InFu75,EsLi07, BoBo10}, for example).
The recognizers to be defined below become essentially the $\calL$-fuzzy bottom-up $\SX$-recognizers of \'Esik and Liu \cite{EsLi07} when we exchange the initial and final states, and interpret the top-down transition relations as bottom-up transition relations.

In a \emph{general $\calL$-NDT $\SX$-recognizer} $\recNG$, $A$ is a finite nonempty set of \emph{states}, $\gamma = (\gamma_f)_{f\in \Si}$ is a family of $\calL$-fuzzy \emph{transition relations} $\gamma_f : A^{m+1} \ra  L$ \; ($m \in \rmr(\Si), f\in \Si_m$), $\iota : A \ra  L$ is an $\calL$-fuzzy set of \emph{initial states}, and $\om = (\om_x)_{x\in X}$ is a family of $\calL$-fuzzy sets of \emph{final states} $\om_x : A \ra  L$.

For each $a\in A$, we define $\Phi_{\bfN\bfG,a} : \SXt \ra  L$ as follows. For any  $x\in X$, let  $\Phi_{\bfN\bfG,a}(x) = \om_x(a)$, and for $t = f(t_1,\ldots,t_m)$, let
\begin{align*}
\Phi_{\bfN\bfG,a}(t) = \bigvee\{\gamma_f(a,a_1,\ldots,a_m) \land \Phi_{\bfN\bfG,a_1}(t_1)&\land\ldots\land\Phi_{\bfN\bfG,a_m}(t_m) \mid \\  &a_1,\ldots,a_m \in A\}.
\end{align*}

The $\calL$-fuzzy tree language \emph{recognized} by $\mathbf{NG}$ is given by
\[
\Phi_{\NG}(t) = \bigvee\{\iota(a) \land \Phi_{\bfN\bfG,a}(t) \mid a \in A\} \quad (t\in \SXt).
\]
An $\calL$-fuzzy $\SX$-tree language $\Phi$ is said to be \emph{recognizable}, or \emph{regular}, if $\Phi = \Phi_{\bfN\bfG}$ for some $\calL$-NDT $\SX$-recognizer $\bfN\bfG$. Let $Rec_\calL = \{Rec_\calL(\Si,X)\}$ be the family of recognizable $\calL$-fuzzy tree languages.

For the rest of this section $\calL$ is assumed to be  distributive and, therefore, locally finite. This implies that for any general $\calL$-NDT $\SX$-recognizer $\recNG$, the sublattice of $\calL$ generated by the set
\begin{align*}
K_{\NG} := \{\gamma_f(a,a_1,\ldots&,a_m) \mid m\in \rSi, f\in \Si_m, a,a_1,\ldots,a_m\in A\}\\
&\cup \{\om_x(a) \mid x\in X, a\in A\}
\end{align*}
of the elements of $\calL$ appearing in the definition of $\NG$ is finite, and hence $\ran(\Phi_\NG)$ is finite.
Our fuzzy NDT tree automata can then be simplified by eliminating fuzziness from the initial states and the transitions. A similar fact for fuzzy finite string automata was shown in \cite{MaSSY95} and in \cite{Bel02}   (cf. also \cite{MoMa02,LiPe05}).

We define  an \emph{$\calL$-NDT $\SX$-recognizer} as a system $\recNF$, where $\nalgA$ is a finite NDT $\Si$-algebra, $I \se A$ is the set of \emph{initial states}, and $\om = (\om_x)_{x\in X}$ is a family of  $\calL$-fuzzy sets of \emph{final states} $\om_x : A \ra  L$.
For each $a\in A$, we define $\Phi_{\NF,a} : \SXt \ra  L$ as follows:
\begin{itemize}
  \item[(1)] $\Phi_{\NF,a}(x) = \om_x(a)$ for $x\in X$, and
  \item[(2)] $\Phi_{\NF,a}(t) = \bigvee\{ \Phi_{\NF,a_1}(t_1) \land \ldots\land\Phi_{\NF,a_m}(t_m) \mid (a_1,\ldots,a_m) \in f_\frA(a)\}$ for $t = f(t_1,\ldots,t_m)$.
\end{itemize}
For any $H \se A$, let $\Phi_{\NF,H} := \bigcup\{\Phi_{\NF,a} \mid a \in H\}$.
The $\calL$-fuzzy $\SX$-tree language \emph{recognized} by $\NF$ is $\Phi_{\NF} := \Phi_{\NF,I}$.

Two $\calL$-fuzzy tree recognizers $\bfF$ and $\bfG$ (of any kind) are \emph{equivalent} if they define the same $\calL$-fuzzy tree language. This is expressed by writing $\bfF \equiv \bfG$.
It is clear that any $\calL$-NDT $\SX$-recognizer can be redefined as an equivalent general $\calL$-NDT $\SX$-recognizer. We shall now prove the converse.

\begin{proposition}\label{pr:NDT equiv to general NDT}
Let $\calL$ be distributive. For any general $\calL$-NDT $\SX$-recognizer there is an equivalent $\calL$-NDT $\SX$recognizer.
\end{proposition}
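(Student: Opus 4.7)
The plan is to absorb the fuzzy data appearing in the initial states and in the transition relations of $\NG$ into enlarged, crisp states, leaving only the final state assignment fuzzy. The intuition is that each new state will carry, in addition to an old state of $\NG$, the running meet of all the weights $\iota(a)$ and $\gamma_f(a,a_1,\ldots,a_m)$ that have been encountered on the descent so far. Concretely, let $L_0 \se L$ be the meet-subsemilattice of $\calL$ generated by the finite set
\[
K_\NG \cup \{0,1\} \,=\, \{\iota(a) : a\in A\} \cup \{\gamma_f(a,a_1,\ldots,a_m) : \ldots\} \cup \{\om_x(a) : \ldots\} \cup \{0,1\}.
\]
Since $\wedge$ is idempotent, commutative and associative, $L_0$ is finite whenever its generating set is.

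Define $\recNF$ by taking state set $A' := A \times L_0$, crisp NDT transitions
\[
f_{\frA'}((a,c)) \,:=\, \{\,((a_1,c_f), \ldots, (a_m,c_f)) \mid a_1,\ldots,a_m \in A\,\},\quad c_f := c \wedge \gamma_f(a,a_1,\ldots,a_m),
\]
initial set $I' := \{(a,\iota(a)) \mid a \in A\} \se A'$, and fuzzy final assignment $\om'_x((a,c)) := c \wedge \om_x(a)$. The design is such that being in state $(a,c)$ at some node means that the run tracks state $a$ of $\NG$ and an accumulated weight $c$.

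The heart of the argument is the structural induction
\[
\Phi_{\NF,(a,c)}(t) \,=\, c \wedge \Phi_{\NG,a}(t) \qquad (a\in A,\; c\in L_0,\; t\in \SXt).
\]
The base case $t = x \in X$ is immediate from the definition of $\om'_x$. For $t = f(t_1,\ldots,t_m)$, the induction hypothesis rewrites each $\Phi_{\NF,(a_i,c_f)}(t_i)$ as $c_f \wedge \Phi_{\NG,a_i}(t_i)$; one then factors $c$ and $\gamma_f(a,a_1,\ldots,a_m)$ out of the inner meet and, crucially, pulls $c$ out of the outer join $\bigvee_{a_1,\ldots,a_m\in A}$. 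This last step is the only place where distributivity of $\calL$ is used, and it is the main technical point of the proof; without it one cannot commute $c \wedge (-)$ past the join and the claim fails.

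Finally, summing over initial states gives
\[
\Phi_{\NF}(t) \,=\, \bigvee_{a\in A} \Phi_{\NF,(a,\iota(a))}(t) \,=\, \bigvee_{a\in A}\bigl(\iota(a)\wedge \Phi_{\NG,a}(t)\bigr) \,=\, \Phi_{\NG}(t),
\]
so $\NF \equiv \NG$, as required. Finiteness of $A'$ is guaranteed by finiteness of $L_0$; distributivity of $\calL$ is invoked only in the inductive step above, which is exactly where the construction earns its keep.
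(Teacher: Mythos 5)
Your construction is correct and is essentially the paper's own proof: the same product state set $A\times L_0$ carrying an accumulated meet, the same crisp transitions $((a_1,c\wedge\gamma_f(\ldots)),\ldots)$, the same initial set $\{(a,\iota(a))\}$ and final assignment $c\wedge\om_x(a)$, the same inductive claim $\Phi_{\NF,(a,c)}(t)=c\wedge\Phi_{\NG,a}(t)$, and the same single use of distributivity to pull $c$ past the outer join. The only (harmless, in fact slightly tidier) differences are that you generate a meet-subsemilattice rather than a sublattice and explicitly include the values $\iota(a)$ among the generators so that the initial states lie in the new state set.
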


\begin{proof}
Consider any general $\calL$-NDT $\SX$-recognizer $\recNG$. Let $D$ be the finite sublattice of $\calL$ generated by $K_\NG$.
We  define an $\calL$-NDT $\SX$-recognizer $\NF = (\frB,I,\pi)$  as follows:
\begin{enumerate}
  \item $\frB = (A \times D,\Si)$ is the NDT $\Si$-algebra in which, for all $m\in \rSi$, $f\in \Si_m$, and $(a,d)\in A\times D$,
\begin{align*}
f_\frB((a,d)) = \{((a_1,d\land c),\ldots,(a_m,d\land c)) \mid\ &a_1,\ldots,a_m \in A,\\
&c = \gamma_f(a,a_1,\ldots,a_m)\}.
\end{align*}
  \item $I = \{(a,\iota(a)) \mid a \in A\}$.
  \item $\pi_x((a,d)) = \om_x(a)\land d$ for all $x\in X$ and $(a,d) \in A \times D$.
\end{enumerate}

First we prove by tree induction that $\Phi_{\NF,(a,d)}(t) = d \land \Phi_{\NG,a}(t)$ for all $t\in \SXt$ and $(a,d) \in A\times D$.

For  $x\in X$, $\Phi_{\NF,(a,d)}(x) = \pi_x((a,d)) = \om_x(a)\land d = d\land \Phi_{\NG,a}(x)$.
If $t = f(t_1,\ldots,t_m)$, then
\begin{align*}
\Phi_{\NF,(a,d)}(t) &= \bigvee\{\Phi_{\NF,(a_1,d\land c)}(t_1) \land \ldots \land \Phi_{\NF,(a_m,d\land c)}(t_m) \mid a_1,\ldots,a_m\in A\}\\
&= \bigvee\{(d\land c) \land \Phi_{\NG,a_1}(t_1)\land\ldots \land \Phi_{\NG,a_m}(t_m) \mid a_1,\ldots,a_m\in A\}\\
&= d \land \bigvee\{c\land \Phi_{\NG,a_1}(t_1) \land \ldots \land \Phi_{\NG,a_m}(t_m) \mid a_1,\ldots,a_m\in A\}\\
&= d\land \Phi_{\NG,a}(t),
\end{align*}
where $c$ always stands for the current $\gamma_f(a,a_1,\ldots,a_m)$.
We may now conclude that
\[
\Phi_{\NF}(t) = \bigvee\{\Phi_{\NF,(a,\iota(a))}(t) \mid a \in A\} = \bigvee\{\iota(a) \land \Phi_{\NG,a}(t) \mid a \in A\}
 = \Phi_{\NG}(t)
\]
for every $t\in \SXt$.
\end{proof}

It is common to identify a crisp fuzzy language with its support, and here this  is justified by Remark \ref{re:supp and char}. It is easy to see that $\Phi \in Rec_\calL(\Si,X)$ implies $\supp(\Phi) \in Rec(\Si,X)$, and that $T \in Rec(\Si,X)$ implies  $T^\chi \in Rec_\calL(\Si,X)$. Hence Remark \ref{re:supp and char} yields also the following facts.

\begin{lemma}\label{le:crisp rec and rec} Let $\calL$ be distributive. If $\Phi : \SXt \ra  L$ is crisp, then $\Phi \in Rec_\calL(\Si,X)$ if and only if $\supp(\Phi) \in Rec(\Si,X)$. On the other hand, a $\SX$-tree language $T$ is recognizable if and only if $T^\chi \in Rec_\calL(\Si,X)$.
\end{lemma}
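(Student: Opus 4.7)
The plan is to reduce both equivalences in the lemma to the two forward implications announced immediately above it: (a) $\Phi \in Rec_\calL(\Si,X) \Rightarrow \supp(\Phi) \in Rec(\Si,X)$ (for any $\Phi$, crisp or not), and (b) $T \in Rec(\Si,X) \Rightarrow T^\chi \in Rec_\calL(\Si,X)$. Once (a) and (b) are in hand, Remark~\ref{re:supp and char} closes each loop: if $\Phi$ is crisp and $\supp(\Phi)\in Rec(\Si,X)$, then by (b) we have $\supp(\Phi)^\chi\in Rec_\calL(\Si,X)$, and the remark identifies this with $\Phi$; and if $T^\chi\in Rec_\calL(\Si,X)$, then $T^\chi$ being crisp allows (a) to give $\supp(T^\chi)\in Rec(\Si,X)$, again identified with $T$ by the remark.

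For (b) the construction is transparent. Starting from an NDT recognizer $\NA=(\frA,I,\alpha)$ for $T$, I would define $\NF=(\frA,I,\om)$ by setting $\om_x(a)=1$ when $a\in\alpha(x)$ and $\om_x(a)=0$ otherwise. Since only the values $0$ and $1$ occur, the meets and joins in the definition of $\Phi_{\NF,a}$ reduce to Boolean conjunction and disjunction, and a routine induction on $t$ yields $\Phi_{\NF,a}=T(\NA,a)^\chi$. Joining over $a\in I$ gives $\Phi_\NF=T^\chi$.

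For (a) the naive attempt $\alpha(x):=\{a\mid \om_x(a)>0\}$ fails, since in a distributive lattice the meet of positive elements need not be positive, so I would argue as follows. The sublattice $D$ of $\calL$ generated by $\{\om_x(a)\mid x\in X,\, a\in A\}\cup\{0,1\}$ is finite (distributivity plus local finiteness) and distributive, so Birkhoff's representation embeds $D$ into a power-set lattice $2^S$ with $S$ finite, preserving $0$, $\land$ and $\lor$. For each $s\in S$ I would build the crisp NDT recognizer $\NA_s=(\frA,I,\alpha_s)$ with $\alpha_s(x):=\{a\mid s\in\om_x(a)\}$; a tree induction, using that $s$ belongs to a meet (respectively join) iff it belongs to every factor (respectively some summand), then gives $t\in T(\NA_s,a)\Leftrightarrow s\in\Phi_{\NF,a}(t)$. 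Hence $T(\NA_s)=\{t\mid s\in\Phi_\NF(t)\}$, and $\supp(\Phi)=\bigcup_{s\in S}T(\NA_s)$ is a finite union of regular tree languages, hence regular. The one genuine technical point is exactly this step (a), where distributivity is needed essentially to decouple the meets and joins via Birkhoff; the rest is purely formal.
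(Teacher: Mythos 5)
Your proof is correct, and its overall skeleton is exactly the paper's: the authors give no displayed proof of this lemma, but justify it by the sentence immediately preceding it, which asserts the two forward implications (a) and (b) as ``easy to see'' and then invokes Remark~\ref{re:supp and char} to close both equivalences, precisely as you do. Where you go beyond the paper is in actually proving implication (a), and your caution there is well placed: the naive assignment $\alpha(x)=\{a\mid\om_x(a)>0\}$ really does fail in general, since a meet of nonzero elements of a distributive lattice can be $0$ --- the paper itself makes this point later, in Proposition~\ref{pr:supp of FDT-language} and Example~\ref{ex:support not DT-recognizable}, where the analogous support construction for DT recognizers is shown to require $0$ to be $\land$-irreducible. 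Your repair via the Birkhoff embedding of the finite generated sublattice $D$ into a powerset lattice is sound: the map $d\mapsto\{s\in S\mid s\le d\}$ (with $S$ the join-irreducibles of $D$) sends $0$ to $\es$ and turns $\land$ and $\lor$ into intersection and union, so the componentwise tree induction gives $T(\NA_s,a)=\{t\mid s\le\Phi_{\NF,a}(t)\}$ and hence $\supp(\Phi)=\bigcup_{s\in S}T(\NA_s)$, a finite union of regular tree languages. (An equivalent alternative, closer in spirit to the paper's Theorem~\ref{th:Phi^-1(d) is rec}, would be to note that $\ran(\Phi)\se D$ is finite and that each level set $\Phi^{-1}(d)$ is regular, e.g.\ via a deterministic bottom-up automaton whose states are the profiles $a\mapsto\Phi_{\NF,a}(t)$; your route buys the same conclusion with only the two lattice identities for join-irreducibles.) Part (b) and the bookkeeping with Remark~\ref{re:supp and char} are exactly as intended.
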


For the following lemma it would actually suffice to assume that $\calL$ is locally finite.

\begin{lemma}\label{le:Weak Pumping for NDT recognizers} Assume that $\calL$ is distributive. Let $\NF$ be an $n$-state $\calL$-NDT $\SX$-recognizer, and let $h := |\ran(\Phi_{\NF})|$. For any $\SX$-tree $t$,  there exists a tree $u\in \SXt$ such that $\hg(u) \leq h^n$ and $\Phi_\NF(u) = \Phi_\NF(t)$.
\end{lemma}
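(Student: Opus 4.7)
The plan is a pumping argument driven by the signature map $\sigma : \SXt \ra L^A$ defined by $\sigma(t) := (\Phi_{\NF,a}(t))_{a\in A}$. First I would observe that, by the recursive definition of $\Phi_{\NF,a}$ at a tree $f(t_1,\ldots,t_m)$, the signature $\sigma(f(t_1,\ldots,t_m))$ is determined entirely by $f$ and by $\sigma(t_1),\ldots,\sigma(t_m)$. Consequently, if a subtree $s$ of a tree $t$ is replaced by any $s'$ with $\sigma(s') = \sigma(s)$, then the resulting tree $t'$ satisfies $\sigma(t') = \sigma(t)$, and in particular $\Phi_\NF(t') = \bigvee_{a\in I} \sigma(t)_a = \Phi_\NF(t)$.

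The second ingredient is a bound on the number of signatures that can actually occur. Local finiteness of $\calL$, which follows from distributivity, implies that the sublattice $D$ generated by the finitely many values $\om_x(a)$ ($x\in X$, $a\in A$) is finite, and a routine induction shows that $\Phi_{\NF,a}(s) \in D$ for every $s$ and $a$. To reach the stated bound $h^n$ one needs the finer fact that each coordinate of $\sigma$ takes at most $h$ values; I would aim to derive this by arguing that $\Phi_{\NF,a}(s) \in \ran(\Phi_\NF)$ for every state $a$ and tree $s$, e.g.\ by showing that for each $a$ one can construct a context or an auxiliary initial state so that runs from $a$ are realised as runs from some $a_0 \in I$, thereby squeezing $\sigma(\SXt)$ into $\ran(\Phi_\NF)^A$.

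With the bound $|\sigma(\SXt)| \leq h^n$ in hand the pumping step is routine. Given any $t$ with $\hg(t) > h^n$, I would take a root-to-leaf path of maximal length; it passes through $\hg(t) + 1 > h^n + 1$ nested subtrees, so by pigeonhole two of them $s_1 \supsetneq s_2$ satisfy $\sigma(s_1) = \sigma(s_2)$. Substituting $s_2$ for $s_1$ inside $t$ produces a strictly shorter tree $t'$ with $\Phi_\NF(t') = \Phi_\NF(t)$, and iterating this reduction terminates at a $u$ with $\hg(u) \leq h^n$.

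The main obstacle is the coordinatewise bound on $\sigma$: local finiteness alone yields only $|D|^n$, which may vastly exceed $h^n$, so the essential combinatorial content of the lemma is the argument that pushes this down to $h^n$. The remainder — the recursive nature of $\sigma$, the substitution invariance, and the pigeonhole along a longest path — is standard bookkeeping.
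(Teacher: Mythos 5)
Your argument is essentially the paper's own proof: the paper defines the equivalence $t \equiv u \LRa (\forall a\in A)\,\Phi_{\NF,a}(t)=\Phi_{\NF,a}(u)$, which is exactly your signature map $\sigma$, proves by induction on the surrounding context that replacing a subtree by an equivalent one preserves every $\Phi_{\NF,a}$ (your substitution invariance), and pigeonholes along a maximum-length path, iterating until the height drops below the index of $\equiv$. So the decomposition, the key invariant and the reduction step all coincide.

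The one point you flag as an obstacle --- bringing the count of signature classes down from $|D|^n$ to $h^n$ --- is a genuine issue, but you should know the paper does not resolve it either: it simply asserts ``the number of $\equiv$-classes is at most $h^n$'' with no justification. Your proposed repair, showing $\Phi_{\NF,a}(s)\in\ran(\Phi_\NF)$ for every state $a$ and tree $s$, will not work in general: a non-initial (or unreachable) state may have a richer range than $\Phi_\NF$, and even for a reachable $a$ the value $\Phi_{\NF,a}(s)$ enters $\Phi_\NF(p\cdot s)$ only after being met with other values and joined over nondeterministic choices, so it need not itself be attained by $\Phi_\NF$. What your argument (and the paper's) actually delivers is the bound $|D|^n$, where $D$ is the finite sublattice of $\calL$ generated by $\{\om_x(a)\mid x\in X,\,a\in A\}$; a routine induction shows every $\Phi_{\NF,a}(s)$ lies in $D$. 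Since $D$ is effectively computable and every later use of the lemma (e.g.\ Proposition \ref{pr:NDTequivNDT decidable}) needs only \emph{some} computable height bound, this weaker constant is harmless. In short: same proof as the paper, and the step you honestly could not justify is an unjustified assertion in the paper as well; stating the bound as $|D|^n$ makes both arguments complete.
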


\begin{proof}
Let us define an equivalence relation $\equiv$ on $\SXt$ by the condition
\[
    t \equiv u \;\LRa\; (\forall a \in A)  \Phi_{\NF,a}(t) = \Phi_{\NF,a}(u) \qquad (t, u \in \SXt).
\]
The number of  $\equiv$\,-\,classes is at most $h^n$, and hence  $\equiv$ is of finite index.

Consider any $\SX$-tree $t$. If $\hg(t) \leq h^n$, we may let $u := t$, so assume that $\hg(t) \ge h^n+1$. Following some maximum length path in $t$, we can find $p$, $q \in \SXc$ and $s \in \SXt$ such that $t = p \cdot q \cdot s$, $\dt(q) \ge 1$ and $q \cdot s \equiv s$. We shall show by induction on $p$ that $\Phi_{\NF,a}(p \cdot q \cdot s) = \Phi_{\NF,a}(p \cdot s)$ for any $a \in A$.

If $p = \xi$, then $\Phi_{\NF,a}(p \cdot q \cdot s) = \Phi_{\NF,a}(q \cdot s) = \Phi_{\NF,a}(s) = \Phi_{\NF,a}(p \cdot s)$ since $q \cdot s \equiv s$.

For $p = f(u_1, \dots, u_m) \cdot r$ , where $u_i = \xi$ and $r \in \SXc$,
\begin{align*}
    \Phi_{\NF,a}(p \cdot q \cdot s) &= \Phi_{\NF,a}(f(u_1, \dots, u_m) \cdot r \cdot q \cdot s) \\
    &= \bigvee\{ \Phi_{\NF,a_1}(u_1) \land \dots \land \Phi_{\NF,a_i}(r \cdot q \cdot s) \land \dots \land \Phi_{\NF,a_m}(u_m)
        \mid\\ &\hskip50mm (a_1,\dots,a_m) \in f_\frA(a)\} \\
    &= \bigvee\{ \Phi_{\NF,a_1}(u_1) \land \dots \land \Phi_{\NF,a_i}(r \cdot s) \land \dots \land \Phi_{\NF,a_m}(u_m)
         \mid\\ &\hskip50mm(a_1,\dots,a_m) \in f_\frA(a)\} \\
    &= \Phi_{\NF,a}(f(u_1, \dots, u_m) \cdot r \cdot s).
\end{align*}
Thus
$\Phi_{\NF}(t) = \bigvee\{\Phi_{\NF,a}(p \cdot q \cdot s) \mid a \in I\}
    = \bigvee\{\Phi_{\NF,a}(p \cdot s) \mid a \in I\} = \Phi_{\NF}(p \cdot s)$. By repeating this procedure finitely many times, we will get a tree $u\in \SXt$ such that $\hg(u) \leq h^n$ and $\Phi_\NF(u) = \Phi_\NF(t)$.
\end{proof}

The {\em parallel product} of two $\calL$-NDT $\Sigma X$-recognizers $\recNF$ and $\NG = (\frB,J,\pi)$  is the $(\calL \times \calL)$-NDT $\Sigma X$-recognizer  $\NH = (\frC,I \times J,\sigma)$, where  the NDT $\Si$-algebra $\frC = (A \times B,\Si)$ is such that for all $m \in \rSi$, $f \in \Si_m$, $a \in A$, $b \in B$, $f_{\frC}((a,b)) =$
\[
\{ ((a_1,b_1), \dots, (a_m,b_m)) \mid (a_1, \dots, a_m) \in f_{\frA}(a), (b_1, \dots, b_m) \in f_{\frB}(b) \},
\]
and $\si$ is defined by $\si_x((a,b)) = (\om_x(a),\pi_x(b))$ ($x \in X$, $(a,b) \in A \times B$).

\begin{lemma}\label{le:PhiNH=(PhiNF,PhiNG)} $\Phi_{\NH}(t) = \bigl(\Phi_{\NF}(t), \Phi_{\NG}(t) \bigr)$ for every $t \in \SXt$.
\end{lemma}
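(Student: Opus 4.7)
The plan is to prove the stronger, per-state identity $\Phi_{\NH,(a,b)}(t) = (\Phi_{\NF,a}(t), \Phi_{\NG,b}(t))$ for every $(a,b) \in A \times B$ by structural induction on $t$, and then derive the lemma by taking joins over the initial-state set $I \times J$. The key algebraic fact I will rely on throughout is that the product lattice $\calL \times \calL$ has componentwise meet and join, so that $(\alpha_1,\beta_1) \wedge (\alpha_2,\beta_2) = (\alpha_1\wedge\alpha_2,\beta_1\wedge\beta_2)$ and analogous equalities hold for arbitrary joins.

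The base case $t = x \in X$ is immediate from the definition of $\sigma$: $\Phi_{\NH,(a,b)}(x) = \sigma_x((a,b)) = (\omega_x(a),\pi_x(b)) = (\Phi_{\NF,a}(x),\Phi_{\NG,b}(x))$. For the inductive step with $t = f(t_1,\ldots,t_m)$, I would unfold the definition of $\Phi_{\NH,(a,b)}(t)$, substitute the induction hypothesis into each $\Phi_{\NH,(a_i,b_i)}(t_i)$, and push the componentwise meet outside to obtain pairs whose coordinates depend separately on the $a_i$'s and on the $b_i$'s. By the definition of $f_\frC$, the set $f_\frC((a,b))$ is the Cartesian product $f_\frA(a) \times f_\frB(b)$ threaded through tuple positions, so the join over this product decouples by coordinate into $\bigvee_{(a_1,\ldots,a_m)\in f_\frA(a)} \bigwedge_i \Phi_{\NF,a_i}(t_i)$ in the first coordinate and the analogous expression in the second, which are precisely $\Phi_{\NF,a}(t)$ and $\Phi_{\NG,b}(t)$.

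Once the per-state identity is established, the lemma follows by taking joins over $I \times J$ and again applying the componentwise structure of $\calL \times \calL$ to obtain $\Phi_{\NH}(t) = (\bigvee_{a\in I}\Phi_{\NF,a}(t),\bigvee_{b\in J}\Phi_{\NG,b}(t)) = (\Phi_{\NF}(t),\Phi_{\NG}(t))$. I expect the main bookkeeping hurdle to be the careful rearrangement in the inductive step: rewriting the single join over $f_\frC((a,b))$ as an iterated join over $f_\frA(a)$ and $f_\frB(b)$, and then, since the first-coordinate summand depends only on the $a_i$'s and the second only on the $b_i$'s, collapsing the ``dummy'' inner join on each side. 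This manipulation is purely lattice-theoretic, relying only on the componentwise operations of $\calL \times \calL$ and not on distributivity of $\calL$ itself, which is consistent with the paper's standing convention.
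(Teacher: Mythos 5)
Your proposal follows essentially the same route as the paper: establish the per-state identity $\Phi_{\NH,(a,b)}(t) = (\Phi_{\bfN\bfF,a}(t), \Phi_{\bfN\bfG,b}(t))$ by tree induction using the componentwise lattice operations of $\calL \times \calL$, and then take joins over the initial-state set $I \times J$. The only difference is one of detail: the paper merely asserts the inductive step, whereas you spell out the decoupling of the join over $f_{\frC}((a,b))$ into separate joins over $f_{\frA}(a)$ and $f_{\frB}(b)$, which is a correct elaboration of the same argument.
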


\begin{proof} That $\Phi_{\NH,(a,b)}(t) = (\Phi_{\NF,a}(t), \Phi_{\NG,b}(t))$ for any $t \in \SXt$, $a\in A$, and $b\in B$,
can be verified by tree induction. Hence, for any $H\se A$, $K\se B$ and $t \in \SXt$,
\begin{align*}
    \Phi_{\NH,H \times K}(t) &= \bigvee\{\Phi_{\NH,(a,b)}(t) \mid (a,b) \in H \times K \}\\
    &= \bigvee\{ \bigl(\Phi_{\NF,a}(t), \Phi_{\NG,b}(t)\bigr) \mid a \in H, b \in K \} \\
    &= \bigl( \bigvee\{ \Phi_{\NF,a}(t) \mid a \in H \}, \bigvee\{ \Phi_{\NG,b}(t) \mid b \in K \} \bigr)\\
    &= \bigl(\Phi_{\NF,H}(t), \Phi_{\NG,K}(t) \bigr).
\end{align*}
Thus,
$
    \Phi_{\NH}(t) = \Phi_{\NH,I \times J}(t) = \bigl(\Phi_{\NF,I}(t), \Phi_{\NG,J}(t) \bigr)
    = \bigl(\Phi_{\NF}(t), \Phi_{\NG}(t) \bigr)
$
for all $t \in \SXt$.
\end{proof}

\begin{proposition}\label{pr:NDTequivNDT decidable} Let $\calL$ be distributive. The \rm{Equivalence Problem} ``$\NF \equiv \NG$?'' of  $\calL$-NDT $\Si X$-recognizers is decidable.
\end{proposition}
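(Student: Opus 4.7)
The plan is to reduce equivalence of $\NF$ and $\NG$ to checking agreement on a finite, computable set of test trees, using the parallel product construction together with the Pumping Lemma.

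First I would form the parallel product $\NH = (\frC, I \times J, \sigma)$ of $\NF$ and $\NG$, which by construction is an $(\calL \times \calL)$-NDT $\SX$-recognizer. Since $\calL$ is distributive, the product lattice $\calL \times \calL$ is also a (nontrivial bounded) distributive lattice in which meets and joins are computed componentwise and are therefore still effective. By Lemma \ref{le:PhiNH=(PhiNF,PhiNG)}, $\Phi_\NH(t) = \bigl(\Phi_\NF(t), \Phi_\NG(t)\bigr)$ for every $t \in \SXt$. Consequently $\NF \equiv \NG$ if and only if, for every $t \in \SXt$, the two coordinates of $\Phi_\NH(t)$ are equal.

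Next I would apply Lemma \ref{le:Weak Pumping for NDT recognizers} to $\NH$ over the lattice $\calL \times \calL$. Letting $n := |A||B|$ be the number of states of $\frC$ and $h := |\ran(\Phi_\NH)|$, the lemma yields that every value of $\Phi_\NH$ is attained by some $\SX$-tree of height at most $h^n$. In particular, if $\Phi_\NF$ and $\Phi_\NG$ differ on some tree $t$, they already differ on some tree of height at most $h^n$. Thus the decision procedure is: enumerate the (finite) set $S$ of $\SX$-trees of height $\leq h^n$, compute $\Phi_\NF(s)$ and $\Phi_\NG(s)$ for each $s \in S$ using the recursive definition (finitely many meets and joins at each node), and answer ``equivalent'' iff they agree throughout.

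The step that requires attention is producing an effective bound on $h$. Here one uses the local finiteness of distributive lattices: the sublattice $D$ of $\calL \times \calL$ generated by the finite set $K_\NH$ of lattice elements appearing in the data of $\NH$ (the values $\pi_x((a,b)) = (\om_x(a), \pi_x(b))$ for $x\in X$, $(a,b) \in A\times B$, together with the element $1$ for the crisp transitions) is finite. Because meets and joins can be formed effectively by assumption, $D$ can be computed by closing $K_\NH$ under $\land$ and $\lor$ until no new element is produced, and then $h \leq |D|$ is an explicit, computable bound. This closes the argument, since all remaining steps — forming $\NH$, enumerating trees of bounded height, and evaluating $\Phi_\NF$ and $\Phi_\NG$ at each — are clearly effective.
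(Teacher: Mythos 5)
Your proposal is correct and follows essentially the same route as the paper's proof: form the parallel product $\NH$, use Lemma \ref{le:PhiNH=(PhiNF,PhiNG)} to reduce equivalence to equality of the two coordinates of $\Phi_\NH$, and invoke the pumping lemma (Lemma \ref{le:Weak Pumping for NDT recognizers}) for $\NH$ to restrict attention to finitely many trees of bounded height. Your additional remark that the bound $h^n$ is effectively computable because the sublattice generated by the finitely many values occurring in $\NH$ is finite (by local finiteness of distributive lattices) and can be computed by closing under $\land$ and $\lor$ is a worthwhile explicit justification of a point the paper leaves implicit.
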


\begin{proof} Of course, $\Phi_{\NF} \ne \Phi_{\NG}$ means that  there is a tree $t \in\SXt$ such that $\Phi_{\NF}(t) \ne \Phi_{\NG}(t)$. Let $\NH$ be the parallel product of $\NF$ and $\NG$. By the pumping lemma, there exists an integer $h$ such that for any $t$,  there is a tree $s \in \SXt$ such that $\Phi_{\NH}(t) = \Phi_{\NH}(s)$ and $\hg(s) < h$.  By Lemma~\ref{le:PhiNH=(PhiNF,PhiNG)}, $\Phi_{\NF}(t) \ne \Phi_{\NG}(t)$ if and only if $\Phi_{\NF}(s) \ne \Phi_{\NG}(s)$, and thus it suffices to check only finitely many trees $t$ for  $\Phi_{\NF}(t) \ne \Phi_{\NG}(t)$.
\end{proof}

\section{$\calL$-fuzzy DT tree recognizers}\label{se:L-fuzzy DT tree recognizers}

Let us define an \emph{$\calL$-fuzzy deterministic top-down} $\SX$\emph{-recognizer}, an \emph{$\calL$-DT $\SX$-recognizer} for short, as a system $\recF$, where $\algA$ is a finite DT $\Sigma$-algebra, $a_0\in A$ is the \emph{initial state},
and $\om = (\om_x)_{x\in X}$ is an $X$-indexed family of $\calL$-fuzzy sets of \emph{final states} $\om_x : A \ra  L$.
The $\calL$-fuzzy $\SX$-tree language $\Phi_{\bfF,a} : \SXt \ra  L$ recognized by $\bfA$ starting in a state $a\in A$ is defined as follows:
\begin{itemize}
  \item[(1)] $\Phi_{\bfF,a}(x) = \om_x(a)$ for any $x\in X$;
  \item[(2)] $\Phi_{\bfF,a}(t) = \Phi_{\bfF,a_1}(t_1)\wedge \ldots \wedge \Phi_{\bfF,a_m}(t_m)$, where $(a_1,\ldots,a_m) = f_\calA(a)$, for $t = f(t_1,\ldots,t_m)$.
\end{itemize}
The $\calL$-fuzzy tree language \emph{recognized} by $\bfF$ is  $\Phi_\bfF := \Phi_{\bfF,a_0}$.
An $\calL$-fuzzy $\SX$-tree language $\Phi$ is \emph{DT-recognizable} if $\Phi = \Phi_\bfF$ for some $\calL$-DT $\SX$-recognizer $\bfF$. Let  $DRec_\calL = \{DRec_\calL(\Si,X)\}$ be the family of DT-recognizable $\calL$-fuzzy tree languages.

Similarly as in the nondeterministic case, fuzzy state transitions would not increase the power of our $\calL$-DT tree recognizers, but since no suprema are formed, this can be proved without assuming that $\calL$ is distributive. On the other hand, we cannot allow $\calL$-fuzzy sets of initial states because $DRec_\calL$ is not closed under unions (as we shall see later).

From the definition of $\Phi_\bfF$ it is clear that for any $t\in \SXt$, the degree of acceptance $\Phi_\bfF(t)$ depends just on the states in which $\bfF$ reaches the leaves of $t$ and the labels of the leaves. Moreover, each leaf is reached by a computation of $\calA$ along some path in $t$. It is easy to prove by tree induction that
\[
\Phi_{\bfF,b}(t) = \bigwedge\{\om_x(a) \mid (x,a)\in\lr(\calA,t,b))\} = \bigwedge\{\om_x(bw^\calA) \mid wx \in \delta(t)\}
\]
for all $b\in A$ and $t\in \SXt$. For $b = a_0$ this yields the following lemma.

\begin{lemma}\label{le:Phi(t) from run tree} Let $\recF$ be an  $\calL$-DT $\SX$-recognizer. Then
\[
\Phi_\bfF(t) = \bigwedge\{\om_x(a) \mid (x,a)\in\lr(\calA,t,a_0)\} = \bigwedge\{\om_x(a_0w^\calA) \mid wx \in \delta(t)\}
\]
for every $\SX$-tree $t$.
\end{lemma}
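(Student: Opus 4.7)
The plan is to prove the stronger claim that for every state $b \in A$ and every tree $t \in \SXt$,
\[
\Phi_{\bfF,b}(t) \;=\; \bigwedge\{\om_x(a) \mid (x,a)\in\lr(\calA,t,b)\} \;=\; \bigwedge\{\om_x(bw^\calA) \mid wx \in \delta(t)\},
\]
and then specialize to $b = a_0$. Both equalities are established by a straightforward induction on the structure of $t$, proceeding in parallel.

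For the first equality, I would use the recursive definition of $\run$. In the base case $t = x \in X$, we have $\run(\calA,x,b) = (x,b)$, so $\lr(\calA,x,b) = \{(x,b)\}$ and the right-hand side equals $\om_x(b) = \Phi_{\bfF,b}(x)$. For the inductive step with $t = f(t_1,\ldots,t_m)$ and $f_\calA(b) = (b_1,\ldots,b_m)$, the definition of $\run$ gives $\lr(\calA,t,b) = \lr(\calA,t_1,b_1) \cup \cdots \cup \lr(\calA,t_m,b_m)$, so the meet distributes over the union of indices and the induction hypothesis yields $\bigwedge_{i=1}^{m} \Phi_{\bfF,b_i}(t_i)$, which equals $\Phi_{\bfF,b}(t)$ by the defining clause (2) of $\Phi_{\bfF,b}$.

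For the second equality, I would establish the set-theoretic identity
\[
\lr(\calA,t,b) \;=\; \{(x, bw^\calA) \mid wx \in \delta(t)\},
\]
again by induction on $t$. The base case $t = x$ uses $\delta(x) = \{x\}$ and $b\ve^\calA = b$. In the inductive step, $\delta(t) = f_1\delta(t_1) \cup \cdots \cup f_m\delta(t_m)$, so any path has the form $f_i u x$ with $ux \in \delta(t_i)$, and $b(f_i u)^\calA = \pr_i(f_\calA(b))u^\calA = b_i u^\calA$; combined with the inductive description of $\lr(\calA,t_i,b_i)$, the union over $i$ gives the claimed description of $\lr(\calA,t,b)$. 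Taking the meet of $\om_x$ over this set then gives the second equality.

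The argument is essentially bookkeeping, and no step looks genuinely obstructive; the only point requiring a little care is making sure that the index $i$ and the path decomposition $w = f_i u$ line up correctly with the inductive hypothesis for $\lr(\calA,t_i,b_i)$ and with the definition of $w^\calA$. Once the stronger statement is proved, the lemma follows immediately by setting $b := a_0$ in both expressions, since $\Phi_\bfF = \Phi_{\bfF,a_0}$ by definition.
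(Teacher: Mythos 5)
Your proposal is correct and matches the paper's approach exactly: the paper also proves the strengthened statement for an arbitrary starting state $b$ by tree induction and then specializes to $b = a_0$, merely omitting the routine details that you spell out. The auxiliary identity $\lr(\calA,t,b) = \{(x,bw^\calA) \mid wx \in \delta(t)\}$ and the decomposition $w = f_i u$ are handled correctly.
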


For any $\calL$-DT $\SX$-recognizer $\recF$,  let $D_\om$ be the set of all finite meets $\om_{x_1}(a_1)\land \ldots \land \om_{x_k}(a_k)$, where $k\geq 1$, $x_1,\ldots,x_k\in X$, and $a_1,\ldots a_k \in A$, i.e., the $\land$-subsemilattice of $\calL$ generated by $R_\om := \{\om_x(a) \mid x\in X, a\in A\}$. Clearly, $R_\om$ and $D_\om$ are finite. Hence, Lemma \ref{le:Phi(t) from run tree} implies the following fact.

\begin{corollary}\label{co: range finite} For any $\calL$-DT $\SX$-recognizer $\recF$, $\ran(\Phi_\bfF) \se D_\om$, and thus the range  of any DT-recognizable $\calL$-fuzzy tree language is finite.
\end{corollary}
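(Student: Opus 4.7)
The plan is to derive the corollary directly from Lemma~\ref{le:Phi(t) from run tree}, which already expresses $\Phi_\bfF(t)$ as a meet indexed by the paths of $t$. First I would fix an arbitrary $t\in\SXt$ and invoke that lemma to write
\[
\Phi_\bfF(t) \;=\; \bigwedge\{\om_x(a_0w^\calA) \mid wx \in \delta(t)\}.
\]
The set $\delta(t)$ of root-to-leaf paths is finite (trees are finite), so this is a \emph{finite} meet. Moreover each individual term $\om_x(a_0w^\calA)$ has the shape $\om_x(a)$ with $x\in X$ and $a = a_0w^\calA \in A$, and therefore lies in $R_\om = \{\om_x(a) \mid x\in X, a\in A\}$. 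Hence $\Phi_\bfF(t)$ is a finite meet of elements of $R_\om$, which is exactly what it means to belong to the $\land$-subsemilattice $D_\om$ generated by $R_\om$. As $t$ was arbitrary, this gives $\ran(\Phi_\bfF)\se D_\om$.

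Next I would argue that $D_\om$ itself is finite. Since $X$ and $A$ are both finite, $R_\om$ has cardinality at most $|X|\cdot|A|$. Every element of $D_\om$ is, by definition, the meet of some nonempty finite subset of $R_\om$, and there are only $2^{|R_\om|}-1$ such subsets, so $|D_\om|\le 2^{|R_\om|}-1 < \infty$. Combined with the inclusion $\ran(\Phi_\bfF)\se D_\om$ established above, this shows $\ran(\Phi_\bfF)$ is finite.

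Finally, for the second clause of the statement, I would simply observe that a DT-recognizable $\calL$-fuzzy tree language is by definition of the form $\Phi_\bfF$ for some $\calL$-DT recognizer $\bfF$, so its range inherits finiteness from the first part. The main thing to be careful about is not to invoke any distributivity or completeness of $\calL$: both the bound $\ran(\Phi_\bfF)\se D_\om$ and the finiteness of $D_\om$ rely solely on $\land$ being a binary operation on $L$ and on $R_\om$ being finite, which is consistent with the standing convention that $\calL$ may be just any nontrivial bounded lattice. There is really no hard step here; the work is entirely done by Lemma~\ref{le:Phi(t) from run tree}, and the corollary is just the observation that a finite meet of elements from a finite set stays inside the finite meet-semilattice they generate.
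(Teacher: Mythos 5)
Your proposal is correct and matches the paper's own argument: the authors likewise derive the corollary immediately from Lemma~\ref{le:Phi(t) from run tree} together with the observation that $R_\om$, and hence the $\land$-subsemilattice $D_\om$ it generates, is finite. Your explicit justification that $\delta(t)$ is finite and that $|D_\om|\le 2^{|R_\om|}-1$ merely spells out what the paper leaves implicit, so nothing further is needed.
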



Let $\recF$ be an $\calL$-DT $\SX$-recognizer. For any $p \in \SXc$ and $a \in A$, let $p^\calA(a)$ be the state in which $\calA$ reaches the $\xi$-labeled leaf of $p$ if it is started in  state $a$ at the root of $p$, i.e., $(\xi,p^\calA(a))\in \lr(\calA,p,a)$, and let
\[
\Phi_{\bfF,a}(p) := \bigwedge\{\om_x(b) \mid x\in X, (x,b) \in \lr(\calA,p,a)\}.
\]

The following obvious facts can be shown by induction on the depth $\dt(p)$.

\begin{lemma}\label{le:eval of products} Let $\recF$ be an $\calL$-DT $\SX$-recognizer, and consider any $a\in A$, $p,q\in \SXc$ and $t \in \SXt$. If $p^{\calA}(a) = b$, then
\[
\Phi_{\bfF,a}(p\cdot q) = \Phi_{\bfF,a}(p) \land \Phi_{\bfF,b}(q) \text{ \: and \: } \Phi_{\bfF,a}(p\cdot t) = \Phi_{\bfF,a}(p) \land \Phi_{\bfF,b}(t).
\]
\end{lemma}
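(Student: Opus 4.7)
The proof will proceed by induction on the depth $\dt(p)$, as suggested by the authors. The two identities have essentially the same structure, so I would treat them in parallel, or prove the first and remark that the second follows by the same argument with $t$ in place of $q$ (noting that $p\cdot t$ is a $\SX$-tree while $p\cdot q$ is a $\SX$-context, but the decomposition of $\lr$ is identical).

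For the base case $\dt(p) = 0$ we have $p = \xi$, so $p\cdot q = q$, $p\cdot t = t$, and $p^\calA(a) = a = b$. Moreover, $\lr(\calA,\xi,a)$ consists only of the pair $(\xi,a)$, whose first coordinate lies outside $X$, so the defining meet of $\Phi_{\bfF,a}(\xi)$ is empty and therefore equals the top element $1$ of $\calL$. Both identities then reduce to the trivial $\Phi_{\bfF,a}(q) = 1 \wedge \Phi_{\bfF,a}(q)$ and likewise for $t$.

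For the inductive step, write $p = f(u_1,\ldots,u_m)$, where exactly one $u_i$ is a $\SX$-context $q'$ with $\dt(q') = \dt(p)-1$ and the other $u_j$ are ordinary $\SX$-trees. Let $f_\calA(a) = (a_1,\ldots,a_m)$. From the definition of $\run$, the set $\lr(\calA,p,a)$ is the disjoint union of the sets $\lr(\calA,u_j,a_j)$. Splitting the meet over the $X$-labeled leaves along this decomposition gives
\[
\Phi_{\bfF,a}(p) \;=\; \Bigl(\bigwedge_{j\neq i} \Phi_{\bfF,a_j}(u_j)\Bigr) \wedge \Phi_{\bfF,a_i}(q'),
\]
and the same decomposition applied to $p\cdot q = f(u_1,\ldots,q'\cdot q,\ldots,u_m)$ yields
\[
\Phi_{\bfF,a}(p\cdot q) \;=\; \Bigl(\bigwedge_{j\neq i} \Phi_{\bfF,a_j}(u_j)\Bigr) \wedge \Phi_{\bfF,a_i}(q'\cdot q).
\]
Now $(q')^\calA(a_i) = p^\calA(a) = b$, so the induction hypothesis applied to the shallower context $q'$ gives $\Phi_{\bfF,a_i}(q'\cdot q) = \Phi_{\bfF,a_i}(q') \wedge \Phi_{\bfF,b}(q)$. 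Substituting and using associativity of $\wedge$ yields $\Phi_{\bfF,a}(p\cdot q) = \Phi_{\bfF,a}(p)\wedge\Phi_{\bfF,b}(q)$, as required. Replacing $q$ by the $\SX$-tree $t$ throughout gives the second identity with an identical argument.

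The only delicate points are keeping the bookkeeping of leaves clean — specifically, remembering that in the definition of $\Phi_{\bfF,a}(p)$ only leaves labeled in $X$ contribute, so the $\xi$-leaf of $p$ makes no contribution in the base case — and recognizing that the top element $1$ of $\calL$ serves as the empty meet, which is used implicitly in the base case. Neither presents any real obstacle; as the paper notes, the lemma is essentially bookkeeping once Lemma \ref{le:Phi(t) from run tree} and the definitions are in place.
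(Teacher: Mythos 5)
Your proof is correct and follows exactly the route the paper intends: the paper gives no written proof, stating only that the lemma ``can be shown by induction on the depth $\dt(p)$,'' and your argument is precisely that induction, carried out carefully (including the empty-meet convention for the $\xi$-leaf in the base case and the decomposition of $\lr(\calA,p,a)$ along the subterms of $p$ in the inductive step). No gaps.
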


Next we present a Pumping Lemma for $\calL$-fuzzy DT tree recognizers.

\begin{lemma}\label{le:Pumping Lemma for fuzzy DT-recognizers} Let $\recF$ be an $n$-state $\calL$-DT $\SX$-recognizer and let $\ell := |D_\om|$. If $t$ is a $\SX$-tree of height $\geq (\ell + 1)n + 1$, then there exist $s\in \SXt$ and $p,q\in \SXc$ such that $t = p\cdot q \cdot s$, $\dt(q) \geq 1$ and $\Phi_\bfF(p\cdot q^k \cdot s) = \Phi_\bfF(t)$ for every $k\geq 0$.
\end{lemma}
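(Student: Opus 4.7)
The plan is to trace a root-to-leaf path in $t$ of length $\hg(t)$ and apply a pigeonhole argument to the sequence of pairs (state reached, accumulated meet) that the recognizer $\bfF$ produces along that path. Fix such a longest path, and for each $k \in \{0,1,\ldots,\hg(t)\}$ decompose $t = p_k \cdot t_k$, where $p_k \in \SXc$ has depth $k$ (following the chosen path) and $t_k \in \SXt$ is the corresponding subtree. Set $a_k := p_k^\calA(a_0)$ and $d_k := \Phi_{\bfF,a_k}(t_k)$. Since $t_k$ has at least one leaf, $d_k$ is a nonempty meet of elements of $R_\om$ and thus lies in $D_\om$.

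Now there are $\hg(t) + 1 \geq (\ell+1)n + 2$ such pairs $(a_k,d_k)$, whereas $|A \times D_\om| \leq n\ell$, so by pigeonhole there exist $0 \leq i < j \leq \hg(t)$ with $a_i = a_j$ and $d_i = d_j$. Let $q \in \SXc$ be the unique context with $p_j = p_i \cdot q$, so $\dt(q) = j - i \geq 1$, and put $s := t_j$. This yields the required factorization $t = p_i \cdot q \cdot s$.

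It remains to verify $\Phi_\bfF(p_i \cdot q^k \cdot s) = \Phi_\bfF(t)$ for all $k \geq 0$, which I would do by repeatedly invoking Lemma \ref{le:eval of products}. The identity $p_j^\calA(a_0) = q^\calA(a_i) = a_j = a_i$ shows that $q$ fixes $a_i$, so iterating gives $(q^k)^\calA(a_i) = a_i$ and, by the idempotency of $\wedge$, $\Phi_{\bfF,a_i}(q^k) = \Phi_{\bfF,a_i}(q)$ for every $k \geq 1$. Hence for each $k \geq 1$,
\[
\Phi_\bfF(p_i \cdot q^k \cdot s) = \Phi_{\bfF,a_0}(p_i) \wedge \Phi_{\bfF,a_i}(q) \wedge \Phi_{\bfF,a_i}(s),
\]
which is exactly $\Phi_\bfF(t)$ (the case $k=1$). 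For $k = 0$ the right-hand side loses its middle factor, so it must be shown that $\Phi_{\bfF,a_i}(q) \geq \Phi_{\bfF,a_i}(s)$. This is where the equality $d_i = d_j$ is used: expanding $d_i = \Phi_{\bfF,a_i}(q \cdot s) = \Phi_{\bfF,a_i}(q) \wedge \Phi_{\bfF,a_j}(s) = \Phi_{\bfF,a_i}(q) \wedge d_j$ and combining with $d_i = d_j$ forces $\Phi_{\bfF,a_i}(q) \geq d_j = \Phi_{\bfF,a_i}(s)$.

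The main obstacle is precisely this $k = 0$ step: for $k \geq 1$ a plain state-only pigeonhole on $A$ would already suffice, but erasing the whole factor $q$ requires the additional information that the accumulated degree does not drop between levels $i$ and $j$, and that information is exactly what the second coordinate $d_k$ records. This is the reason for pigeonholing on the enriched set $A \times D_\om$ rather than on $A$ alone, and why the height bound involves the factor $\ell = |D_\om|$. Apart from this delicate point, the rest of the argument is a routine computation from Lemma \ref{le:eval of products} together with the idempotency of $\wedge$.
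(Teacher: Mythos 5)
Your proof is correct, but it takes a genuinely different route from the paper's. The paper pigeonholes on states alone: from $\hg(t) \geq (\ell+1)n+1$ it extracts a path on which some state $a$ occurs at least $\ell+2$ times, giving a factorization $t = p_0\cdot q_1 \cdots q_{\ell+1}\cdot s_0$ with each $q_i$ entered in state $a$; by Lemma \ref{le:eval of products} the prefix values $\Phi_{\bfF,a_0}(p_0\cdot q_1\cdots q_i)$ form a decreasing chain in $D_\om$, so two consecutive ones coincide, and the corresponding $q_i$ satisfies $\Phi_{\bfF,a}(q_i) \geq \Phi_{\bfF,a_0}(p_0\cdot q_1\cdots q_{i-1})$ and can be pumped or deleted. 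You instead pigeonhole once on the enriched set $A\times D_\om$ of pairs consisting of the current state and the value of the remaining subtree, which hands you the pumpable segment directly; the identity $d_i = \Phi_{\bfF,a_i}(q)\wedge d_j$ combined with $d_i = d_j$ gives $\Phi_{\bfF,a_i}(q) \geq \Phi_{\bfF,a_i}(s)$, which is exactly what the deletion case $k=0$ needs and is the counterpart of the paper's equality of two consecutive prefix values. The two arguments are dual in flavour (suffix values versus prefix values), both resting on Lemma \ref{le:eval of products} and the finiteness of $D_\om$; your version is somewhat more economical and in fact shows that the height threshold could be lowered to $n\ell$, since only $n\ell+1$ pairs are needed for the pigeonhole. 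Your closing diagnosis of why one must pigeonhole on $A\times D_\om$ rather than on $A$ alone is accurate and corresponds precisely to the role the monotone chain of prefix values plays in the paper's proof.
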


\begin{proof} If $\hg(t)\geq (\ell + 1)n + 1$, then there is a path in $\run(\calA,t,a_0)$ on which some state $a\in A$ appears at least $\ell +2$ times. Thus we may write
\[
t = p_0\cdot q_1 \cdot \ldots \cdot q_{\ell +1} \cdot s_0
\]
for some $p_0,q_1,\ldots,q_{\ell + 1}\in \SXc$ and $s_0 \in \SXt$ such that $\dt(q_i) \geq 1$ for every $i \in [\ell +1]$ and $\bfF$ enters the roots of (the displayed occurrences of) $q_1,\ldots,q_{\ell +1}$ and $s_0$ in state $a$, i.e.,
$p_0^\calA(a_0) = q_1^\calA(a) = \ldots = q_{\ell+1}^\calA(a) = a$. By Lemma \ref{le:eval of products},
\[
\Phi_{\bfF,a_0}(p_0\cdot q_1\cdot \ldots \cdot q_i) = \Phi_{\bfF,a_0}(p_0) \land \Phi_{\bfF,a}(q_1) \land \ldots \land \Phi_{\bfF,a}(q_i)
\]
for every $i \in [\ell +1]$, and hence
\[
\Phi_{\bfF,a_0}(p_0) \geq \Phi_{\bfF,a_0}(p_0 \cdot q_1) \geq \ldots \geq \Phi_{\bfF,a_0}(p_0 \cdot q_1 \cdot \ldots \cdot q_{\ell +1}).
\]
Since these values are elements of $D_\om$, we must have
\[
\Phi_{\bfF,a_0}(p_0\cdot q_1\cdot \ldots \cdot q_{i-1}) = \Phi_{\bfF,a_0}(p_0\cdot q_1\cdot \ldots \cdot q_i)
\]
for some $i \in [\ell]$, which implies $\Phi_{\bfF,a}(q_i) \geq \Phi_{\bfF,a_0}(p_0\cdot q_1\cdot \ldots \cdot q_{i-1})$. As
\[
\Phi_\bfF(t) = \Phi_{\bfF,a_0}(p_0\cdot q_1\cdot \ldots \cdot q_{i-1}) \land \Phi_{\bfF,a}(q_i) \land \Phi_{\bfF,a}(q_{i+1}\cdot \ldots \cdot q_{\ell +1}\cdot s_0)
\]
by Lemma \ref{le:eval of products}, this means that
\[
\Phi_\bfF(t) = \Phi_\bfF((p_0\cdot q_1\cdot \ldots \cdot q_{i-1}) \cdot q_i^k \cdot (q_{i+1}\cdot \ldots \cdot q_{\ell +1}\cdot s_0))
\]
for every $k\geq 0$.
\end{proof}

The Pumping Lemma may be formulated also as follows.

\begin{lemma}\label{le:Pumpng DT-rec tree lang}
For any DT-recognizable $\calL$-fuzzy $\SX$-tree language $\Phi$, there is a constant $h$ such that if $t$ is a $\SX$-tree of height $\geq h$, then there are $p,q\in \SXc$ and $s\in \SXt$ such that $t = p\cdot q \cdot s$, $\dt(q) \geq 1$, and $\Phi(p\cdot q^k \cdot s) = \Phi(t)$ for every $k\geq 0$.
\end{lemma}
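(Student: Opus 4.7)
The plan is essentially to observe that Lemma \ref{le:Pumpng DT-rec tree lang} is a direct restatement of Lemma \ref{le:Pumping Lemma for fuzzy DT-recognizers} in terms of the $\calL$-fuzzy tree language $\Phi$ itself rather than of a fixed recognizer for it. So the proof will consist of unpacking the definition of DT-recognizability and extracting a suitable constant $h$ from the parameters of some recognizer witnessing that $\Phi \in DRec_\calL(\Si,X)$.

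Concretely, I would start by picking, via the assumption $\Phi \in DRec_\calL(\Si,X)$, an $\calL$-DT $\SX$-recognizer $\recF$ with $\Phi_\bfF = \Phi$. Set $n := |A|$ and $\ell := |D_\om|$; both are finite, the latter because $R_\om = \{\om_x(a) \mid x\in X, a\in A\}$ is finite and $D_\om$ is just the $\land$-subsemilattice of $\calL$ it generates (cf.\ the discussion preceding Corollary \ref{co: range finite}). Define
\[
h \;:=\; (\ell + 1)n + 1.
\]
Note that $h$ depends on the chosen recognizer $\bfF$ but not on the tree $t$, which is all the lemma requires.

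Now, given any $\SX$-tree $t$ with $\hg(t) \geq h$, Lemma \ref{le:Pumping Lemma for fuzzy DT-recognizers} applied to $\bfF$ furnishes $p,q\in \SXc$ and $s\in\SXt$ with $t = p\cdot q\cdot s$, $\dt(q) \geq 1$, and $\Phi_\bfF(p\cdot q^k\cdot s) = \Phi_\bfF(t)$ for every $k \geq 0$. Replacing $\Phi_\bfF$ by $\Phi$ on both sides (which is legitimate since $\Phi = \Phi_\bfF$) yields exactly the conclusion $\Phi(p\cdot q^k\cdot s) = \Phi(t)$ for all $k\geq 0$.

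There is no real obstacle here; the only point worth flagging is that one should make explicit that the constant $h$ is an invariant of the language $\Phi$ in the sense that it is determined once any recognizer for $\Phi$ is fixed, and that the same decomposition then works uniformly across all pumping exponents $k$. The finiteness of $\ell$ is what makes the numerical bound meaningful, and this finiteness does not require distributivity of $\calL$, consistent with the general standing assumption of this section.
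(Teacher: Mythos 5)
Your proposal is correct and matches the paper exactly: the paper gives no separate proof of Lemma \ref{le:Pumpng DT-rec tree lang}, introducing it only with the remark that the Pumping Lemma ``may be formulated also as follows,'' i.e.\ precisely the instantiation $h := (\ell+1)n+1$ from a chosen recognizer $\bfF$ with $\Phi_\bfF = \Phi$ that you spell out. Your explicit unpacking is, if anything, slightly more careful than the paper's silence.
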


By Lemma \ref{le:Pumping Lemma for fuzzy DT-recognizers}, $\ran(\Phi_\bfF) = \{\Phi_\bfF(t) \mid t\in \SXt, \hg(t) \leq (\ell+1)n\}$.  Hence, the following result.

\begin{corollary}\label{co:ran(Phi F) can be determined} For any $\calL$-DT $\SX$-recognizer $\bfF$, the set $\ran(\Phi_\bfF)$ can be effectively determined.
\end{corollary}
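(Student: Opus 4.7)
The plan rests on the observation already recorded immediately before the corollary, namely $\ran(\Phi_\bfF) = \{\Phi_\bfF(t) \mid t\in\SXt,\ \hg(t) \leq (\ell+1)n\}$, together with effectivity of the finite set on the right. First I would re-derive the equality from Lemma \ref{le:Pumping Lemma for fuzzy DT-recognizers} by iteration: any tree $t$ with $\hg(t) \geq (\ell+1)n+1$ admits a factorization $t = p\cdot q \cdot s$ with $\dt(q)\geq 1$ and
\[
\Phi_\bfF(p\cdot s) \;=\; \Phi_\bfF(p\cdot q^0 \cdot s) \;=\; \Phi_\bfF(t),
\]
and because the factorization produced in the proof of that lemma is extracted from a longest path of $\run(\calA,t,a_0)$, the tree $p\cdot s$ has strictly smaller height than $t$; repeating the replacement finitely many times brings the height below $(\ell+1)n+1$ while preserving the value, which proves the nontrivial inclusion.

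Once this is in place, effectivity is essentially bookkeeping. Since $\Si$ and $X$ are finite, the set of $\SX$-trees of height at most $(\ell+1)n$ is finite and can be explicitly enumerated. For each such $t$, the value $\Phi_\bfF(t) = \Phi_{\bfF,a_0}(t)$ is computed by the defining recursion using the finite transition tables of $\calA$, the finitely many values $\om_x(a)$, and finitely many meets in $\calL$; by the standing convention of Section \ref{se:L-fuzzy tree languages}, these meets can be effectively formed. Collecting the finitely many values so obtained yields $\ran(\Phi_\bfF)$. The only mildly delicate step is verifying the strict height reduction in the iteration, but this is taken care of by the longest-path choice already present in the proof of Lemma \ref{le:Pumping Lemma for fuzzy DT-recognizers}.
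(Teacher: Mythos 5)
Your proposal follows the paper's own route exactly: the paper derives the corollary from the single observation that, by Lemma \ref{le:Pumping Lemma for fuzzy DT-recognizers}, $\ran(\Phi_\bfF) = \{\Phi_\bfF(t) \mid t\in\SXt,\ \hg(t)\le(\ell+1)n\}$, the right-hand side being a finite, effectively computable set; your enumeration-and-evaluation bookkeeping is precisely what is intended. One correction to the detail you yourself flag as delicate: a single pump-down $t = p\cdot q\cdot s \mapsto p\cdot s$ need \emph{not} strictly decrease the height, even though $q$ is excised from a longest path, because that longest path need not be unique --- another branch of maximal length may diverge inside $p$ and survive the excision, leaving $\hg(p\cdot s)=\hg(t)$. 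What does strictly decrease is the number of nodes of the tree (since $\dt(q)\ge 1$), and that already guarantees termination of the iteration; the process can only halt at a tree to which the Pumping Lemma no longer applies, i.e.\ one of height at most $(\ell+1)n$, which is all you need. With that one-word repair (size in place of height) the argument is complete and coincides with the paper's.
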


Lemma \ref{le:Pumping Lemma for fuzzy DT-recognizers} and Corollary \ref{co:ran(Phi F) can be determined} yield the following decidability results. For similar results concerning tree series, cf. \cite{Bor04,FuVo09}.

\begin{proposition}\label{pr:Decidability results from Pumping Lemma}
The following questions are decidable for any $\calL$-DT $\SX$-recognizer $\bfF$.
\begin{itemize}
  \item[{\rm (a)}] The \emph{Emptiness Problem} ``$\supp(\Phi_\bfF) = \es$?''.
  \item[{\rm (b)}] The \emph{Finiteness Problem} ``Is $\supp(\Phi_\bfF)$ finite?''.
  \item[{\rm (c)}] Is $\Phi_\bfF$ constant, i.e., is $\Phi_\bfF = \widetilde{c}$ for some $c\in L$?
  \item[{\rm (d)}] Is $\Phi_\bfF$ crisp?
\end{itemize}
\end{proposition}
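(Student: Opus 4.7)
The plan is to combine Corollary~\ref{co:ran(Phi F) can be determined} with the Pumping Lemma (Lemma~\ref{le:Pumping Lemma for fuzzy DT-recognizers}), using the former alone for (a), (c), and (d) and both for (b).

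First I would apply Corollary~\ref{co:ran(Phi F) can be determined} to compute $\ran(\Phi_\bfF)$ explicitly as a finite subset of $L$, by evaluating $\Phi_\bfF$ on the finitely many $\SX$-trees of height at most $(\ell+1)n$. Parts (a), (c), and (d) then become direct finite checks on this known set: (a) $\supp(\Phi_\bfF) = \es$ if and only if $\ran(\Phi_\bfF) = \{0\}$; (c) $\Phi_\bfF = \widetilde{c}$ for some $c$ if and only if $\ran(\Phi_\bfF)$ is a singleton $\{c\}$; and (d) $\Phi_\bfF$ is crisp if and only if $\ran(\Phi_\bfF) \se \{0,1\}$. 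Each test uses only equality and order comparisons of elements of $\calL$, which are effective by standing assumption.

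For (b), set $H := (\ell+1)n + 1$. Since $\SX$-trees of height less than $H$ form a finite set, $\supp(\Phi_\bfF)$ is finite if and only if no $t \in \supp(\Phi_\bfF)$ has $\hg(t) \ge H$. The Pumping Lemma supplies the nontrivial direction: if $t \in \supp(\Phi_\bfF)$ with $\hg(t) \ge H$, then $t = p \cdot q \cdot s$ with $\dt(q) \ge 1$ and $\Phi_\bfF(p \cdot q^k \cdot s) = \Phi_\bfF(t) > 0$ for every $k \ge 0$, producing infinitely many distinct trees in $\supp(\Phi_\bfF)$. The converse is immediate, since an infinite tree language over a finite alphabet must contain trees of unbounded height. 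So (b) reduces to deciding whether some $t \in \supp(\Phi_\bfF)$ has $\hg(t) \ge H$.

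To turn this into a terminating procedure I would recast the existence structurally on the recognizer via Lemma~\ref{le:eval of products}: such a $t$ exists if and only if there are a state $a \in A$, contexts $p, q \in \SXc$ with $p^\calA(a_0) = a$, $q^\calA(a) = a$, $\dt(q) \ge 1$, and a tree $s \in \SXt$ with
\[
\Phi_{\bfF, a_0}(p) \land \Phi_{\bfF, a}(q) \land \Phi_{\bfF, a}(s) > 0.
\]
A pigeonhole argument on spine states allows us to take $\dt(p), \dt(q) \le n$: state repetitions along the spine can be cut out while preserving the endpoints, and deleting spine segments only removes terms from the relevant meet, which can only increase it. Each remaining sibling subtree may be replaced by one of height at most $(\ell+1)n$ realizing the same value of the appropriate $\Phi_{\bfF,b}$, by Corollary~\ref{co:ran(Phi F) can be determined} applied at that sibling's entry state, and similarly $\hg(s) \le (\ell+1)n$ suffices. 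The search is then over a finite collection of tuples $(a, p, q, s)$, and each check amounts to one meet in $\calL$. The main obstacle is justifying these size bounds in detail, particularly preserving $\dt(q) \ge 1$ under spine shortening; once that is pinned down, the decidability of (b) follows.
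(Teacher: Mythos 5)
Your proposal is correct, and for parts (a), (c) and (d) it is exactly the argument the paper intends (the paper states the proposition with no more justification than that it follows from Lemma \ref{le:Pumping Lemma for fuzzy DT-recognizers} and Corollary \ref{co:ran(Phi F) can be determined}): compute $\ran(\Phi_\bfF)$ by evaluating $\Phi_\bfF$ on the finitely many trees of height at most $(\ell+1)n$ and then test whether $\ran(\Phi_\bfF)=\{0\}$, whether it is a singleton, or whether it is contained in $\{0,1\}$. For (b) you substitute a search for a ``productive loop'' where the standard route is a height-window check; your criterion is sound: a triple $(p,q,s)$ with $p^\calA(a_0)=a$, $q^\calA(a)=a$, $\dt(q)\ge 1$ and $\Phi_{\bfF,a_0}(p)\land\Phi_{\bfF,a}(q)\land\Phi_{\bfF,a}(s)>0$ yields the infinitely many trees $p\cdot q^k\cdot s$ in $\supp(\Phi_\bfF)$, and conversely any $t\in\supp(\Phi_\bfF)$ of height at least $(\ell+1)n+1$ decomposes this way with every factor nonzero, since each factor dominates $\Phi_\bfF(t)$. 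The detail you flag is easily settled: when shortening the spine of $q$, pick the repeated pair among the spine states at depths $1,\dots,\dt(q)$ (excluding the root state), which exists as soon as $\dt(q)\ge n+1$ and guarantees that the first edge of $q$ survives the cut; moreover cutting a spine segment only deletes leaves from the defining meet, so the value of $\Phi_{\bfF,a}(q)$ cannot decrease. You are also right that the sibling subtrees hanging off the shortened spines of $p$ and $q$, as well as $s$ itself, must be bounded, and your appeal to Corollary \ref{co:ran(Phi F) can be determined} applied at the relevant entry states does that. A lighter alternative, closer to the cited pumping-lemma arguments for tree series, is the window criterion: $\supp(\Phi_\bfF)$ is infinite if and only if it contains a tree $t$ with $H\le\hg(t)\le 2H$ for $H=(\ell+1)n+1$; one takes a node-minimal member of $\supp(\Phi_\bfF)$ of height at least $H$ and cuts a state repetition located between depths $H$ and $2H$ on a longest path, which preserves membership and height at least $H$ while shrinking the tree, so the window cannot be missed. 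That version avoids all bookkeeping about sibling subtrees, but either way the proposition follows.
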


To treat the Equivalence Problem ``$\bfF \equiv \bfG$?'' of two  $\calL$-DT $\SX$-recognizers $\recF$ and $\recG$, we introduce the $(\calL \times \calL)$-DT $\SX$-recognizer $\bfH = (\calA \times \calB, (a_0,b_0),\sigma)$, where  $\si$ is defined by
\[
    \sigma_x((a,b)) = (\omega_x(a),\pi_x(b)))
    \qquad (x \in X, a \in A, b \in B).
\]
The following fact can easily be verified by tree induction (including induction over $\SX$-contexts, too).

\begin{lemma}\label{le:product-recognizer} $\Phi_{\bfH,(a,b)}(t) = (\Phi_{\bfF,a}(t),\Phi_{\bfG,b}(t))$ for every $t \in \SXt$ and all $a\in A$, $b\in B$.
\end{lemma}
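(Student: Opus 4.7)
The plan is to prove the statement by structural induction on the $\SX$-tree $t$, exploiting that the product DT $\Si$-algebra $\calA\times\calB$ was defined so that its top-down transitions pair the transitions of $\calA$ and $\calB$ componentwise, and that meets in the product lattice $\calL\times\calL$ are computed componentwise.

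For the base case $t = x \in X$, I would simply unwind the definitions: by the construction of $\bfH$, $\Phi_{\bfH,(a,b)}(x) = \sigma_x((a,b)) = (\omega_x(a),\pi_x(b))$, and the right-hand side is exactly $(\Phi_{\bfF,a}(x),\Phi_{\bfG,b}(x))$ by the respective leaf clauses for $\bfF$ and $\bfG$.

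For the inductive step with $t = f(t_1,\ldots,t_m)$, let $(a_1,\ldots,a_m) = f_\calA(a)$ and $(b_1,\ldots,b_m) = f_\calB(b)$; by the definition of the direct product $\calA\times\calB$ in Section \ref{se:Top-down algebras and recognizers}, $f_{\calA\times\calB}((a,b)) = ((a_1,b_1),\ldots,(a_m,b_m))$. Applying the recursive clause for $\Phi_{\bfH,(a,b)}$ and then the induction hypothesis to each component yields
\[
\Phi_{\bfH,(a,b)}(t) = \bigwedge_{i=1}^m \Phi_{\bfH,(a_i,b_i)}(t_i) = \bigwedge_{i=1}^m (\Phi_{\bfF,a_i}(t_i),\Phi_{\bfG,b_i}(t_i)).
\]
Since meets in $\calL\times\calL$ act componentwise, this equals $\bigl(\bigwedge_i \Phi_{\bfF,a_i}(t_i),\bigwedge_i \Phi_{\bfG,b_i}(t_i)\bigr)$, which by the recursive clauses for $\bfF$ and $\bfG$ is $(\Phi_{\bfF,a}(t),\Phi_{\bfG,b}(t))$, completing the induction.

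There is no real obstacle: once the product DT algebra and its componentwise meet structure are in place, the argument is essentially forced. The only thing worth being explicit about is the componentwise nature of the meet in $\calL\times\calL$, since that is what synchronizes the two inductive hypotheses. The remark about extending to $\SX$-contexts (needed later for the Equivalence Problem) is an entirely analogous induction on $\dt(p)$, so I would simply note that the same componentwise argument applies verbatim.
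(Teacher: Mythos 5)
Your proof is correct and is exactly the tree induction that the paper invokes (the paper merely states that the identity "can easily be verified by tree induction", without writing out the details you supply). Your explicit appeal to the componentwise meet in $\calL\times\calL$ and your closing remark about the analogous induction over $\SX$-contexts match the paper's intent precisely.
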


We get the following result from Lemma~\ref{le:product-recognizer} by applying the Pumping Lemma to $\bfH$.

\begin{lemma}\label{le:reduction of tree height} One can find a constant $h$ such that for any $t\in \SXt$, there is a tree $s \in \SXt$ such that $\hg(s) < h$,
$\Phi_{\bfF}(s) = \Phi_{\bfF}(t)$ and $\Phi_{\bfG}(s) = \Phi_{\bfG}(t)$.
\end{lemma}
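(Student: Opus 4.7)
The plan is to apply the Pumping Lemma (Lemma \ref{le:Pumping Lemma for fuzzy DT-recognizers}) to the combined recognizer $\bfH$ introduced just before the statement, using Lemma \ref{le:product-recognizer} to translate back to $\bfF$ and $\bfG$. The key observation is that $\bfH$ is a single $(\calL\times \calL)$-DT $\SX$-recognizer, so the Pumping Lemma applies to it directly and the pair $(\Phi_\bfF(t),\Phi_\bfG(t))$ behaves as one quantity we can preserve.

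First, let $h$ be the pumping constant that Lemma \ref{le:Pumping Lemma for fuzzy DT-recognizers} furnishes for $\bfH$; concretely, if $\bfH$ has $n$ states and $\ell = |D_\sigma|$, one can take $h = (\ell+1)n+1$. Given any $t \in \SXt$, if $\hg(t) < h$ we are done with $s := t$. Otherwise the Pumping Lemma applied to $\bfH$ gives a decomposition $t = p \cdot q \cdot s_0$ with $\dt(q) \geq 1$ and $\Phi_\bfH(p \cdot q^k \cdot s_0) = \Phi_\bfH(t)$ for every $k \geq 0$. Taking $k=0$ yields a strictly shorter tree $t' := p\cdot s_0$ with $\Phi_\bfH(t') = \Phi_\bfH(t)$, and by Lemma \ref{le:product-recognizer} this identity at the pair-level means $\Phi_\bfF(t') = \Phi_\bfF(t)$ and $\Phi_\bfG(t') = \Phi_\bfG(t)$ simultaneously.

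Since $\dt(q) \geq 1$ implies $\hg(t') < \hg(t)$, we may iterate the procedure on $t'$ and obtain an infinite descending sequence of heights, which must eventually drop below $h$. The resulting tree $s$ satisfies $\hg(s) < h$ and, by transitivity of equality, still agrees with $t$ on both $\Phi_\bfF$ and $\Phi_\bfG$.

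There is no real obstacle here; the only subtlety is the passage through the product recognizer, and that is precisely what Lemma \ref{le:product-recognizer} is designed to handle. The constant $h$ is effectively computable from $\bfF$ and $\bfG$ since the state set of $\bfH$ is $A\times B$ and $D_\sigma$ is generated by the finite set $\{(\om_x(a),\pi_x(b)) \mid x\in X, a\in A, b\in B\}$ under componentwise meet.
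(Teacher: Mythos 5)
Your proof is correct and takes exactly the route the paper intends: the paper derives this lemma precisely by applying the Pumping Lemma to the product recognizer $\bfH$ and translating back through Lemma~\ref{le:product-recognizer}. One cosmetic repair: deleting $q$ need not strictly decrease the \emph{height} (another branch of $p$ may be equally deep), and a ``descending sequence of heights'' cannot be infinite; the iteration terminates because each cut strictly decreases the number of nodes while the height remains $\geq h$, so the conclusion stands.
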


\begin{proposition}\label{pr:Equiv and Inclusion Decidable} The following problems are decidable for any $\calL$-DT recognizers $\bfF$ and $\bfG$.
\begin{itemize}
  \item[{\rm (a)}] The \emph{Inclusion Problem}  ``$\Phi_{\bfF} \se \Phi_{\bfG}$?''.
  \item[{\rm (b)}]  The \emph{Equivalence Problem} ``$\bfF \equiv \bfG$?''.
  \item[{\rm (c)}] The \emph{Disjointness Problem} ``$\Phi_{\bfF} \cap \Phi_{\bfG} = \widetilde{0}$?''.
\end{itemize}
\end{proposition}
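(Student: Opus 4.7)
The plan is to reduce all three problems to checking finitely many trees, using the product recognizer $\bfH = (\calA\times\calB,(a_0,b_0),\sigma)$ introduced just before Lemma~\ref{le:product-recognizer}. The essential observation is that each of the conditions in (a), (b), (c) is a condition on the pair $\bigl(\Phi_\bfF(t),\Phi_\bfG(t)\bigr)$, which by Lemma~\ref{le:product-recognizer} equals $\Phi_\bfH(t)$. Therefore, by Lemma~\ref{le:reduction of tree height}, any ``bad'' tree $t$ witnessing the failure of the condition can be replaced by a tree $s$ of height $<h$ satisfying $\Phi_\bfF(s)=\Phi_\bfF(t)$ and $\Phi_\bfG(s)=\Phi_\bfG(t)$, so $s$ is also a witness. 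Since $\Si$ and $X$ are finite, there are only finitely many $\SX$-trees of height $<h$, and the constant $h$ can be computed from $\bfH$ by the proof of Lemma~\ref{le:Pumping Lemma for fuzzy DT-recognizers} (take $h := (\ell+1)n+1$ for $\bfH$).

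Then I would treat the three problems as follows. For (a), the Inclusion Problem, $\Phi_\bfF \se \Phi_\bfG$ fails iff there is some $t\in\SXt$ with $\Phi_\bfF(t) \not\leq \Phi_\bfG(t)$; by the height-reduction argument this is equivalent to the existence of such a tree of height $<h$, so we enumerate the finitely many trees $s$ of height $<h$, compute $\Phi_\bfF(s)$ and $\Phi_\bfG(s)$ by the recursive definition, and check whether $\Phi_\bfF(s) \leq \Phi_\bfG(s)$ in $\calL$. For (b), $\bfF\equiv\bfG$ iff $\Phi_\bfF(t)=\Phi_\bfG(t)$ for all $t$; this is just $\Phi_\bfF \se \Phi_\bfG$ and $\Phi_\bfG \se \Phi_\bfF$, so (b) follows from (a) (or is handled by the same finite enumeration testing equality on each $s$). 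For (c), $\Phi_\bfF \cap \Phi_\bfG = \widetilde{0}$ iff $\Phi_\bfF(t) \wedge \Phi_\bfG(t) = 0$ for all $t$, which by the same reasoning reduces to checking $\Phi_\bfF(s) \wedge \Phi_\bfG(s) = 0$ for the finitely many trees $s$ of height $<h$.

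The only thing that needs to be said explicitly about effectivity is that the pair $\bigl(\Phi_\bfF(s),\Phi_\bfG(s)\bigr)$ for a given $s$ can be computed by structural recursion from the definitions of $\bfF$ and $\bfG$, and that the required tests ``$u\leq v$?'' and ``$u\wedge v = 0$?'' in $\calL$ are effective by our standing convention that all needed meets and joins in $\calL$ can be effectively formed. I do not expect any real obstacle here: the content is entirely in Lemmas~\ref{le:product-recognizer} and \ref{le:reduction of tree height}, and the remaining work is just to spell out that each of the three properties is preserved by replacing $t$ with an $\bfH$-equivalent smaller tree, hence is witnessed (if at all) by some tree of height below the pumping bound of $\bfH$.
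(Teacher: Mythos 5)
Your proposal is correct and follows essentially the same route as the paper's proof: both reduce each of the three problems to checking the finitely many trees of height below the pumping bound of the product recognizer $\bfH$, via Lemma~\ref{le:product-recognizer} and Lemma~\ref{le:reduction of tree height}. The additional remarks you make on effectivity of computing $\Phi_\bfF(s)$, $\Phi_\bfG(s)$ and of the tests in $\calL$ are consistent with the paper's standing convention and do not change the argument.
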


\begin{proof} The inclusion $\Phi_{\bfF} \se \Phi_{\bfG}$ fails if and only if $\Phi_{\bfF}(t) \not\leq \Phi_{\bfG}(t)$ for some $t\in \SXt$. By Lemma \ref{le:reduction of tree height} this can be decided by considering a finite number of trees. The decidability of the Equivalence Problem follows directly from the decidability of the Inclusion Problem. Finally, $\Phi_{\bfF} \cap \Phi_{\bfG} = \widetilde{0}$ holds if and only if $\Phi_\bfF(t) \land \Phi_\bfG(t) = 0$ for every $t\in \SXt$, and again it suffices to consider the trees $t$ of height less than a given number.
\end{proof}

As usual, the decision methods immediately suggested by the Pumping Lemma are not necessarily computationally effective.

\section{$DRec_{\calL}$, $Rec_\calL$ and $DRec$}\label{se:DRec_L, Rec_L and DRec}

In this section we establish some connections between DT-recognizable $\calL$-fuzzy tree languages, regular $\calL$-fuzzy tree languages and the usual DT-recognizable tree languages.

\begin{lemma}\label{le:crisp drec and drec} If $\Phi : \SXt \ra  L$ is crisp, then $\Phi \in DRec_\calL(\Si,X)$ if and only if $\supp(\Phi) \in DRec(\Si,X)$. On the other hand, a $\SX$-tree language $T$ is DT-recognizable if and only if $T^\chi \in DRec_\calL(\Si,X)$.
\end{lemma}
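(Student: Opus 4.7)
The plan is to exploit the equivalence of the two biconditionals via Remark~\ref{re:supp and char}: for any $T\se\SXt$ we have $T=\supp(T^\chi)$, and for any crisp $\Phi$ we have $\Phi=\supp(\Phi)^\chi$. So the two ``iff'' statements collapse to a single one, and it is enough to establish the two directions by an explicit construction on each side, reusing the same underlying DT $\Si$-algebra in both directions.

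For the forward direction, suppose $T\in DRec(\Si,X)$ and fix a DT $\SX$-recognizer $\recA$ with $T(\bfA)=T$. Build an $\calL$-DT $\SX$-recognizer $\bfF=(\calA,a_0,\om)$ on the same underlying DT algebra by setting $\om_x(a):=1$ if $a\in\alpha(x)$ and $\om_x(a):=0$ otherwise. By Lemma~\ref{le:Phi(t) from run tree},
\[
\Phi_\bfF(t) \;=\; \bigwedge\{\om_x(a_0w^\calA)\mid wx\in\delta(t)\},
\]
which equals $1$ exactly when $a_0w^\calA\in\alpha(x)$ for every $wx\in\delta(t)$, and equals $0$ otherwise. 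By the path description of $T(\bfA)$ recalled after its definition, this says $\Phi_\bfF=T^\chi$, so $T^\chi\in DRec_\calL(\Si,X)$.

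For the converse, assume $\Phi\in DRec_\calL(\Si,X)$ is crisp and let $\recF$ be an $\calL$-DT $\SX$-recognizer with $\Phi_\bfF=\Phi$. Define a classical DT $\SX$-recognizer $\bfA=(\calA,a_0,\alpha)$ by $\alpha(x):=\{a\in A\mid \om_x(a)=1\}$. The only point to verify is that, for a crisp $\Phi_\bfF$, the equivalence $\Phi_\bfF(t)=1$ iff $\om_x(a_0w^\calA)=1$ for every $wx\in\delta(t)$ holds. This comes from two simple observations: since $1$ is the top of $\calL$, a finite meet $\bigwedge S$ in $L$ is $1$ iff every element of $S$ equals $1$; and since $\ran(\Phi_\bfF)\se\{0,1\}$, the meet in Lemma~\ref{le:Phi(t) from run tree} is either $0$ or $1$. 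Combining these, $\Phi_\bfF(t)=1$ iff $a_0w^\calA\in\alpha(x)$ for every $wx\in\delta(t)$, i.e.\ iff $t\in T(\bfA)$. Hence $T(\bfA)=\supp(\Phi_\bfF)=\supp(\Phi)\in DRec(\Si,X)$.

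The remaining two implications of the lemma follow at once: applying the forward construction to $T$ and then reading the support gives back $T$, while applying the converse construction to a crisp $\Phi$ and then taking the characteristic function recovers $\Phi$, so the biconditionals in both parts are proved. There is no real obstacle here beyond the small top-element observation about meets reaching~$1$; everything else is bookkeeping on the path formula from Lemma~\ref{le:Phi(t) from run tree}.
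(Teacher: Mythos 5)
Your proof is correct and takes essentially the same route as the paper's: the same reduction via Remark~\ref{re:supp and char} to the two implications, and the same two constructions (setting $\om_x(a)=1$ iff $a\in\alpha(x)$ in one direction, and $\alpha(x)=\{a\in A\mid \om_x(a)=1\}$ in the other). You merely spell out, via Lemma~\ref{le:Phi(t) from run tree} and the top-element observation about meets, the verification that the paper declares obvious.
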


\begin{proof}
By Remark \ref{re:supp and char} it suffices to show that $\Phi \in DRec_\calL(\Si,X)$ implies $\supp(\Phi) \in DRec(\Si,X)$, and that $T \in DRec(\Si,X)$ implies  $T^\chi \in DRec_\calL(\Si,X)$.

Let $\recF$ be an $\calL$-DT $\SX$-recognizer for which $\Phi_\bfF$ is crisp. If $\recA$ is the DT $\SX$-recognizer with $\alpha : X \ra  \wp(A)$ is defined by $\alpha(x) = \{a\in A \mid \om_x(a) = 1\}$ ($x\in X$), then obviously $T(\bfA) = \supp(\Phi_\bfF)$.

Next, let $\recA$ be a DT $\SX$-recognizer for $T$. For each $x\in X$, we define $\om_x : A \ra  L$ by setting $\om_x(a) = 1$ if $a\in \alpha(x)$, and $\om_x(a) = 0$ if $a\notin \alpha(x)$. Then $T^\chi$ is recognized by $\recF$.
\end{proof}


\begin{proposition}\label{pr:DRec_L sub Rec_L} $DRec_\calL \subset Rec_\calL$.
\end{proposition}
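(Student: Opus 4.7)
The plan is to establish the inclusion first and then exhibit a separating example witnessing properness.

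For the inclusion $DRec_\calL \se Rec_\calL$, I would convert an arbitrary $\calL$-DT $\SX$-recognizer $\recF$ into an equivalent general $\calL$-NDT $\SX$-recognizer $\NG = (A,\Si,X,\gamma,\iota,\om)$ on the same state set, with the same final-state family $\om$, by setting $\iota(a_0) = 1$, $\iota(a) = 0$ for $a \ne a_0$, and $\gamma_f(a,a_1,\ldots,a_m) = 1$ when $(a_1,\ldots,a_m) = f_\calA(a)$ and $0$ otherwise. At every step of the $\NG$-computation the join
$\bigvee\{\gamma_f(a,a_1,\ldots,a_m) \land \Phi_{\NG,a_1}(t_1) \land \ldots \land \Phi_{\NG,a_m}(t_m) \mid a_1,\ldots,a_m \in A\}$
collapses, because all tuples other than the unique $f_\calA(a)$ contribute a meet with $0$; hence a routine tree induction gives $\Phi_{\NG,a}(t) = \Phi_{\bfF,a}(t)$ for every $a\in A$ and $t\in \SXt$. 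At the top, the $\iota$-factor isolates $a_0$ and yields $\Phi_\NG = \Phi_\bfF$. The argument uses only the definitions of $\land$ and $\lor$, so no assumption on $\calL$ beyond being a nontrivial bounded lattice is needed.

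For the strict inclusion I would reuse the example from Example \ref{ex:Paths}: take $\Si = \{f/2\}$, $X = \{x,y\}$, and $T = \{f(x,y), f(y,x)\}$. As noted in Section \ref{se:Top-down algebras and recognizers}, $T \in Rec(\Si,X) \setminus DRec(\Si,X)$, since $\Delta(T)$ additionally contains $f(x,x)$ and $f(y,y)$, so $T$ is not path closed and the Magidor--Moran characterization excludes it from $DRec$. I claim the characteristic function $T^\chi$ then witnesses $Rec_\calL(\Si,X) \setminus DRec_\calL(\Si,X)$. To see $T^\chi \in Rec_\calL(\Si,X)$, I would take a classical NDT $\SX$-recognizer for $T$ and promote its initial-state, transition and final-state relations to $\{0,1\}$-valued $\calL$-fuzzy relations, obtaining a general $\calL$-NDT recognizer whose recognized $\calL$-fuzzy tree language is exactly $T^\chi$. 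To see $T^\chi \notin DRec_\calL(\Si,X)$, suppose otherwise; since $T^\chi$ is crisp, Lemma \ref{le:crisp drec and drec} would force $T = \supp(T^\chi) \in DRec(\Si,X)$, a contradiction.

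No step should be a real obstacle. The only delicate point is noting that the direction of Lemma \ref{le:crisp rec and rec} I need here, $T \in Rec(\Si,X) \Rightarrow T^\chi \in Rec_\calL(\Si,X)$, can be established directly by the $\{0,1\}$-valued promotion of a classical NDT recognizer and therefore does not depend on distributivity of $\calL$. The remainder is combinatorial bookkeeping: at every deterministic transition the surviving tuple is unique, so the join operator plays no role and the fuzzy DT semantics sits inside the fuzzy NDT semantics transparently.
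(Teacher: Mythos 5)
Your proof is correct and follows essentially the same route as the paper: the inclusion is obtained by viewing a DT recognizer as an NDT one with deterministic (singleton, here $\{0,1\}$-valued) transitions, and properness is witnessed by the crisp fuzzy tree language with support $\{f(x,y),f(y,x)\}$ via Lemma \ref{le:crisp drec and drec}. The only cosmetic difference is that you target the \emph{general} $\calL$-NDT form directly, whereas the paper passes through the simplified $\calL$-NDT recognizer $\NF = (\frB,\{a_0\},\om)$ with $f_\frB(a) = \{f_\calA(a)\}$; your added observation that neither step needs distributivity is accurate.
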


\begin{proof} Consider any $\calL$-DT $\SX$-recognizer $\recF$. Let $\NF = (\frB,I,\om)$ be the $\calL$-NDT $\SX$-recognizer where $\frB = (A,\Si)$ is the NDT $\Si$-algebra such that $f_\frB(a) = \{f_\calA(a)\}$ for all $f\in \Si$ and $a \in A$, and $I = \{a_0\}$. It is easy to verify by tree induction that $\Phi_{\bfN\bfF,a}(t) = \Phi_{\bfF,a}(t)$ for all $t\in \SXt$ and $a\in A$, and this implies $\Phi_\NF = \Phi_\bfF$. Thus $DRec_\calL \se Rec_\calL$.
The inclusion is proper: for example,   $\Phi = \{f(x,y)/1,f(y,x)/1\}$  is clearly recognizable  but not DT-recognizable as  $\supp(\Phi) = \{f(x,y),f(y,x)\}$  is not DT-recognizable.
\end{proof}

\begin{proposition}\label{pr:supp of FDT-language} If $0$  is $\land$-irreducible in $\calL$ (i.e., $c,d >0$ implies $c\land d > 0$), then $\supp(\Phi) \in DRec(\Si,X)$ for every $\Phi \in DRec_\calL(\Si,X)$.
\end{proposition}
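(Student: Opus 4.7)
The plan is to take any $\Phi \in DRec_\calL(\Si,X)$, fix an $\calL$-DT $\SX$-recognizer $\recF$ with $\Phi_\bfF = \Phi$, and construct from $\bfF$ a crisp DT $\SX$-recognizer whose accepted tree language is exactly $\supp(\Phi)$.

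The natural construction is to threshold the final state assignment at $0$. That is, I would define $\recA$ on the same underlying DT $\Si$-algebra and initial state as $\bfF$, and set
\[
\alpha(x) \,:=\, \{a \in A \mid \om_x(a) > 0\} \qquad (x \in X).
\]
By the path-based description of DT-recognition given just after the definition of $T(\bfA)$ in Section~\ref{se:Top-down algebras and recognizers}, we then have
\[
T(\bfA) \,=\, \{t \in \SXt \mid \om_x(a_0 w^\calA) > 0 \text{ for every } wx \in \delta(t)\}.
\]

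The key step is to match this set with $\supp(\Phi_\bfF)$ via Lemma \ref{le:Phi(t) from run tree}, which gives
\[
\Phi_\bfF(t) \,=\, \bigwedge\{\om_x(a_0 w^\calA) \mid wx \in \delta(t)\}.
\]
Since $\delta(t)$ is a finite set, the meet on the right is a finite meet of elements of $L$. The hypothesis that $0$ is $\land$-irreducible in $\calL$ means precisely that a finite meet in $\calL$ equals $0$ if and only if at least one of the meetands equals $0$ (a trivial induction on the number of meetands handles the finite case from the binary one). Therefore $\Phi_\bfF(t) > 0$ if and only if $\om_x(a_0 w^\calA) > 0$ for every $wx \in \delta(t)$, which is exactly the membership condition for $T(\bfA)$. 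Hence $\supp(\Phi) = \supp(\Phi_\bfF) = T(\bfA) \in DRec(\Si,X)$.

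There is no real obstacle here: the only substantive ingredient is the $\land$-irreducibility hypothesis, which is exactly what is needed to lift the pointwise condition ``$\om_x(a_0 w^\calA) > 0$'' through the finite meet defining $\Phi_\bfF(t)$. Everything else is bookkeeping via Lemma \ref{le:Phi(t) from run tree} and the path characterization of DT-recognizable tree languages.
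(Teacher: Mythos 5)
Your proposal is correct and uses exactly the paper's construction: the DT recognizer $\recA$ with $\alpha(x) = \{a \in A \mid \om_x(a) > 0\}$ on the same algebra and initial state, with the $\land$-irreducibility of $0$ doing the work of passing positivity through the finite meet. The only cosmetic difference is that you verify $T(\bfA) = \supp(\Phi_\bfF)$ via the path characterization and Lemma \ref{le:Phi(t) from run tree}, whereas the paper states the equivalent claim $t \in T(\bfA,a) \iff \Phi_{\bfF,a}(t) > 0$ and leaves the (tree-induction) verification to the reader.
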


\begin{proof} Let $\recF$ be any $\calL$-DT $\SX$-recognizer. If $\recA$ is the DT $\SX$-recognizer, where $\alpha(x) = \{a\in A \mid \om_x(a) > 0\}$ for each $x\in X$, then it is easy to show that for all $t\in \SXt$ and $a\in A$, $t\in T(\bfA,a)$ iff $\Phi_{\bfF,a}(t) > 0$.
Hence, $t\in T(\bfA)$ iff $\Phi_\bfF(t) > 0$, i.e., $T(\bfA) = \supp(\Phi_\bfF)$.
\end{proof}

Note that in any bounded chain $0$ is $\land$-irreducible.
The following example shows that the $\land$-irreducibility  condition is necessary unless $\Si$ is unary.

\begin{example}\label{ex:support not DT-recognizable}
Assume that there are elements $c,d\in L$ such that $c,d >0$, but $c\wedge d = 0$. Let $\Si = \{f/2\}$ and $X = \{x,y\}$. The $\calL$-fuzzy $\SX$-tree language $\Phi = \{f(x,x)/c,f(y,y)/d\}$ is recognized by the $\calL$-DT $\SX$-recognizer $\recF
$, where $A = \{a_0,a,b\}$, $f_\calA(a_0) = (a,a)$, $f_\calA(a) = f_\calA(b) = (b,b)$, $\om_x(a) = c$, $\om_y(a) = d$ and $\om_x(a_0) = \om_x(b) = \om_y(a_0) = \om_y(b) = 0$. However, $\supp(\Phi) = \{f(x,x),f(y,y)\}$ is not a DT-recognizable tree language. The example is easily adapted to any $\Si$ that contains a symbol of arity $\geq 2$.
\end{example}

For any $c\in L$, the $c$\emph{-cut} of $\Phi\in L^{\SXt}$ is the $\SX$-tree language $\Phi_{\geq c} := \{t\in\SXt \mid \Phi(t) \geq c\}$.

\begin{proposition}\label{pr:cuts are DT-recognizable} Every cut $\Phi_{\geq c}$ ($c\in L$) of a DT-recognizable $\calL$-fuzzy tree language $\Phi$ is a DT-recognizable tree language. On the other hand, if $c\in L, c >0$, then every DT-recognizable $\SX$-tree language is the $c$-cut of some DT-recognizable $\calL$-fuzzy $\SX$-tree language.
\end{proposition}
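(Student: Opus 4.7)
The plan is to handle the two statements separately, in both cases by a direct construction that reuses the underlying (crisp) DT $\Si$-algebra and adjusts only the final-state component.

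\textbf{Part (1).} Suppose $\Phi = \Phi_\bfF$ for some $\calL$-DT $\SX$-recognizer $\recF$. I would define a DT $\SX$-recognizer $\bfA = (\calA,a_0,\alpha)$ on the same DT $\Si$-algebra and with the same initial state, by setting
\[
\alpha(x) \,:=\, \{a\in A \mid \om_x(a) \geq c\} \qquad (x\in X).
\]
Using the path-based description of $\Phi_\bfF$ from Lemma~\ref{le:Phi(t) from run tree}, namely
$\Phi_\bfF(t) = \bigwedge\{\om_x(a_0w^\calA) \mid wx\in \delta(t)\}$,
the condition $\Phi_\bfF(t) \geq c$ is equivalent to $\om_x(a_0w^\calA)\geq c$ for every $wx\in \delta(t)$, i.e.\ to $a_0w^\calA \in \alpha(x)$ for every $wx\in \delta(t)$. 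By the path-based description of $T(\bfA)$ noted after the definition of DT $\SX$-recognizers, this is exactly the condition $t\in T(\bfA)$. Hence $\Phi_{\geq c} = T(\bfA) \in DRec(\Si,X)$.

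\textbf{Part (2).} Conversely, suppose $T \in DRec(\Si,X)$ is recognized by a DT $\SX$-recognizer $\recA$, and fix $c\in L$ with $c>0$. I would define an $\calL$-DT $\SX$-recognizer $\bfF = (\calA,a_0,\om)$ on the same DT $\Si$-algebra by
\[
\om_x(a) \,:=\, \begin{cases} c & \text{if } a\in \alpha(x),\\ 0 & \text{otherwise,}\end{cases}
\qquad (x\in X, a\in A).
\]
Applying Lemma~\ref{le:Phi(t) from run tree} again, for each $t\in \SXt$ the value $\Phi_\bfF(t)$ is the meet over the finite nonempty set $\{\om_x(a_0w^\calA) \mid wx\in \delta(t)\}\se \{0,c\}$. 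Thus $\Phi_\bfF(t)=c$ when every $a_0w^\calA$ lies in the corresponding $\alpha(x)$, i.e.\ when $t\in T(\bfA)=T$, and $\Phi_\bfF(t)=0$ otherwise. Since $c>0$, the $c$-cut $\Phi_{\bfF,\geq c}$ equals $T$, completing the proof.

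\textbf{Main obstacle.} Neither half is genuinely delicate, but the one point to keep in mind is that $DRec_\calL$ is defined via a \emph{single} initial state and via the meet operation only, so both constructions must yield the desired tree-language equivalence using only meets and the path-based formula of Lemma~\ref{le:Phi(t) from run tree}; the condition $c>0$ in (2) is precisely what guarantees that $0$ falls below the cut threshold, separating the trees outside $T$ from those inside.
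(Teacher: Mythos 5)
Your proposal is correct and follows essentially the same route as the paper: both parts use exactly the constructions $\alpha(x) = \{a \in A \mid \om_x(a) \geq c\}$ and $\om_x(a) = c$ for $a \in \alpha(x)$ (else $0$) on the same underlying DT $\Si$-algebra, which is what the paper does. Your added verification via Lemma~\ref{le:Phi(t) from run tree} just spells out the details the paper leaves implicit.
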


\begin{proof} Let $\recF$ be an $\calL$-DT $\SX$-recognizer for $\Phi$. If we define $\alpha $ by $\alpha(x) = \{a\in A \mid \om_x(a) \ge c\}$ ($x\in X$), then the DT $\SX$-recognizer  $\recA$ recognizes $\Phi_{\geq c}$.

Assume next that $T = T(\bfA)$ for some DT $\SX$-recognizer  $\recA$. If we define $\om = (\om_x)_{x\in X}$ by setting
$\om_x(a) = c$ for $a \in \alpha(x)$, and $\om_x(a) = 0$ for $a \notin \alpha(x)$, then $T$ is the $c$-cut of $\Phi_\bfF$ for $\recF$.
\end{proof}

The following example shows that if $|L| > 2$, then $\Phi^{-1}(d) := \{t\in \SXt \mid \Phi(t) = d\}$ is not always DT-recognizable for $\Phi  \in DRec_\calL(\Si,X)$ and $d\in L$.

\begin{example}\label{ex:Inv im of d not DT-rec} Let $d\in L$, $0 < d < 1$. It is easy to show that $\Phi = \{f(x,x)/d, f(x,y)/d, f(y,x)/d,f(y,y)/1\}$ is DT-recognizable but $\Phi^{-1}(d) = \{f(x,x),f(x,y),f(y,x)\}$ is not.
\end{example}

In \cite{Boz94} it was shown that $\Phi^{-1}(E)$ is a recognizable tree language for any recognizable tree series $\Phi$ over a finite semiring $S$ and any $E\se S$, and in \cite{FuVo09} this result is given for locally finite semirings. For the following proposition we need not assume that the lattice $\calL$ is locally finite.

\begin{theorem}\label{th:Phi^-1(d) is rec} If $\Phi \in DRec_\calL(\Si,X)$, then $\Phi^{-1}(E) \in Rec(\Si,X)$ for every $E \se L$. In particular, $\Phi^{-1}(d)\in Rec(\Si,X)$ for every $d\in L$.
\end{theorem}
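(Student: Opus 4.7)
The plan is to reduce the problem to the DT-recognizability of cuts (Proposition~\ref{pr:cuts are DT-recognizable}), exploiting the fact that $\ran(\Phi)$ is finite, and then invoke the standard Boolean closure properties of $Rec$.

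First, Corollary~\ref{co: range finite} gives $\ran(\Phi) = \{d_1,\ldots,d_k\}$ for some finite $k$. For any $E\se L$,
\[
\Phi^{-1}(E) \,=\, \bigcup\{\Phi^{-1}(d_i) \mid i \in [k],\, d_i \in E\},
\]
a finite union. Since $Rec$ is closed under finite unions, it suffices to show that $\Phi^{-1}(d) \in Rec(\Si,X)$ for every $d \in L$. If $d \notin \ran(\Phi)$, then $\Phi^{-1}(d) = \es \in Rec(\Si,X)$, so we may assume $d = d_i$ for some $i \in [k]$.

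Second, I would establish the identity
\[
\Phi^{-1}(d_i) \,=\, \Phi_{\geq d_i} \,\setminus\, \bigcup\{\Phi_{\geq d_j} \mid j \in [k],\, d_j > d_i\}.
\]
The inclusion $\se$ is immediate: if $\Phi(t) = d_i$ and $\Phi(t) \geq d_j$ for some $j$ with $d_j > d_i$, then $d_i \geq d_j > d_i$, a contradiction. For $\supseteq$, let $t$ belong to the right-hand side. Then $\Phi(t) \geq d_i$, so writing $\Phi(t) = d_l \in \ran(\Phi)$ we have $d_l \geq d_i$; if $d_l \neq d_i$ then $d_l > d_i$, and taking $j = l$ yields $t \in \Phi_{\geq d_l}$, contradicting the right-hand membership. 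Hence $d_l = d_i$.

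Third, by Proposition~\ref{pr:cuts are DT-recognizable} each cut $\Phi_{\geq c}$ is DT-recognizable, and by the classical inclusion $DRec \se Rec$ it is recognizable. Since $Rec$ is closed under finite unions and relative complements, the displayed expression for $\Phi^{-1}(d_i)$ lies in $Rec(\Si,X)$, and so does $\Phi^{-1}(E)$.

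I do not anticipate any real obstacle: the only thing to check carefully is the Boolean-combination identity above, which is a short lattice-theoretic calculation made possible by the finiteness of $\ran(\Phi)$. Note that the argument does not require $\calL$ to be distributive or locally finite, since Corollary~\ref{co: range finite} holds in the general setting and the cuts $\Phi_{\geq c}$ are defined for arbitrary $c \in L$.
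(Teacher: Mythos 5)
Your proof is correct, but it follows a genuinely different route from the paper's. The paper constructs an explicit NDT recognizer for $\Phi^{-1}(d)$ on the state set $A\times D_\om$, where the nondeterminism guesses, at each node, a decomposition $c = c_1\land\ldots\land c_m$ of the target value into values for the subtrees; the only closure property of $Rec$ it invokes is closure under finite unions. You instead observe that $\Phi^{-1}(d_i) = \Phi_{\geq d_i} \setminus \bigcup\{\Phi_{\geq d_j} \mid d_j > d_i\}$ with the union taken over the finite set $\ran(\Phi)$, and then combine Proposition~\ref{pr:cuts are DT-recognizable}, the inclusion $DRec \se Rec$, and the Boolean closure of $Rec$. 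Your lattice identity is verified correctly (it uses only that $\geq$ is a partial order, so no distributivity or comparability of values is needed), and the reduction to $d \in \ran(\Phi)$ via Corollary~\ref{co: range finite} is exactly as in the paper. What your approach buys is brevity and a sharper structural fact: $\Phi^{-1}(d)$ is a finite Boolean combination of cuts, each of which is itself \emph{DT}-recognizable as a crisp tree language. What the paper's approach buys is a direct construction that avoids complementation of regular tree languages (which, as an effective procedure, requires a bottom-up determinization), yielding a concrete recognizer of size $|A|\cdot|D_\om|$. Both arguments work over an arbitrary nontrivial bounded lattice.
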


\begin{proof}
Let $\recF$ be an $\calL$-DT $\SX$-recognizer for $\Phi$. Since $\Phi^{-1}(E) = \Phi^{-1}(E \cap \ran(\Phi))$, $\ran(\Phi) \se D_\om$, $D_\om$ is finite and $Rec(\Si,X)$ is closed under union, it suffices to show that $\Phi^{-1}(d)\in Rec(\Si,X)$ for every $d\in D_\om$.

Let $\NA = (\frB,I,\alpha)$ be the NDT $\SX$-recognizer, where $\nalgB$ is the NDT $\Si$-algebra with $B = A\times D_\om$ and
\[
f_\frB((a,c)) = \{((a_1,c_1),\ldots,(a_m,c_m)) \mid f_\calA(a) = (a_1,\ldots,a_m), c_1\land \ldots \land c_m = c\}
\]
for all $m \in \rmr(\Si)$, $f\in \Si_m$ and $(a,c) \in B$, $I = \{(a_0,d)\}$, and $\alpha(x)  = \{(a,\om_x(a))\mid a\in A\}$ for all $x\in X$.

To prove that $T(\NA) =  \Phi^{-1}(d)$, we show by tree induction that for all $(a,c)\in B$ and $t\in \SXt$, $\Phi_{\bfF,a}(t) = c$ if and only if $t \in T(\NA,(a,c))$.
For $x\in X$, we get
  \[
  \Phi_{\bfF,a}(x) = c \:\LRa\: \om_x(a) = c \:\LRa\: (a,c) \in \alpha(x) \:\LRa\: x\in T(\NA,(a,c)).
  \]

Let $t = f(t_1,\ldots,t_m)$, and assume first that $\Phi_{\bfF,a}(t) = c$. If $f_\calA(a) = (a_1,\ldots,a_m)$ and $\Phi_{\bfF,a_i}(t_i) = c_i$ ($i = 1,\ldots,m$), then $c = c_1\land \ldots \land c_m$, and thus $((a_1,c_1),\ldots,(a_m,c_m))\in f_\frB((a,c))$. Since $t_i \in T(\NA,(a_i,c_i))$ ($i = 1,\ldots,m$), this means that $t\in T(\NA,(a,c))$.

Conversely, if $t\in T(\NA,(a,c))$, then there exist $c_1,\ldots,c_m \in D_\om$ such that $((a_1,c_1),\ldots,(a_m,c_m))\in f_\frB((a,c))$ and $t_i \in T(\NA,(a_i,c_i))$ ($i = 1,\ldots,m$). By the definition of $f_\calB$, the former fact implies that $f_\calA(a) = (a_1,\ldots,a_m)$ and $c_1 \land \ldots \land c_m = c$. On the other hand, $\Phi_{\bfF,a_i}(t_i) = c_i$ ($i = 1,\ldots,m$) by the inductive hypothesis, and therefore
$
\Phi_{\bfF,a}(t) \:=\: \Phi_{\bfF,a_1}(t_1)\land \ldots \land \Phi_{\bfF,a_m}(t_m) \:=\: c_1\land \ldots \land c_m = c
$.
\end{proof}

We now get Corollary \ref{co:ran(Phi F) can be determined} anew: $\ran(\Phi_\bfF)$ can be found by constructing for each $\Phi_\bfF^{-1}(d)$ with $d\in D_\om$ an NDT $\SX$-recognizer $\NA$ and testing whether $T(\NA) = \es$, which can be done (cf. \cite{GeSt84}, for example).

\section{Closure properties of $DRec_\calL$}\label{se:Closure properties}

We shall now consider a number of operations on fuzzy tree languages and see which ones preserve DT-recognizability. The operations are natural extensions of classical tree language operations and also similar to the corresponding operations on tree series.  Let us begin with some positive results.

\begin{proposition}\label{pr:closed under intersection} If $\Phi, \Psi \in DRec_\calL(\Si,X)$, then  $\Phi \cap \Psi \in DRec_\calL(\Si,X)$.
\end{proposition}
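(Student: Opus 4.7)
The plan is to perform a direct product construction on two $\calL$-DT recognizers for $\Phi$ and $\Psi$. Suppose $\Phi = \Phi_\bfF$ and $\Psi = \Phi_\bfG$ for $\calL$-DT recognizers $\recF$ and $\recG$. I construct $\bfH := (\calA \times \calB, (a_0,b_0), \sigma)$, where $\calA \times \calB$ is the direct product DT $\Si$-algebra defined in Section~\ref{se:Top-down algebras and recognizers}, and the final-state assignment is given by $\sigma_x((a,b)) := \om_x(a) \land \pi_x(b)$ for all $x \in X$ and $(a,b) \in A \times B$. Note the difference from the parallel product used in Lemma~\ref{le:product-recognizer} for the Equivalence Problem: there the output values live in $\calL \times \calL$ and record the two degrees separately, whereas here we fold them together using $\land$ at the leaves so that the result is a genuine $\calL$-DT recognizer.

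The key step is to verify by tree induction that
\[
\Phi_{\bfH,(a,b)}(t) \;=\; \Phi_{\bfF,a}(t) \land \Phi_{\bfG,b}(t)
\]
for every $(a,b) \in A \times B$ and $t \in \SXt$. The base case $t = x \in X$ follows directly from the definition of $\sigma_x$. For the inductive step with $t = f(t_1,\ldots,t_m)$, let $f_\calA(a) = (a_1,\ldots,a_m)$ and $f_\calB(b) = (b_1,\ldots,b_m)$, so that $f_{\calA\times\calB}((a,b)) = ((a_1,b_1),\ldots,(a_m,b_m))$. Applying the defining clause (2) of $\Phi_{\bfH,(a,b)}$, the inductive hypothesis, and commutativity/associativity of $\land$ in $\calL$ gives
\[
\Phi_{\bfH,(a,b)}(t) \;=\; \bigwedge_{i=1}^{m}\bigl(\Phi_{\bfF,a_i}(t_i) \land \Phi_{\bfG,b_i}(t_i)\bigr) \;=\; \Phi_{\bfF,a}(t) \land \Phi_{\bfG,b}(t).
\]

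Specializing to $(a,b) = (a_0,b_0)$ yields $\Phi_\bfH(t) = \Phi_\bfF(t) \land \Phi_\bfG(t) = (\Phi \cap \Psi)(t)$ for all $t \in \SXt$, so $\Phi \cap \Psi = \Phi_\bfH \in DRec_\calL(\Si,X)$. There is no real obstacle here: the construction uses only the meet operation, so distributivity of $\calL$ is not required, and determinism is preserved because the direct product of two DT $\Si$-algebras is again a DT $\Si$-algebra. The only subtlety worth flagging is that we must combine the two final-state assignments by $\land$ rather than by pairing, which is precisely what makes this proof work for $DRec_\calL$ but breaks for union (as already noted in the paper, $DRec_\calL$ is not closed under unions because one cannot introduce $\calL$-fuzzy sets of initial states in a DT recognizer).
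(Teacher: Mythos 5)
Your proposal is correct and is essentially identical to the paper's own proof: the paper also forms $\bfH = (\calA\times\calB,(a_0,b_0),\si)$ with $\si_x((a,b)) = \om_x(a)\wedge \pi_x(b)$ and states that $\Phi_\bfH = \Phi_\bfF \cap \Phi_\bfG$ is easy to verify. You have simply written out the tree induction that the paper leaves implicit.
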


\begin{proof}  Let $\recF$ and $\recG$ be any $\calL$-DT $\SX$-recognizers.  Let $\bfH = (\calA\times\calB,(a_0,b_0),\si)$ be the $\calL$-DT $\SX$-recognizer with $\si$ defined by
$\si_x((a,b)) = \om_x(a)\wedge \pi_x(b)$ ($x\in X$, $a\in A$, $b\in B$).
It is easy to verify that $\Phi_\bfH = \Phi_\bfF \cap \Phi_\bfG$. \end{proof}

By combining Proposition \ref{pr:cuts are DT-recognizable} and Lemma \ref{le:crisp drec and drec}, we get

\begin{corollary}\label{co:fuzzy cuts DT-recognzable} $\Phi_{\geq c}^{\; \chi} \in DRec_\calL(\Si,X)$ for any $\Phi \in DRec_\calL(\Si,X)$ and  $c\in L$.
\end{corollary}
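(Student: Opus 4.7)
The plan is to observe that this corollary is essentially a direct two-step combination of the two results cited in the statement (Proposition \ref{pr:cuts are DT-recognizable} and Lemma \ref{le:crisp drec and drec}), so no new construction is needed.

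First I would take any $\Phi \in DRec_\calL(\Si,X)$ and any $c\in L$, and apply Proposition \ref{pr:cuts are DT-recognizable} to conclude that the $c$-cut $\Phi_{\geq c} = \{t\in \SXt \mid \Phi(t) \geq c\}$ is a DT-recognizable (crisp) $\SX$-tree language, i.e., $\Phi_{\geq c} \in DRec(\Si,X)$. If desired, one can trace through the construction: if $\recF$ is an $\calL$-DT $\SX$-recognizer with $\Phi_\bfF = \Phi$, then setting $\alpha(x) := \{a\in A \mid \om_x(a) \geq c\}$ for each $x\in X$ yields a DT $\SX$-recognizer $\recA$ with $T(\bfA) = \Phi_{\geq c}$, because by Lemma \ref{le:Phi(t) from run tree} we have $\Phi_\bfF(t) \geq c$ iff $\om_x(a_0w^\calA) \geq c$ for every $wx \in \delta(t)$, iff $a_0w^\calA \in \alpha(x)$ for every such $wx$.

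Next I would apply Lemma \ref{le:crisp drec and drec} to the tree language $T := \Phi_{\geq c}$: since $T \in DRec(\Si,X)$, its characteristic function $T^\chi = \Phi_{\geq c}^{\;\chi}$ lies in $DRec_\calL(\Si,X)$. This yields the claim. There is really no obstacle here, as both invoked results do all the work; the only thing to note is that $c$ is an arbitrary element of $L$ (including $c = 0$, in which case $\Phi_{\geq 0} = \SXt$ and its characteristic function $\widetilde{1}$ is trivially DT-recognizable by the one-state $\calL$-DT $\SX$-recognizer with $\om_x \equiv 1$ for all $x\in X$).
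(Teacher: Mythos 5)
Your proposal is correct and is exactly the paper's argument: the paper derives this corollary precisely by combining Proposition \ref{pr:cuts are DT-recognizable} (to get $\Phi_{\geq c} \in DRec(\Si,X)$) with Lemma \ref{le:crisp drec and drec} (to lift the characteristic function back into $DRec_\calL(\Si,X)$). The extra verification you include of the cut construction, and the remark on $c=0$, are sound but not needed.
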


For any $f\in \Si_m$, the \emph{$f$-product} (or \emph{top-concatenation}) $f(\Phi_1,\ldots,\Phi_m) : \SXt \ra  L$ of $m$ $\calL$-fuzzy $\SX$-tree languages $\Phi_1,\ldots,\Phi_m$ is defined by
\[
f(\Phi_1,\ldots,\Phi_m)(t) = \left\{
                             \begin{array}{ll}
                               \Phi_1(t_1)\wedge \ldots \wedge \Phi_m(t_m) & \hbox{\text if $t = f(t_1,\ldots,t_m)$;} \\
                               0 & \hbox{\text if $\root(t) \ne f$.}
                             \end{array}
                           \right.
\]

\begin{proposition} $f(\Phi_1,\ldots,\Phi_m) \in DRec_\calL(\Si,X)$ for all $m \in \rmr(\Si)$, $f\in \Si_m$,  $\Phi_1,\ldots,\Phi_m\in DRec_\calL(\Si,X)$.
\end{proposition}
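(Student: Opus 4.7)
My plan is to build an $\calL$-DT $\SX$-recognizer $\bfF$ for $f(\Phi_1,\ldots,\Phi_m)$ by taking the disjoint union of recognizers for the $\Phi_i$, attaching a fresh initial state $q_0$ that dispatches on the root label, and using a single ``dead'' sink state to absorb all undesired computations.

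\smallskip

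In detail, for each $i \in [m]$ let $\bfF_i = (\calA_i,a_0^{(i)},\om^{(i)})$ be an $\calL$-DT $\SX$-recognizer with $\Phi_{\bfF_i} = \Phi_i$; assume the state sets $A_1,\ldots,A_m$ are pairwise disjoint. Let $q_0$ and $d$ be two new symbols, and put $A := A_1 \cup \ldots \cup A_m \cup \{q_0,d\}$. Define the top-down operations of $\calA = (A,\Si)$ by:
\begin{itemize}
  \item $g_\calA(a) := g_{\calA_i}(a)$ for $a \in A_i$ and $g \in \Si$;
  \item $f_\calA(q_0) := (a_0^{(1)},\ldots,a_0^{(m)})$;
  \item $g_\calA(q_0) := (d,\ldots,d)$ for every $g \in \Si$ with $g \ne f$;
  \item $g_\calA(d) := (d,\ldots,d)$ for every $g \in \Si$.
\end{itemize}
Set $\om_x(a) := \om^{(i)}_x(a)$ for $a \in A_i$ and $\om_x(q_0) := \om_x(d) := 0$ for all $x \in X$, and let $\bfF := (\calA,q_0,\om)$.

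\smallskip

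The verification proceeds in two steps. First, a straightforward tree induction shows that $\Phi_{\bfF,a}(t) = \Phi_{\bfF_i,a}(t)$ whenever $a \in A_i$ and $t \in \SXt$, because the subalgebra of $\calA$ on $A_i$ is (a copy of) $\calA_i$ and the final state assignments agree on $A_i$; similarly $\Phi_{\bfF,d}(t) = 0$ for every $t$, since every leaf symbol appearing in $\lr(\calA,t,d)$ is evaluated to $0$ by $\om$. Second, using Lemma~\ref{le:Phi(t) from run tree} (or the recursive definition directly) one computes $\Phi_\bfF(t) = \Phi_{\bfF,q_0}(t)$ by cases on $t$: if $t = x \in X$ then the value is $\om_x(q_0) = 0 = f(\Phi_1,\ldots,\Phi_m)(x)$; if $t = g(t_1,\ldots,t_k)$ with $g \ne f$ the value reduces to a meet of $\Phi_{\bfF,d}(t_j) = 0$ values, again agreeing with $f(\Phi_1,\ldots,\Phi_m)(t) = 0$; and if $t = f(t_1,\ldots,t_m)$ the value equals
\[
\Phi_{\bfF,a_0^{(1)}}(t_1) \wedge \ldots \wedge \Phi_{\bfF,a_0^{(m)}}(t_m) = \Phi_1(t_1) \wedge \ldots \wedge \Phi_m(t_m),
\]
which is precisely $f(\Phi_1,\ldots,\Phi_m)(t)$.

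\smallskip

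There is no real obstacle here; the only care needed is the bookkeeping that guarantees the computations started in the $i$-th component never leak into another component, which is secured by the disjointness of the $A_i$ together with the fact that $\calA$ restricted to $A_i$ coincides with $\calA_i$. The dead state $d$ is essential precisely because $\DRec_\calL$ requires a \emph{single} initial state (fuzzy initial sets are not allowed), so all ``wrong-root'' trees must be uniformly sent to a state whose acceptance value is $0$.
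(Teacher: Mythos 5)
Your construction is correct and is essentially identical to the paper's own proof: the paper likewise takes the disjoint union $A_1\cup\ldots\cup A_m\cup\{a_0,b\}$, sends $a_0$ to the tuple of component initial states on the symbol $f$ and to a sink state $b$ otherwise, and sets $\om_x$ to $0$ on $a_0$ and $b$. The paper states the verification as ``clearly'' where you spell out the tree induction, but the argument is the same.
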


\begin{proof} For each $i\in [m]$, let $\bfF_i = (\calA_i,a_i,\om_i)$ be an $\calL$-DT $\SX$-recognizer for $\Phi_i$.  Let  $\algA$ be the DT $\Si$-algebra, where $A$ is the disjoint union $A_1\cup \ldots \cup A_m \cup \{a_0,b\}$ and for any $g\in \Si$ and $a\in A$,
\[
    g_\calA(a) = \left\{
        \begin{array}{ll}
            (a_1,\ldots,a_m) & \text{ if } a = a_0 \text{ and } g=f;\\
            g_{\calA_i}(a) & \text{ if } a \in A_i \:(i\in [m]);\\
            (b,\ldots,b) & \text{ if } a = a_0 \text{ and } g \neq f, \text{ or } a = b.
        \end{array}
        \right.
\]
Furthermore, the family $\om = (\om_x)_{x\in X}$ of fuzzy sets $\om_x : A \ra  L$  is defined so that $\om_x(a) = \om_{ix}(a)$ for $a\in A_i$ ($i\in [m]$), and $\om_x(a) = 0$  if $a = a_0$  or $a = b$.
Clearly, $\Phi_\bfF = f(\Phi_1,\ldots,\Phi_m)$ for the $\calL$-DT $\SX$-recognizer $\recF$.
\end{proof}

For any $p\in \SXc$ and $\Phi \in L^{\SXt}$, $p^{-1}(\Phi)$ and $p(\Phi)$  are the $\calL$-fuzzy $\SX$-tree languages such that, for any $t\in \SXt$,  $p^{-1}(\Phi)(t) = \Phi(p(t))$ and
\[
p(\Phi)(t) = \left\{
               \begin{array}{ll}
                 \Phi(s) & \hbox{\text if $t = p(s), s\in \SXt$;} \\
                 0 & \hbox{\text if $t\notin p(\SXt)$.}
               \end{array}
             \right.
\]

\begin{proposition}\label{pr:closed under transl and inv transl} If $\Phi \in DRec_\calL(\Si,X)$, then $p^{-1}(\Phi), p(\Phi)\in DRec_\calL(\Si,X)$  for every $p\in \SXc$.
\end{proposition}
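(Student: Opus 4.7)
The plan is to construct, from a given $\calL$-DT $\SX$-recognizer $\recF$ with $\Phi_\bfF = \Phi$, explicit $\calL$-DT $\SX$-recognizers for $p^{-1}(\Phi)$ and for $p(\Phi)$, and then verify them using Lemma \ref{le:eval of products} and Lemma \ref{le:Phi(t) from run tree}.

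For $p^{-1}(\Phi)$, set $a' := p^\calA(a_0)$ and $c := \Phi_{\bfF,a_0}(p)$. Lemma \ref{le:eval of products} gives $\Phi_\bfF(p\cdot t) = c \land \Phi_{\bfF,a'}(t)$ for every $t\in\SXt$. I would take $\bfF' := (\calA,a',\om')$ with $\om'_x(a) := c \land \om_x(a)$ for all $x\in X$ and $a\in A$. Since $\delta(t)\neq\es$, Lemma \ref{le:Phi(t) from run tree} yields
\[
\Phi_{\bfF'}(t) \;=\; \bigwedge\{c\land\om_x(a'w^\calA) \mid wx\in\delta(t)\} \;=\; c\land\Phi_{\bfF,a'}(t) \;=\; p^{-1}(\Phi)(t),
\]
so $p^{-1}(\Phi) \in DRec_\calL(\Si,X)$.

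For $p(\Phi)$, the recognizer must verify that $t = p\cdot s$ for some $s\in\SXt$ and, if so, return $\Phi(s)$. Write $V_p$ for the set of positions of $p$, $v_\xi$ for the position of $\xi$, and $p|_v$ for the subtree or subcontext of $p$ rooted at $v$; set $V_p^\circ := V_p\setminus\{v_\xi\}$. Introduce a fresh reject state $b$ and put $Q := V_p^\circ \sqcup A \sqcup \{b\}$. The DT $\Si$-algebra $\calA''$ on $Q$ is defined as follows: $g_{\calA''}(a) := g_\calA(a)$ for $a\in A$; $g_{\calA''}(b) := (b,\ldots,b)$; and for $v\in V_p^\circ$ with $p|_v = g(p|_{v1},\ldots,p|_{vm})$, put $g_{\calA''}(v) := (q_1,\ldots,q_m)$, where $q_i := a_0$ if $vi = v_\xi$ and $q_i := vi$ otherwise; in every remaining case $g_{\calA''}(v) := (b,\ldots,b)$. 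Take $q_0 := a_0$ if $p = \xi$, and $q_0 := \ve$ (the root position of $p$) otherwise. Finally, set $\om''_x(a) := \om_x(a)$ for $a\in A$, $\om''_x(b) := 0$, and, for $v\in V_p^\circ$, $\om''_x(v) := 1$ if $p|_v = x$, else $\om''_x(v) := 0$.

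Correctness of $\bfF''$ is the main obstacle. If $t = p\cdot s$, the run of $\calA''$ on $t$ from $q_0$ traces the structure of $p$ in its position states, switches to $a_0$ at the subtree substituted for $\xi$, and then behaves like $\run(\calA,s,a_0)$. By Lemma \ref{le:Phi(t) from run tree}, the meet defining $\Phi_{\bfF''}(t)$ splits into a contribution from the non-$\xi$ leaves of $p$ (each yielding $1$, since $p|_v = x$ at such a leaf) and a contribution from the leaves of $s$ (yielding $\Phi_\bfF(s) = \Phi(s)$), so $\Phi_{\bfF''}(t) = \Phi(s) = p(\Phi)(t)$. Otherwise, somewhere during the walk down $p$ a mismatch occurs: either the run enters $b$ along some path (so the corresponding leaf of $t$ contributes $0$), or a leaf $x$ of $t$ is reached in a state $v\in V_p^\circ$ with $p|_v\neq x$ (contributing $0$ directly); hence $\Phi_{\bfF''}(t) = 0 = p(\Phi)(t)$. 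The technical care lies in enumerating exactly these failure modes and confirming that the $0$ always propagates to the final meet, completing the proof that $p(\Phi)\in DRec_\calL(\Si,X)$.
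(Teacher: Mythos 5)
Your proposal is correct and follows essentially the same route as the paper: for $p^{-1}(\Phi)$ it shifts the initial state to $p^\calA(a_0)$ and meets the final weights with $\Phi_{\bfF,a_0}(p)$, exactly as in the paper's proof, and for $p(\Phi)$ it builds the same kind of product-checking automaton with a sink reject state, the only cosmetic difference being that you index the auxiliary states by positions of $p$ where the paper uses the subtrees of $p(a_0)$ (your version is in fact slightly cleaner about identifying the $\xi$-position with the state $a_0$). The level of detail in your correctness argument for the $p(\Phi)$ case matches the paper's, which likewise settles it by describing the run's failure modes informally.
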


\begin{proof} Let $\Phi = \Phi_\bfF$ for $\recF$ and let $p\in \SXc$.
First we construct an $\calL$-DT $\SX$-recognizer for $p^{-1}(\Phi)$. Let $b := p^\calA(a_0)$ and   $d := \Phi_{\bfF,a_0}(p)$.
Then we define $\pi = (\pi_x)_{x\in X}$ by $\pi_x(a) = \om_x(a)\land d$ \: ($x\in X, a\in A$), and let $\bfG = (\calA,b,\pi)$. For any $t\in \SXt$,
\begin{align*}
       \Phi_{\bfG}(t) &= \; \bigwedge\{\pi_x(a) \mid (x,a)\in \lr(\calA,t,b)\} = \;  \bigwedge\{\om_x(a) \mid (x,a)\in \lr(\calA,t,b)\}\land d\\
       &= \; \Phi_{\bfF,b}(t) \land \Phi_{\bfF,a_0}(p) = \Phi_{\bfF,a_0}(p\cdot t) = \; \Phi_\bfF(p(t)) = \; p^{-1}(\Phi)(t).
\end{align*}

 Let us now define an $\calL$-DT $\SX$-recognizer $\recG$ for $p(\Phi)$.
 In the DT $\Si$-algebra $\algB$, $B$ is the disjoint union of $A$, $\sub(p(a_0))$ and $\{b\}$, where $p(a_0) \in T_\Si(X\cup \{a_0\})$ is obtained from $p$ by substituting $a_0$ for $\xi$. For any $f\in \Si_m$ ($m \in \rmr(\Si)$), define  $f_\calB : B \ra  B^m$ as follows:
\[
    f_\calB(a) =
        \left\{
            \begin{array}{ll}
            f_\calA(a) &\text{ if } a\in A;\\
            (r_1,\ldots,r_m) &\text{ if } a = f(r_1,\ldots,r_m) (\in \sub(p(a_0)));\\
            (b,\ldots,b) &\text{ if } a = g(r_1,\ldots,r_k) \text{ for some } g\in \Si, g\neq f;\\
            (b,\ldots,b) &\text{ if } a = b.
           \end{array}
    \right.
\]
 The initial state is $b_0 = p(a_0)$, and for each $x\in X$, $\pi_x$  is defined by setting
$\pi_x(x) := 1$ if $x\in \sub(p(a_0))$, $\pi_x(r) := 0$ for any $r\in \sub(p(a_0)), r \neq x$, and $\pi_x(a) := \om_x(a)$ for $a\in A$, and $\pi_x(b) = 0$.

The recognizer $\bfG$ operates on any input tree $t\in \SXt$ as follows. Starting at the root in state $p(a_0)$, it first checks whether $t$ is of the form $p(s)$. If not, it reaches some node in state $b$ or a leaf in a state $r\in \sub(p(a_0))$ distinct from the label of that leaf. In both cases, $\Phi_\bfG(t) = 0$. On the other hand, if $t = p(s)$ for some $s\in \SXt$, then $\bfG$ reaches the root of (the displayed occurrence of) $s$ in state $a_0$ and simulates then $\bfF$ on $s$. Since $\bfG$ assigns the value $1$ to all the $X$-labeled leaves of $p$, we get $\Phi_\bfG(t) = \Phi_\bfF(s) = p(\Phi)(t)$.
\end{proof}

A {\em tree homomorphism} $h: \SXt \to T_{\Omega}(Y)$ is defined (cf. \cite{TATA,Eng75,GeSt84,GeSt97,Tha73}) by some given mappings $h_X:X \to T_{\Omega}(Y)$ and $h_m: \Sigma_m \to T_{\Omega}(Y \cup \Xi_m)$ for each $m \in \rmr(\Si)$, where $\Xi_m = \{\xi_1,\ldots,\xi_m\}$ is a set of variables disjoint from the other alphabets considered, as follows:

1. $h(x) = h_X(x)$ for $x \in X$.

2. $h(f(t_1, \dots,t_m)) =  h_m(f)[\xi_1 \la h(t_1), \dots ,\xi_m \la h(t_m)]$ for all $m \in \rmr(\Si)$, $f \in \Sigma_m$ and $t_1$, \dots, $t_m \in \SXt$, where  $h_m(f)[\xi_1 \la h(t_1), \dots ,\xi_m \la h(t_m)]$ is  obtained from $h_m(f)$ by replacing each $\xi_i$ by $h(t_i)$ ($i = 1,\ldots,m$).

The \emph{image} $h(\Phi): T_{\Omega}(Y) \to L$ of an $\calL$-fuzzy $\SX$-tree language $\Phi$ and the \emph{inverse image} $h^{-1}(\Psi): T_{\Sigma}(X) \to L$ of an $\calL$-fuzzy $\Omega Y$-tree language $\Psi$ under $h$  are defined by
\[
    h(\Phi)(t) = \bigvee\{ \Phi(s) \mid s\in \SXt, h(s) = t\} \quad (t \in T_{\Omega}(Y)),
\]
and $h^{-1}(\Psi)(t) = \Psi(h(t)) \: (t \in \SXt)$.

For $h(\Phi)$ to be defined even for a non-complete $\calL$, it suffices to assume that $\ran(\Phi)$ is finite or that $h^{-1}(t)$ is finite for every $t\in \OYt$. The latter condition holds for any \emph{nondeleting} tree homomorphism.
Recall that $Rec$ and $DRec$ are closed under inverse tree homomorphisms \cite{GeSt84,GeSt97,Jur95}.

\begin{theorem}\label{th:inv tree homom} For any tree homomorphism $h:\SXt \to T_{\Omega}(Y)$, if $\Phi \in DRec_\calL(\Omega,Y)$, then $h^{-1}(\Phi) \in DRec_\calL(\Sigma,X)$.
\end{theorem}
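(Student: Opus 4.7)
The plan is to construct, from an $\calL$-DT $\Omega Y$-recognizer $\bfF = (\calA, a_0, \om)$ for $\Phi$, an $\calL$-DT $\SX$-recognizer $\bfG$ such that $\Phi_\bfG(t) = \Phi_\bfF(h(t))$ for every $t \in \SXt$. The idea is that $\bfG$ reading $t$ should simulate $\bfF$ reading $h(t)$ by compressing each ``block'' $h_m(f)$ into a single transition. Two complications force the construction: first, the tree $h_m(f)$ may contain several occurrences of a given variable $\xi_i$, so $\bfF$ can arrive at the $i$-th subtree $h(t_i)$ in several different states; secondly, the $Y$-labeled leaves of $h_m(f)$ contribute weights to $\Phi_\bfF(h(t))$ even though they lie at internal positions of $h(t)$.

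To deal with both, I take the states of $\bfG$ to be pairs $(H, d)$, with $H \se A$ tracking the set of possible $\calA$-states at the current node and $d \in L$ recording the weight accumulated so far. For each $f \in \Si_m$ and $H \se A$, set
\[
C(H, f) := \bigwedge\{\om_y(a p^\calA) \mid a \in H,\ p y \in \delta(h_m(f)),\ y \in Y\},
\]
\[
K_i(H, f) := \{a p^\calA \mid a \in H,\ p \xi_i \in \delta(h_m(f))\} \quad (i \in [m]),
\]
define the transition
\[
f_\calB((H, d)) := \bigl((K_1(H, f), d \wedge C(H, f)), \ldots, (K_m(H, f), d \wedge C(H, f))\bigr),
\]
put $\pi_x((H, d)) := d \wedge \bigwedge_{a \in H} \Phi_{\bfF, a}(h_X(x))$ for $x \in X$, and take $(\{a_0\}, 1)$ as the initial state. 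The coordinate $d$ ranges only over the meet-subsemilattice of $\calL$ generated by the finite set $\{1\} \cup \{C(H, f) \mid H \se A,\ f \in \Si\}$, which is itself finite; no distributivity or completeness of $\calL$ is needed.

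The verification is a tree induction establishing
\[
\Phi_{\bfG, (H, d)}(t) \;=\; d \wedge \bigwedge_{a \in H} \Phi_{\bfF, a}(h(t))
\]
for all reachable states $(H, d)$ and all $t \in \SXt$. The base case $t = x$ holds by the definition of $\pi_x$, since $h(x) = h_X(x)$. For $t = f(t_1, \ldots, t_m)$, Lemma \ref{le:Phi(t) from run tree} applied to $h(t) = h_m(f)[\xi_1 \la h(t_1), \ldots, \xi_m \la h(t_m)]$ splits $\Phi_{\bfF, a}(h(t))$ into the $Y$-leaf contribution $C(\{a\}, f)$ and, for each $i$, the meet of $\Phi_{\bfF, b}(h(t_i))$ over $b \in K_i(\{a\}, f)$; meeting further over $a \in H$ yields the recursion matching the transition of $\bfG$, and the induction hypothesis applied to each $t_i$ closes the argument. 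Specializing to $(H, d) = (\{a_0\}, 1)$ gives $\Phi_\bfG(t) = \Phi_\bfF(h(t)) = h^{-1}(\Phi)(t)$, so $h^{-1}(\Phi) \in DRec_\calL(\Si, X)$. The main conceptual obstacle is the separation of the ``root-level'' weights $C(H, f)$ from the subtree contributions, which is handled by carrying the accumulated value $d$ in the state, in the spirit of the construction used in Proposition \ref{pr:NDT equiv to general NDT}.
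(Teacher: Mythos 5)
Your construction coincides with the paper's: the same subset-plus-accumulated-weight states $(H,d)$, the same transitions (your $K_i(H,f)$ and $C(H,f)$ are the paper's $H_i$ and $d_f$, written in terms of paths rather than $\lr$), the same final-state assignment and initial state $(\{a_0\},1)$, and the same inductive invariant $\Phi_{\bfG,(H,d)}(t) = d \land \bigwedge\{\Phi_{\bfF,a}(h(t)) \mid a \in H\}$. The proof is correct and essentially identical to the one in the paper.
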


\begin{proof} Let $\bfF = (\calA,a_0,\om)$ be an $\calL$-DT $\OY$-recognizer for $\Phi$ and $\calA = (A,\Omega)$ be its underlying DT $\Omega$-algebra. For all $a \in A$ and $t = f(t_1,\ldots,t_m)$ in $\SXt$, $\lr(\calA,h(t),a)$ consists of $\lr(\calA, h_m(f), a) \cap (Y \times A)$ and of the sets
\[
    \bigcup\{\, \lr(\calA,h(t_i),a') \mid (\xi_i,a') \in \lr(\calA,h_m(f),a) \,\} \quad (i \in [m]).
\]
Let $D := D_\om \cup \{1\}$ and $B := \wp(A) \times D$. The DT $\Si$-algebra $\calB = (B,\Sigma)$ is defined by setting for any $m \in \rSi$, $f \in \Sigma_m$, $H \subseteq A$, and $d \in D$,
\[
    f_{\calB}((H,d)) = \left( (H_1,d\land d_f), \dots, (H_m, d \land d_f) \right),
\]
where $H_i := \{ a' \mid (\xi_i,a') \in \lr(\calA,h_m(f),a), a\in H\}$ ($i \in [m]$), and
\[
    d_f := \bigwedge\{ \omega_y(b) \mid (y,b) \in \lr(\calA,h_m(f),a),\, a \in H,\, y \in Y \}.
\]
  \item Let $b_0 = (\{a_0\},1)$, and define $\pi = (\pi_x)_{x \in X}$ by
\[
    \pi_x((H,d)) = d \land \bigwedge\{ \Phi_{\bfF,a}(h(x)) \mid a \in H\} \qquad (x\in X, (H,d) \in B).
\]

To prove that $\Phi_{\bfG} = h^{-1}(\Phi)$ for the $\calL$-DT $\Sigma X$-recognizer $\bfG = (\calB, b_0,\pi)$, we show by tree induction that
\[
     \Phi_{\bfG,(H,d)}(t) = d \land \bigwedge\{\Phi_{\bfF,a}(h(t))\mid a \in H\}
\]
for any $(H,d) \in B$  and $t \in \SXt$.

The case $t \in X$ is obvious, so let $t = f(t_1, \dots, t_m)$. If $f_{\calB}((H,d)) = \left( (H_1,d\land d_f), \dots, (H_m, d \land d_f) \right)$, where $H_1,\ldots,H_m$ and $d_f$ are defined as above, then by the inductive assumption
\[
    \Phi_{\bfG,(H_i,d\land d_f)}(t_i) = d \land d_f \land \bigwedge \{\Phi_{\bfF,a'}(h(t_i)) \mid a' \in H_i \}
\]
for every $i \in [m]$. Hence

\begin{align*}
    &\Phi_{\bfG,(H,d)}(t) = \bigwedge_{i\in [m]} \Phi_{\bfG,(H_i,d \land d_f)}(t_i) \\
    &= d \land d_f \land \bigwedge_{i\in [m]} \bigwedge  \{\Phi_{\bfF,a'}(h(t_i)) \mid a'\in H_i\}\\
     &=d \land d_f \land \bigwedge_{i\in [m]} \bigwedge_{a \in H}\bigwedge \{\Phi_{\bfF,a'}(h(t_i)) \mid (\xi_i,a') \in \lr(\calA,h_m(f),a)\} \\
     &=d \land d_f \land \bigwedge_{a \in H}\bigwedge_{i\in [m]} \bigwedge \{\Phi_{\bfF,a'}(h(t_i)) \mid (\xi_i,a') \in \lr(\calA,h_m(f),a) \}\\
     &=d \land d_f \land
      \bigwedge_{a \in H}\bigwedge_{i\in [m]} \bigwedge \{\omega_y(b) \mid (y,b) \in \lr(\calA,h(t_i),a'),
      (\xi_i,a') \in \lr(\calA,h_m(f),a) \}\\
     &=d \land \bigwedge_{a \in H} \{ \bigwedge\{ \omega_y(b) \mid (y,b) \in \lr(\calA,h_m(f),a),\, y \in Y \} \land\\
     &\qquad \bigwedge_{i\in [m]} \bigwedge \{\omega_y(b) \mid (y,b) \in \lr(\calA,h(t_i),a'),\, (\xi_i,a') \in \lr(\calA,h_m(f),a) \} \}\\
     &=d \land \bigwedge_{a\in H} \bigwedge\{ \omega_y(b) \mid (y,b) \in \lr(\calA,h(t),a)\} = d \land \bigwedge_{a \in H} \Phi_{\bfF,a}(h(t)).
\end{align*}
For $(H,d) = (\{a_0\},1) = b_0$, we get $\Phi_\bfG(t) = \Phi_{\bfG,b_0}(t) = \Phi_{\bfF,a_0}(h(t)) = h^{-1}(\Phi)(t)$.
\end{proof}

For tree homomorphic images the closure properties of $DRec$ are quite limited (cf. \cite{Jur95}), but let us note an exception.
A tree homomorphism $h : \SXt \ra  \OYt$ is \emph{alphabetic} if for all $m \in \rmr(\Si)$ and $f\in \Si_m$, $h_m(f) = g(\xi_1,\ldots,\xi_m)$ for some $g\in \Om_m$, and $h_X(x) \in Y$ for every $x\in X$.

\begin{proposition}\label{pr:closure under inj alph tree hom} If $h : \SXt \ra  \OYt$ is an injective alphabetic tree homomorphism and  $\Phi \in DRec_\calL(\Si,X)$, then $h(\Phi) \in DRec_\calL(\Om,Y)$.
\end{proposition}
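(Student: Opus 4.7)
The plan is to construct an $\calL$-DT $\Om Y$-recognizer $\bfG$ that mimics a given $\calL$-DT $\SX$-recognizer $\recF$ for $\Phi$ on trees lying in the image $h(\SXt)$, and returns the value $0$ on all other trees. First I would observe that because $h$ is alphabetic, it is determined by ``renaming'' maps $\Si_m \to \Om_m, f \mapsto g$ (where $h_m(f) = g(\xi_1,\ldots,\xi_m)$) together with $h_X : X \to Y$; and since $h$ is injective on trees, each of these maps must itself be injective, for otherwise one could immediately exhibit two distinct trees with the same image. From this it follows that $t \in h(\SXt)$ iff every $m$-ary internal symbol of $t$ lies in the image of the corresponding renaming map and every leaf of $t$ lies in $h_X(X)$, and in that case the preimage is unique. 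Hence $h(\Phi)(t) = \Phi(s)$ when $t = h(s)$, and $h(\Phi)(t) = 0$ otherwise.

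I would then set $\bfG = (\calB, a_0, \pi)$, reusing the initial state $a_0$ of $\bfF$, with $\calB = (A \cup \{b_\perp\}, \Om)$ where $b_\perp \notin A$ is a fresh ``sink'' state. The top-down operations are defined by $g_\calB(a) := f_\calA(a)$ for $a \in A$ whenever $g \in \Om_m$ satisfies $h_m(f) = g(\xi_1,\ldots,\xi_m)$ for the (necessarily unique) $f \in \Si_m$, and $g_\calB(a) := (b_\perp,\ldots,b_\perp)$ for every other pair (in particular always for $a = b_\perp$). The final-state family is $\pi_y(a) := \om_x(a)$ when $y = h_X(x)$ for the (unique) $x \in X$ and $a \in A$, and $\pi_y(a) := 0$ if $y \notin h_X(X)$ or $a = b_\perp$.

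To verify $\Phi_\bfG = h(\Phi)$ I would apply Lemma~\ref{le:Phi(t) from run tree} and distinguish two cases. If $t = h(s)$, a straightforward tree induction on $s$ yields $\Phi_{\bfG,a}(h(s)) = \Phi_{\bfF,a}(s)$ for every $a \in A$: the base case uses $\pi_{h_X(x)}(a) = \om_x(a)$, and the inductive step exploits $g_\calB(a) = f_\calA(a)$ whenever $g$ is the image of $f$. Setting $a = a_0$ then gives $\Phi_\bfG(t) = \Phi_\bfF(s) = h(\Phi)(t)$. If $t \notin h(\SXt)$, some node of $t$ carries a symbol outside the image of $h$: if it is an internal symbol, $\bfG$ enters $b_\perp$ at every descendant and every leaf below contributes a factor $\pi_y(b_\perp) = 0$; if it is a leaf $y \notin h_X(X)$, that leaf already contributes $\pi_y(a) = 0$. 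Either way the meet of Lemma~\ref{le:Phi(t) from run tree} collapses to $0 = h(\Phi)(t)$.

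The main obstacle is really just the clean pointwise characterisation of $h(\SXt)$, which is where the injectivity of the component relabelling maps is used; once that is established, everything afterwards is a routine tree induction on the recognizer's evaluation.
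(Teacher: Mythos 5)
Your proposal is correct and follows essentially the same route as the paper's proof: the same sink-state construction (the paper calls it $\dag$), the same reuse of $a_0$, the same definition of the operations and final weights, and the same two-case verification by tree induction. The only addition is your explicit observation that injectivity of $h$ forces the component relabelling maps $h_m$ and $h_X$ to be injective, which the paper uses implicitly; this is a welcome clarification but not a different argument.
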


\begin{proof} Let $\recF$ be an $\calL$-DT $\SX$-recognizer for  $\Phi$. We construct an $\calL$-DT $\OY$-recognizer $\recG$ for $h(\Phi)$ as follows.
\begin{enumerate}
  \item Let $B := A \cup \{\dag\}$ ($\dag \notin A$) and $b_0 := a_0$.
  \item The operations $g_\calB : B \ra  B^m$ ($m \in \rmr(\Om)$, $g\in \Om_m$) of  $\calB = (B,\Om)$ are defined as follows. If there is an $f\in \Si_m$ for which $h_m(f) = g(\xi_1,\ldots,\xi_m)$, then $g_\calB(a) := f_\calA(a)$ for every $a\in A$, and $g_\calB(\dag) = (\dag,\ldots,\dag)$. If $g \notin h_m(\Si_m)$, then $g_\calB(b) := (\dag,\ldots,\dag)$ for every $b\in B$.
  \item Consider any $y\in Y$. If $y = h_X(x)$ for some $x \in X$, then $\pi_y(a) := \om_x(a)$ for every $a\in A$, and $\pi_y(\dag) := 0$. Otherwise, $\pi_y(b) := 0$ for every $b\in B$.
\end{enumerate}
For any $t\in \OYt$ there are two possibilities. Either $t = h(s)$ for a unique $s \in \SXt$ and $h(\Phi)(t) = \Phi(s)$, or $t \notin h(\SXt)$ and $h(\Phi)(t) = 0$. Moreover, it is clear that an $\OY$-tree $t$ is in $h(\SXt)$ if and only if every subtree of $t$ is also in the range of $h$. Thus,  $\Phi_\bfG = h(\Phi)$ follows from
\begin{itemize}
  \item[(a)] $\Phi_{\bfG,a}(h(s)) = \Phi_{\bfF,a}(s)$ for all $s\in \SXt$ and $a\in A$, and
  \item[(b)] $\Phi_\bfG(t) = 0$ for every $t \in \OYt \setminus h(\SXt)$.
\end{itemize}
Claim (a) can be verified by tree induction, and (b) follows directly from the definition of $\bfG$.
\end{proof}

The multiplication  of  $\Phi$ by a scalar $c\in L$ yields the $\calL$-fuzzy $\SX$-tree language $c.\Phi : \SXt \ra  L$  defined by $(c.\Phi)(t) = c\land \Phi(t)$.

\begin{proposition}\label{pr:scalar product} If $\Phi \in DRec_\calL(\Si,X)$, then $c.\Phi \in DRec_\calL(\Si,X)$ for every $c \in L$.
\end{proposition}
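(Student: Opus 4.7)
The plan is to take an $\calL$-DT $\SX$-recognizer $\bfF = (\calA, a_0, \om)$ for $\Phi$ and obtain a recognizer $\bfG = (\calA, a_0, \pi)$ for $c.\Phi$ simply by damping every final state value by $c$. Concretely, I would define
\[
\pi_x(a) := c \land \om_x(a) \qquad (x \in X,\, a \in A),
\]
leaving the underlying DT $\Si$-algebra $\calA$ and the initial state $a_0$ unchanged.

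To verify that $\Phi_\bfG = c.\Phi$, I would invoke Lemma~\ref{le:Phi(t) from run tree}, which expresses the value $\Phi_\bfG(t)$ as a meet over the paths of $t$. For any $t \in \SXt$, the path set $\delta(t)$ is nonempty (every tree has at least one leaf), so the meet distributes over $c$:
\begin{align*}
\Phi_\bfG(t) &= \bigwedge\{\pi_x(a_0 w^\calA) \mid wx \in \delta(t)\} \\
             &= \bigwedge\{c \land \om_x(a_0 w^\calA) \mid wx \in \delta(t)\} \\
             &= c \land \bigwedge\{\om_x(a_0 w^\calA) \mid wx \in \delta(t)\} \\
             &= c \land \Phi_\bfF(t) \;=\; (c.\Phi)(t).
\end{align*}

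There is no real obstacle here: the construction is a single-line modification of the final state assignment, and the only thing one must be a bit careful about is that the meet identity used in the computation requires the indexing set to be nonempty, which is automatic from $\leaf(t) \ne \es$. No distributivity or completeness assumption on $\calL$ is needed, since only finite meets in $\calL$ are invoked.
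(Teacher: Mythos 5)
Your proposal is correct and uses exactly the construction in the paper: keep $\calA$ and $a_0$, and set $\pi_x(a) = c \land \om_x(a)$. The paper simply declares the resulting equality $\Phi_\bfG = c.\Phi$ obvious, whereas you spell out the verification via Lemma~\ref{le:Phi(t) from run tree}; your added remark about the nonemptiness of $\delta(t)$ is a correct and harmless refinement.
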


\begin{proof} If $\Phi$ is recognized by the $\calL$-DT $\SX$-recognizer $\recF$, then $c.\Phi$ is obviously recognized by $\bfG = (\calA,a_0,\pi)$, when $\pi$ is defined by $\pi_x(a) = c\land \om_x(a)$ ($x\in X, a \in A$).
\end{proof}

Finally, we note that DT-recognizability is preserved under certain changes of the valuation lattice. For any mapping $\psi : L \ra  K$ from a lattice $\calL = (L,\leq)$ to a lattice $\calK = (K,\leq)$ and any $\calL$-fuzzy $\SX$-tree language $\Phi$, let $\psi(\Phi) : \SXt \ra  K$ be the $\calK$-fuzzy $\SX$-tree language defined by $\psi(\Phi)(t) = \psi(\Phi(t))$ ($t\in \SXt$). It is known that the recognizability of tree series over semirings is preserved under semiring homomorphisms (cf. \cite{FuVo09}). Here it suffices to assume that the map $\psi$ preserves (finite) meets.

\begin{proposition}\label{pr:change of lattice} Let $\calK = (K,\leq)$ and $\calL = (L,\leq)$ be any nontrivial bounded lattices.
\begin{itemize}
  \item[{\rm (a)}] If $\calL$ is a sublattice of $\calK$, then $ DRec_\calL(\Si,X) \se DRec_\calK(\Si,X)$.
  \item[{\rm (b)}] If $\psi : L \ra  K$ is a $\land$-morphism, then $\Phi \in DRec_\calL(\Si,X)$ implies $\psi(\Phi) \in DRec_{\calK}(\Si,X)$.
\end{itemize}
\end{proposition}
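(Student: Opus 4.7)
The plan is to handle (b) first and deduce (a) as a special case, because the inclusion map of a sublattice is trivially a $\land$-morphism (meets in $\calL$ coincide with meets in $\calK$ when $\calL$ is a sublattice of $\calK$).

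For (b), given $\Phi \in DRec_\calL(\Si,X)$ recognized by an $\calL$-DT $\SX$-recognizer $\recF$, I will construct a $\calK$-DT $\SX$-recognizer $\bfG = (\calA,a_0,\pi)$ with the same underlying DT $\Si$-algebra and same initial state, whose final state assignment is obtained by composition with $\psi$, namely $\pi_x(a) := \psi(\om_x(a))$ for all $x\in X$ and $a\in A$. The claim is then that $\Phi_\bfG = \psi(\Phi_\bfF) = \psi(\Phi)$.

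The verification rests on Lemma \ref{le:Phi(t) from run tree}, which expresses the degree of acceptance as a finite meet along the paths of the input tree. For any $t\in \SXt$ we have
\[
\Phi_\bfF(t) \;=\; \bigwedge\{\om_x(a_0w^\calA) \mid wx \in \delta(t)\},
\]
and the analogous formula holds for $\Phi_\bfG$ using $\pi_x$ in place of $\om_x$. Since $\delta(t)$ is a finite set and $\psi$ preserves finite meets by assumption, I can push $\psi$ inside the meet:
\[
\psi(\Phi_\bfF(t)) \;=\; \bigwedge\{\psi(\om_x(a_0w^\calA)) \mid wx \in \delta(t)\} \;=\; \bigwedge\{\pi_x(a_0w^\calA) \mid wx \in \delta(t)\} \;=\; \Phi_\bfG(t).
\]
This establishes $\psi(\Phi) = \Phi_\bfG \in DRec_\calK(\Si,X)$, which is (b). For (a), applying (b) to the inclusion map $\psi : L \hookrightarrow K$ (a $\land$-morphism since $\calL$ is a sublattice of $\calK$) yields $\Phi = \psi(\Phi) \in DRec_\calK(\Si,X)$.

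There is no real obstacle in this argument: the only thing one needs is that the DT computation of the degree of acceptance uses nothing beyond finite meets of $\om$-values along paths, which is exactly the content of Lemma \ref{le:Phi(t) from run tree}. If one preferred to avoid that lemma, an equally short tree induction on $t$ showing $\Phi_{\bfG,a}(t) = \psi(\Phi_{\bfF,a}(t))$ for every $a\in A$ would also work, the induction step being immediate from clause (2) of the definition of $\Phi_{\bfF,a}$ together with the $\land$-morphism property of $\psi$.
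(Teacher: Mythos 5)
Your proof is correct and matches the paper's argument for (b) almost verbatim: the same recognizer $\bfG = (\calA,a_0,\pi)$ with $\pi_x(a) = \psi(\om_x(a))$, verified by pushing $\psi$ through the finite (nonempty) meet given by Lemma \ref{le:Phi(t) from run tree}. The only cosmetic difference is that you obtain (a) as the special case of (b) for the inclusion $L \hookrightarrow K$, whereas the paper simply observes that an $\calL$-DT recognizer can be regarded directly as a $\calK$-DT recognizer; both routes are immediate.
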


\begin{proof}
Any $\calL$-DT $\SX$-recognizer may be regarded as a $\calK$-DT $\SX$-recognizer, assuming -- as we implicitly have done -- that any map $S \ra L$ can be regarded also as a map from $S$ to $K$.

Let $\Phi$ be recognized by the $\calL$-DT recognizer $\recF$. To prove (b),  we define $\pi$ by $\pi_x(a) = \psi(\om_x(a))$ ($x\in X$, $a\in A$). Then the $\calK$-DT $\SX$-recognizer $\bfG = (\calA,a_0,\pi)$ recognizes $\psi(\Phi)$. Indeed,
\begin{align*}
    \Phi_\bfG(t) &= \bigwedge\{\pi_x(a) \mid (x,a) \in \lr(\calA,t,a_0)\}
    = \bigwedge\{\psi(\om_x(a)) \mid  (x,a) \in \lr(\calA,t,a_0)\}\\
    &= \psi(\bigwedge\{\om_x(a) \mid  (x,a) \in \lr(\calA,t,a_0)\})
    = \psi(\Phi_\bfF(t)) = \psi(\Phi)(t)
\end{align*}
for every $t \in \SXt$.
\end{proof}

As one may expect, if $DRec$ is not closed under some operation, then $DRec_\calL$ is not closed under the `corresponding' $\calL$-fuzzy operation. This can be formalized as follows.

\begin{lemma}\label{le:metatheorem} Let $(T_1,\ldots,T_n) \mapsto \calO(T_1,\ldots,T_n)$ ($n\geq 1$) be an $n$-ary tree language operation, and assume that $(\Phi_1,\ldots,\Phi_n) \mapsto \wh{\calO}(\Phi_1,\ldots,\Phi_n)$ is an $n$-ary operation on $\calL$-fuzzy tree languages such that $\calO(T_1,\ldots,T_n)^\chi = \wh{\calO}(T_1^\chi,\ldots,T_n^\chi)$ for all tree languages $T_1,\ldots,T_n$ (over appropriate alphabets). If $DRec$ is not closed under $\calO$, then $DRec_\calL$ is not closed under $\wh{\calO}$.
\end{lemma}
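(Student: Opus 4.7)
The plan is to argue by contraposition, and the key leverage is Lemma \ref{le:crisp drec and drec}, which identifies DT-recognizable tree languages with crisp DT-recognizable $\calL$-fuzzy tree languages via $T \mapsto T^\chi$ and $\Phi \mapsto \supp(\Phi)$.

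First I would assume, for the sake of contradiction, that $DRec_\calL$ is closed under $\wh{\calO}$, and take arbitrary tree languages $T_1,\ldots,T_n$ in $DRec$ (over the relevant alphabets). By Lemma \ref{le:crisp drec and drec}, each $T_i^\chi$ lies in $DRec_\calL$. Applying the closure assumption, $\wh{\calO}(T_1^\chi,\ldots,T_n^\chi) \in DRec_\calL$. By the hypothesis of the lemma,
\[
\wh{\calO}(T_1^\chi,\ldots,T_n^\chi) \;=\; \calO(T_1,\ldots,T_n)^\chi,
\]
so this characteristic function is in $DRec_\calL$. Since it is crisp, the other direction of Lemma \ref{le:crisp drec and drec} (together with Remark \ref{re:supp and char}, which gives $\supp(T^\chi) = T$) yields $\calO(T_1,\ldots,T_n) = \supp(\calO(T_1,\ldots,T_n)^\chi) \in DRec$. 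Since the $T_i$ were arbitrary, this contradicts the assumption that $DRec$ is not closed under $\calO$.

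There is really no obstacle here: the whole argument is a bookkeeping exercise in translating between a crisp fuzzy tree language and its support via Lemma \ref{le:crisp drec and drec}. The only point worth noting is that the hypothesis $\calO(T_1,\ldots,T_n)^\chi = \wh{\calO}(T_1^\chi,\ldots,T_n^\chi)$ must be assumed for \emph{all} (appropriately typed) tree languages $T_1,\ldots,T_n$, not merely those in $DRec$; this is what allows the contrapositive to produce a single $n$-tuple witnessing failure of closure. Dually, one may state and prove the lemma in direct form: if $DRec_\calL$ is closed under $\wh{\calO}$ then $DRec$ is closed under $\calO$, with exactly the same chain of implications.
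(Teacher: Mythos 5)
Your proof is correct and is essentially the paper's argument in contrapositive form: both hinge on Lemma \ref{le:crisp drec and drec} to translate between a crisp $\calL$-fuzzy tree language and its support, combined with the hypothesis $\calO(T_1,\ldots,T_n)^\chi = \wh{\calO}(T_1^\chi,\ldots,T_n^\chi)$. The paper simply runs the same chain directly on a single witness tuple $T_1,\ldots,T_n$ for which $\calO(T_1,\ldots,T_n)$ is defined but not DT-recognizable, which is the ``direct form'' you mention at the end.
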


\begin{proof} Let $T_1,\ldots,T_n \in DRec(\Si,X)$ be such that $\calO(T_1,\ldots,T_n)$ is defined but not DT-recognizable. It follows from Lemma \ref{le:crisp rec and rec} that $T_1^\chi,\ldots,T_n^\chi$ are DT-recognizable but that  $\wh{\calO}(T_1^\chi,\ldots,T_n^\chi)$ is not.
\end{proof}

To apply Lemma \ref{le:metatheorem} to an operation $\wh{\calO}$ with a known classical counterpart $\calO$, we have to verify the relation $\calO(T_1,\ldots,T_n)^\chi = \wh{\calO}(T_1^\chi,\ldots,T_n^\chi)$ and to refer  to the negative closure property of $DRec$. In all cases below, the needed results concerning $DRec$ can be found in \cite{Jur95}. In fact, the examples proving the non-closure of $DRec$ can be turned into examples for $DRec_\calL$ simply by replacing each tree language by its characteristic function.
For example, it is well known that $DRec$ is not closed under union.  Moreover, it is clear that $(T\cup U)^\chi = T^\chi \cup U^\chi$ for any $T,U \se \SXt$, and thus  $DRec_\calL$ is not closed under unions, and this can be confirmed by ``fuzzifying'' any example showing the non-closure of $DRec$. For example, the crisp $\calL$-fuzzy $\SX$-tree languages $\{f(x,y)/1\}$ and $\{f(y,x)/1\}$ are DT-recognizable while their union $\{f(x,y)/1,f(y,x)/1\}$ is not.

Next we introduce $\calL$-fuzzy forms of some known tree language operations under which $DRec$ is not closed.

For any $x \in X$, the \emph{$x$-product} $\Phi\cdot_x\Psi$ of two $\calL$-fuzzy $\SX$-tree languages $\Phi$ and $\Psi$ is defined as follows. First we define $\Phi \cdot_x s : \SXt \ra  L$ for each $s \in T_{\Sigma}(X)$ thus:
\begin{itemize}
  \item[(1)] $\Phi \cdot_x x = \Phi$ and $\Phi \cdot_x y = \{y/1\}$ for $y \in X, y \ne x$;
  \item[(2)] for $s = f(s_1, \dots, s_m)$ and any $t \in T_{\Sigma}(X)$,
\[
(\Phi \cdot_x s)(t) = \left\{
                             \begin{array}{ll}
                               (\Phi \cdot_x s_1)(t_1) \land \dots \land (\Phi \cdot_x s_m)(t_m) & \text{ if } t = f(t_1,\ldots,t_m);\\
                               0 & \text{ if } \root(t) \ne f.
                             \end{array}
                           \right.
\]
\end{itemize}

Then $\Phi\cdot_x\Psi : \SXt \ra  L$ is defined by
\[
    (\Phi \cdot_x \Psi) (t) =
    \bigvee\{ (\Phi \cdot_x s)(t) \land \Psi(s) \mid s \in T_{\Sigma}(X) \} \quad (t \in \SXt).
\]

For any $x \in X$, the \emph{$x$-iteration} of $\Phi : \SXt \ra  L$ is  the union $\Phi^{\ast x} := \bigcup\{ \Phi^{k,x} \mid k\ge 0\}$, where $\Phi^{0,x} = \{x/1\}$ and $\Phi^{k+1,x} = \Phi^{k,x} \cdot_x \Phi \cup \Phi^{k,x}$ for all $k \geq 0$.
Note that for any given $t$, the number of nonzero elements in the supremum defining $(\Phi\cdot_x\Psi)(t)$ is finite. Similarly, $\Phi^{\ast x}(t) =  \Phi^{k,x}(t)$ for some $k\geq 0$. Hence, we don't have to assume that $\calL$ is complete. It is not hard to prove the following lemma.

\begin{lemma}\label{le:chi of x-product and x-star}
$(T\cdot_x U)^\chi = T^\chi \cdot_x U^\chi$ and $(T^{\ast x})^\chi = (T^\chi)^{\ast x}$ for any $x\in X$ and $T,U \se \SXt$.
\end{lemma}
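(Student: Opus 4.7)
The plan is to reduce both identities to routine inductions, exploiting the fact that on crisp $\calL$-fuzzy tree languages the $\land$/$\lor$ operations in $L$ reduce to Boolean meets/joins on $\{0,1\}$, so the recursive definitions of $\Phi\cdot_x\Psi$ and $\Phi^{\ast x}$ specialize to the classical definitions of $T\cdot_x U$ and $T^{\ast x}$.

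First I would prove an auxiliary equality $(T^\chi \cdot_x s)(t) = (T\cdot_x s)^\chi(t)$ for every $s,t\in \SXt$, where $T\cdot_x s$ denotes the classical $x$-product of the tree language $T$ with the single tree $s$ (i.e., the set of all trees obtained from $s$ by replacing each $x$-labeled leaf by some element of $T$, independently at each occurrence). This is by tree induction on $s$: the cases $s=x$ and $s=y\in X\setminus\{x\}$ are immediate from clause (1) of the definition, and for $s=f(s_1,\ldots,s_m)$ the value $(T^\chi \cdot_x s)(t)$ is $0$ unless $t=f(t_1,\ldots,t_m)$, in which case clause (2) and the inductive hypothesis give
\[
(T^\chi \cdot_x s)(t) = \bigwedge_{i\in [m]}(T^\chi\cdot_x s_i)(t_i) = \bigwedge_{i\in [m]}(T\cdot_x s_i)^\chi(t_i),
\]
which equals $1$ iff $t_i \in T\cdot_x s_i$ for every $i$, i.e., iff $t\in T\cdot_x s$.

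Using this, the first identity follows from
\[
(T^\chi \cdot_x U^\chi)(t) = \bigvee\{(T^\chi\cdot_x s)(t)\land U^\chi(s) \mid s\in \SXt\} = \bigvee\{(T\cdot_x s)^\chi(t) \mid s\in U\},
\]
which evaluates to $1$ exactly when $t\in \bigcup_{s\in U}(T\cdot_x s)=T\cdot_x U$, i.e., when $(T\cdot_x U)^\chi(t)=1$.

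For the second identity I would prove by induction on $k$ that $(T^{k,x})^\chi = (T^\chi)^{k,x}$. The base $k=0$ is immediate since $T^{0,x}=\{x\}$ and $(T^\chi)^{0,x}=\{x/1\}$. For the step, $(T^{k+1,x})^\chi = (T^{k,x}\cdot_x T \cup T^{k,x})^\chi$, and since $(A\cup B)^\chi = A^\chi \cup B^\chi$, the inductive hypothesis together with the first part of the lemma yield $(T^\chi)^{k,x}\cdot_x T^\chi \cup (T^\chi)^{k,x} = (T^\chi)^{k+1,x}$. Finally, since both $T^{\ast x}$ and $(T^\chi)^{\ast x}$ are defined as the respective unions over $k\ge 0$, and taking characteristic functions commutes with arbitrary unions, we conclude $(T^{\ast x})^\chi = (T^\chi)^{\ast x}$. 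There is no real obstacle here; the only care needed is to line up the pointwise evaluation of $\lor$/$\land$ in $\calL$ restricted to $\{0,1\}$ with set-theoretic union/intersection of supports.
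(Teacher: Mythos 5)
Your proof is correct, and since the paper leaves this lemma unproved (it only remarks that ``it is not hard to prove''), your routine tree induction on $s$ for the auxiliary identity $(T^\chi\cdot_x s) = (T\cdot_x s)^\chi$, followed by the reduction of the joins over $s\in\SXt$ and over $k\ge 0$ to Boolean joins on $\{0,1\}$, is exactly the intended argument. Nothing is missing: the base cases, the root-symbol mismatch case, and the commutation of $\chi$ with unions are all handled correctly.
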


To prove the following lemma, it suffices to show that for any $t \in \SXt$, $h(T)^{\chi}(t) = 1$ if and only if $ h(T^{\chi})(t) = 1$.

\begin{lemma}\label{le:tree hom commut with chi}  $h(T)^{\chi} = h(T^{\chi})$ for any $\SX$-tree language $T$ and tree homomorphism $h:\SXt \to \OYt$.
\end{lemma}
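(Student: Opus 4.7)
The plan is to exploit the fact that both sides of the claimed equality are crisp $\calL$-fuzzy $\OY$-tree languages taking values in $\{0,1\}$, so it suffices to verify that they agree on which trees are mapped to $1$. For $h(T)^\chi$ this is immediate: $h(T)^\chi(t) = 1$ precisely when $t \in h(T)$. For $h(T^\chi)$ one first observes that
\[
h(T^\chi)(t) \;=\; \bigvee\{T^\chi(s) \mid s\in \SXt,\, h(s) = t\},
\]
and since each value $T^\chi(s)$ lies in $\{0,1\}$, this supremum equals $1$ if some $s\in h^{-1}(t)$ satisfies $T^\chi(s) = 1$, and equals $0$ otherwise (including the vacuous case $h^{-1}(t) = \es$).

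From here the argument is a short chain of equivalences. Fix an arbitrary $t \in \OYt$. Then $h(T)^\chi(t) = 1$ iff $t \in h(T)$, which by the definition of the image $h(T) := \{h(s) \mid s \in T\}$ holds iff there exists $s \in \SXt$ with $s \in T$ and $h(s) = t$. This in turn is equivalent to the existence of some $s\in h^{-1}(t)$ with $T^\chi(s) = 1$, which by the observation above is equivalent to $h(T^\chi)(t) = 1$. Since the negation of each side forces the respective value to be $0$, the two functions coincide pointwise.

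There is essentially no obstacle: the proof is a bookkeeping exercise relating set-membership on the tree-language side to the supremum defining the image of a fuzzy tree language. The only subtlety worth flagging is the vacuous case $h^{-1}(t) = \es$, where both $h(T^\chi)(t)$ (an empty join, equal to $0$) and $h(T)^\chi(t)$ (since certainly $t \notin h(T)$) vanish, so the equality continues to hold without requiring any completeness assumption on $\calL$.
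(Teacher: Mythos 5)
Your proof is correct and follows exactly the route the paper indicates: the paper merely remarks that it suffices to check that $h(T)^{\chi}(t) = 1$ if and only if $h(T^{\chi})(t) = 1$, and you carry out precisely that verification (including the harmless empty-preimage case). Nothing further is needed.
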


In \cite{Jur95} it is shown (Theorem 4.2.3) that $DRec$ is not closed under unions, $x$-products, $x$-iterations,  nor under every tree homomorphism that is just alphabetic or just injective. Therefore Lemma \ref{le:metatheorem} and the above results yield the following non-closure properties of $DRec_\calL$.

\begin{proposition}\label{pr:Non-closure properties} The family $DRec_\calL$ is not closed under (a) unions, (b) $x$-products, (c) $x$-iterations,  (d) some non-alphabetic injective tree homomorphisms, and (e) some non-injective alphabetic tree homomorphisms.
\end{proposition}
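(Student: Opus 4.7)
The plan is to apply Lemma~\ref{le:metatheorem} uniformly to all five cases. For each operation $\wh{\calO}$ on the list, I need two ingredients: (i) the commutation identity $\calO(T_1,\ldots,T_n)^\chi = \wh{\calO}(T_1^\chi,\ldots,T_n^\chi)$ relating the fuzzy operation to its classical counterpart $\calO$, and (ii) a known witness that $DRec$ fails to be closed under $\calO$. The second ingredient is available for every case from \cite{Jur95} (Theorem~4.2.3), so the work lies in verifying the commutation identities and then invoking the metalemma.

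First I would dispatch case (a): the identity $(T\cup U)^\chi = T^\chi \cup U^\chi$ is immediate from the definition of union of fuzzy sets as the pointwise join, combined with the fact that the join in $\{0,1\}$ agrees with set-theoretic union. Lemma~\ref{le:metatheorem} then yields non-closure under $\cup$. Cases (b) and (c) are handled by invoking Lemma~\ref{le:chi of x-product and x-star}, which supplies the needed identities $(T\cdot_x U)^\chi = T^\chi\cdot_x U^\chi$ and $(T^{\ast x})^\chi = (T^\chi)^{\ast x}$ directly. Cases (d) and (e) both concern tree-homomorphic images, and for these Lemma~\ref{le:tree hom commut with chi} gives $h(T)^\chi = h(T^\chi)$ for an arbitrary tree homomorphism $h$; specializing to an injective non-alphabetic $h$ handles (d), while specializing to a non-injective alphabetic $h$ handles (e).

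No step is really an obstacle here, since the core combinatorial work — exhibiting classical DT-recognizable witnesses whose images under $\cup$, $\cdot_x$, ${}^{\ast x}$, a non-alphabetic injective homomorphism, or a non-injective alphabetic homomorphism fail to be DT-recognizable — has already been done in \cite{Jur95}. The only mild subtlety is making sure each commutation identity is stated in the form required by the metalemma, so that the witnesses $T_1,\ldots,T_n$ lifted to $T_1^\chi,\ldots,T_n^\chi$ produce a crisp fuzzy tree language whose support is precisely the classical non-DT-recognizable language, allowing Lemma~\ref{le:crisp drec and drec} to transfer the failure.

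The proof would therefore read, in brief: each of the five commutation identities has been established (for (a) directly, for (b), (c) by Lemma~\ref{le:chi of x-product and x-star}, and for (d), (e) by Lemma~\ref{le:tree hom commut with chi}); since $DRec$ is not closed under any of the five corresponding classical operations (Theorem~4.2.3 of \cite{Jur95}), Lemma~\ref{le:metatheorem} applies in each case and delivers the non-closure of $DRec_\calL$ under the respective fuzzy operation.
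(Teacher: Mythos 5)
Your proposal is correct and follows essentially the same route as the paper: verify the commutation identities (directly for unions, via Lemma~\ref{le:chi of x-product and x-star} for $x$-products and $x$-iterations, and via Lemma~\ref{le:tree hom commut with chi} for tree homomorphisms), cite Theorem~4.2.3 of \cite{Jur95} for the classical non-closure of $DRec$, and apply Lemma~\ref{le:metatheorem} in each case. Nothing is missing.
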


In agreement with \cite{Ros71} and \cite{DiGe87}, for example, we call $\Phi : \SXt \ra  L$ an \emph{$\calL$-fuzzy subalgebra} of the term algebra $\SXta$ if $\Phi \neq \widetilde{0}$ and
$
\Phi(f(t_1,\ldots,t_m)) \geq \Phi(t_1)\land \ldots \land \Phi(t_m)
$
for all $m \in \rmr(\Si)$, $f\in \Si_m$ and $t_1,\ldots,t_m \in \SXt$.
If $\calL$ is complete, then the $\calL$-fuzzy \emph{subalgebra $[\Phi]$ generated} by an $\calL$-fuzzy $\SX$-tree language $\Phi \neq \widetilde{0}$, i.e., the least $\calL$-fuzzy subalgebra of $\SXta$ containing $\Phi$, always exists; it is the intersection of all $\calL$-fuzzy subalgebras of $\SXta$ containing $\Phi$.
A $\SX$-tree language $T$ is an ordinary subalgebra of $\SXta$ if $T \neq \es$ and $f(t_1,\ldots,t_m) \in T$ for all $m \in \rmr(\Si)$, $f\in \Si_m$ and $t_1,\ldots,t_m \in T$. Let $[T]$ denote the subalgebra of $\SXta$ generated by $T \se \SXt, T \neq \es$.

It is not hard to show that
(a) a $\SX$-tree language $T$ is a subalgebra of $\SXta$ if and only if $T^\chi$ is an $\calL$-fuzzy subalgebra of $\SXta$, and that
(b) $[T]^{\chi} = [T^{\chi}]$ for every tree language $T \se \SXt$.
In fact, (a) and (b) hold more generally for subsets of any algebra, as shown by Rosenfeld \cite{Ros71} (for groupoids).
Since $T\in DRec(\Si,X)$ does not imply $[T]\in DRec(\Si,X)$, we get the following result.

\begin{proposition}\label{pr: Non-closure under L-fuzzy subalgebra} Assume that $\calL$ is a complete lattice.
The family $DRec_\calL$ is not closed under the generation of $\calL$-fuzzy subalgebras.
\end{proposition}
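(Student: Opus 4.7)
The plan is to apply the metatheorem (Lemma~\ref{le:metatheorem}) with $n=1$, taking the classical (partial) operation $\calO : T \mapsto [T]$ defined on nonempty tree languages and pairing it with the $\calL$-fuzzy operation $\wh{\calO} : \Phi \mapsto [\Phi]$ defined on nonempty $\calL$-fuzzy tree languages. The latter is well-defined because $\calL$ is assumed to be complete, so the intersection of all $\calL$-fuzzy subalgebras of $\SXta$ containing $\Phi$ exists. The compatibility hypothesis required by Lemma~\ref{le:metatheorem} is precisely the identity $\calO(T)^\chi = \wh{\calO}(T^\chi)$, and this is exactly fact (b) recorded just before the statement: $[T]^\chi = [T^\chi]$ for every nonempty $\SX$-tree language $T$.

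To finish, I would invoke the known non-closure of $DRec$ under subalgebra generation; a concrete witness is supplied, for example, in \cite{Jur95}: there exist a ranked alphabet $\Si$, a leaf alphabet $X$, and a $\SX$-tree language $T \in DRec(\Si,X)$ such that $[T] \notin DRec(\Si,X)$. Plugging this witness into Lemma~\ref{le:metatheorem} together with the compatibility identity above immediately yields that $T^\chi \in DRec_\calL(\Si,X)$, while $[T^\chi] = [T]^\chi \notin DRec_\calL(\Si,X)$, which is the desired failure of closure.

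The only point that needs checking is a mild mismatch in the hypotheses of Lemma~\ref{le:metatheorem}: the lemma is phrased for total operations, whereas $T \mapsto [T]$ is not defined on the empty tree language. However, inspection of its proof shows that only the compatibility identity on the inputs actually used is required, so restricting attention to nonempty tree languages is harmless and the argument carries over verbatim. All other ingredients are bookkeeping: Lemma~\ref{le:crisp drec and drec} is what moves us freely between $T$ and $T^\chi$ in both directions, and Remark~\ref{re:supp and char} guarantees that these two translations are mutually inverse for crisp fuzzy sets.
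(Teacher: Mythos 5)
Your proposal is correct and follows essentially the same route as the paper, which derives the result from the compatibility identity $[T]^{\chi} = [T^{\chi}]$ together with the non-closure of $DRec$ under subalgebra generation (i.e.\ an instance of Lemma~\ref{le:metatheorem}). Your worry about partiality is already absorbed by the phrasing of Lemma~\ref{le:metatheorem}, whose hypothesis only asks that $\calO(T_1,\ldots,T_n)$ be \emph{defined} but not DT-recognizable, so no extra argument is needed there.
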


\section{Fuzzy path languages and DT-recognizability}\label{se:Fuzzy path languages}

As noted above, DT-recognizable tree languages are completely defined  by the paths appearing in their trees. Here we shall discuss $\calL$-fuzzy path languages and  connect them with DT-recognizable $\calL$-fuzzy tree languages.

In what follows, $\Ga$ is again the path alphabet of our given ranked alphabet $\Si$.
By an \emph{$\calL$-fuzzy $\SX$-path language} we mean any mapping $\Lambda : \GXt \ra  L$. To introduce the fuzzy forms of the operators $\delta$ and $\delta^{-1}$, we define for any $\Phi : \SXt \ra  L$ and $\Lambda : \GXt \ra  L$ the $\calL$-fuzzy sets $\tde(\Phi) : \GXt \ra  L$ and $\tdei(\Lambda) : \SXt \ra  L$ by the respective conditions
\[
\tde(\Phi)(r) = \bigvee\{\Phi(t) \mid t\in \SXt, r \in \delta(t)\} \qquad (r \in \GXt)
\]
and
\[
\tdei(\Lambda)(t) = \bigwedge\{\Lambda(r) \mid  r \in \delta(t)\} \qquad (t \in \SXt).
\]
Restricted to crisp sets, $\tde$ and $\tdei$ match the original operators: $\supp(\tde(T^\chi)) = \delta(T)$ and $\supp(\tdei(U^\chi)) = \delta^{-1}(U)$ for any $T \se \SXt$ and $U \se \GXt$.
Note also that formally the definition of $\tde(\Phi)$ presupposes that $\calL$ is complete, but if $\Phi$ is DT-recognizable, completeness is not needed because $\ran(\Phi)$ is then finite.

\begin{lemma}\label{le:properties of tde and tdei}
The following hold for all $\Phi,\Phi' \in L^{\SXt}$ and $\Lambda,\Lambda' \in L^{\GXt}$.
\begin{itemize}
  \item[{\rm (a)}] $\Phi \se \Phi'$ implies $\tde(\Phi) \se \tde(\Phi')$, and $\Lambda \se \Lambda'$ implies $\tdei(\Lambda) \se \tdei(\Lambda')$.
  \item[{\rm (b)}] $\tde(\Phi \cup \Phi') = \tde(\Phi) \cup \tde(\Phi')$, and $\tdei(\Lambda \cap \Lambda') = \tdei(\Lambda) \cap \tdei(\Lambda')$.
  \item[{\rm (c)}] $\Phi \se \tdei(\tde(\Phi))$ and $\tde(\tdei(\Lambda)) \se \Lambda$.
  \item[{\rm (d)}] $\tde(\tdei(\tde(\Phi))) = \tde(\Phi)$ and $\tdei(\tde(\tdei(\Lambda))) = \tdei(\Lambda)$.
\end{itemize}
\end{lemma}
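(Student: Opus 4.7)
The plan is to recognize all four parts as manifestations of a Galois connection between $(L^{\SXt}, \subseteq)$ and $(L^{\GXt}, \subseteq)$: $\tde$ is defined by a supremum, $\tdei$ by an infimum, so the adjunction $\tde(\Phi) \subseteq \Lambda \Leftrightarrow \Phi \subseteq \tdei(\Lambda)$ holds, from which (a)--(d) follow by entirely formal arguments. Nothing here is substantively hard; the only thing to be careful about is that $\ran(\Phi)$ may be infinite, so $\tde(\Phi)$ requires a completeness assumption (or, as noted in the paper, DT-recognizability of $\Phi$). I will just assume that the joins and meets occurring below exist, which is the same standing assumption used in the definitions of $\tde$ and $\tdei$.

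For (a), if $\Phi \subseteq \Phi'$ then for every $r \in \GXt$ the set of values $\{\Phi(t) \mid r \in \delta(t)\}$ lies pointwise below $\{\Phi'(t) \mid r \in \delta(t)\}$, so their joins satisfy $\tde(\Phi)(r) \le \tde(\Phi')(r)$. The second half is the dual statement for meets. For (b), the identity $(\Phi \cup \Phi')(t) = \Phi(t) \vee \Phi'(t)$ makes $\tde(\Phi \cup \Phi')(r)$ a join over $\{r \in \delta(t)\}$ of $\Phi(t) \vee \Phi'(t)$, and joins distribute over joins; dually for $\tdei$ and $\cap$.

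For (c), fix $t\in \SXt$. Since for every $r \in \delta(t)$ the tree $t$ itself witnesses $r \in \delta(t)$, the value $\Phi(t)$ is one of the values in the join defining $\tde(\Phi)(r)$, so $\tde(\Phi)(r) \ge \Phi(t)$. Taking the meet over all $r \in \delta(t)$ gives $\tdei(\tde(\Phi))(t) \ge \Phi(t)$, i.e.\ $\Phi \subseteq \tdei(\tde(\Phi))$. Dually, fix $r \in \GXt$. For any $t$ with $r \in \delta(t)$, the meet $\tdei(\Lambda)(t) = \bigwedge\{\Lambda(s) \mid s \in \delta(t)\}$ is bounded above by the particular term $\Lambda(r)$. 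Joining over such $t$ yields $\tde(\tdei(\Lambda))(r) \le \Lambda(r)$, i.e.\ $\tde(\tdei(\Lambda)) \subseteq \Lambda$.

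For (d), apply monotonicity (a) to the first inclusion in (c) to get $\tde(\Phi) \subseteq \tde(\tdei(\tde(\Phi)))$; the opposite inclusion is the second inclusion of (c) applied to $\Lambda := \tde(\Phi)$. The identity $\tdei(\tde(\tdei(\Lambda))) = \tdei(\Lambda)$ is obtained symmetrically by applying the monotonicity of $\tdei$ to the second inclusion of (c) and then invoking the first inclusion of (c) with $\Phi := \tdei(\Lambda)$. The only ``obstacle'' worth flagging is the bookkeeping with nested $\bigvee$ and $\bigwedge$ in part (c), which is where the whole adjunction is anchored; once that is in place, (a), (b), (d) are forced.
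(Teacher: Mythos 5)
Your proof is correct and follows essentially the same route as the paper: parts (a) and (b) by direct comparison of the defining joins and meets, part (c) by the witness argument that $\Phi(t)$ (resp.\ $\Lambda(r)$) occurs among the terms being joined (resp.\ met), and part (d) formally from (a) and (c) via the Galois connection, which the paper also invokes explicitly. Your added remark about the completeness caveat for $\tde$ matches the discussion the paper places just before the lemma.
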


\begin{proof} The statements in (a) and (b) have very simple proofs. Let us prove the first part of (c). For any $t \in \SXt$,
\begin{align*}
    \tdei(\tde(\Phi))(t) &= \bigwedge\{\tde(\Phi)(r) \mid r \in \delta(t)\}\\
    &= \bigwedge\{\bigvee\{\Phi(s) \mid s\in \SXt, r \in \delta(s)\} \mid r \in \delta(t)\}.
\end{align*}
Since $\Phi(t)$ is in every set $\{\Phi(s) \mid s\in \SXt, r \in \delta(s)\}$, we may conclude that $\tdei(\tde(\Phi))(t) \geq \Phi(t)$. The second part of (c) has a similar proof.

By (a) and (c),  $\tde$ and $\tdei$ define a Galois connection between $(\SXt,\se)$ and $(\GXt,\supseteq)$, and hence (a) and (c) imply (d). For example,  (c) yields $\tde(\tdei(\tde(\Phi))) \se \tde(\Phi)$, and the converse inclusion follows from $\Phi \se \tdei(\tde(\Phi))$ by the isotonicity of $\tde$.
\end{proof}

The \emph{path closure} of $\Phi : \SXt \ra  L$ is the $\calL$-fuzzy $\SX$-tree language $\tDe(\Phi) := \tdei(\tde(\Phi))$, and $\Phi$ is said to be \emph{path closed} if $\tDe(\Phi) = \Phi$. The following corollary is an immediate consequence of Lemma \ref{le:properties of tde and tdei}.

\begin{corollary}\label{co:tDe closure operator} $\tDe$ is a closure operator on $L^{\SXt}$, i.e., for all $\Phi, \Psi \in L^{\SXt}$, \rm{(1)} $\Phi \se \tDe(\Phi)$, \rm{(2)} $\Phi \se \Psi$ implies $\tDe(\Phi) \se \tDe(\Psi)$, and \rm{(3)} $\tDe(\tDe(\Phi)) = \tDe(\Phi)$.
\end{corollary}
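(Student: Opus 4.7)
The plan is to derive each of the three closure operator axioms directly from Lemma~\ref{le:properties of tde and tdei}, exploiting the fact that $\tDe$ is defined as the composition $\tdei \circ \tde$ and that the previous lemma has already packaged all the needed monotonicity, extensivity and triple-composition identities for $\tde$ and $\tdei$.

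For (1), extensivity $\Phi \se \tDe(\Phi)$, I would simply quote the first half of Lemma~\ref{le:properties of tde and tdei}(c), which asserts exactly $\Phi \se \tdei(\tde(\Phi)) = \tDe(\Phi)$.

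For (2), monotonicity, I would apply the two isotonicity statements in Lemma~\ref{le:properties of tde and tdei}(a) in sequence: from $\Phi \se \Psi$ first infer $\tde(\Phi) \se \tde(\Psi)$, and then infer $\tdei(\tde(\Phi)) \se \tdei(\tde(\Psi))$, i.e.\ $\tDe(\Phi) \se \tDe(\Psi)$.

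For (3), idempotence, I would use the first identity in Lemma~\ref{le:properties of tde and tdei}(d), namely $\tde(\tdei(\tde(\Phi))) = \tde(\Phi)$. Applying $\tdei$ to both sides gives $\tdei(\tde(\tdei(\tde(\Phi)))) = \tdei(\tde(\Phi))$, which is exactly $\tDe(\tDe(\Phi)) = \tDe(\Phi)$. There is no real obstacle here; the statement is essentially a bookkeeping consequence of the Galois-connection style facts already established (indeed, Lemma~\ref{le:properties of tde and tdei}(a)+(c) alone imply (d), and from any such Galois connection the composite $\tdei \circ \tde$ is automatically a closure operator on the first poset, so the corollary could also be advertised as an instance of that general principle).
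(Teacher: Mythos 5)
Your proof is correct and matches the paper's intent exactly: the paper states the corollary as an immediate consequence of Lemma~\ref{le:properties of tde and tdei}, and your derivation of extensivity from part (c), monotonicity from part (a) applied twice, and idempotence from part (d) is precisely the bookkeeping the authors have in mind (they even note the Galois-connection viewpoint themselves in the proof of that lemma). Nothing is missing.
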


Let us now consider the $\calL$-fuzzy path languages of DT-recognizable $\calL$-fuzzy tree languages. First we show how  any $\calL$-DT $\SX$-recognizer defines an  $\calL$-fuzzy path language in a natural way.

As in \cite{Esi86}, we associate with any DT $\Si$-algebra $\algA$ the unary algebra $\calA^u = (A,\Gamma)$ such that $f_i^{\calA^u}(a) = \pr_i(f_\calA(a))$ for all $a\in A$ and $f_i \in \Gamma$. We may also regard $\calA^u$ as a DT $\Gamma$-algebra by treating $f_i^{\calA^u}(a)$ as a 1-tuple. In \cite{Ste17} also the converse transformation was considered:  for any $\Gamma$-algebra $\calB = (B,\Gamma)$, let $\calB^d = (B,\Si)$ be the DT $\Si$-algebra with
$
f_{\calB^d}(b) = (f_1^{\calB}(b),\ldots,f_m^{\calB}(b))
$
for all $b\in B$, $m \in \rSi$ and $f\in \Si_m$.
Since $\calA^{ud} = \calA$ for any DT $\Si$-algebra $\calA$ and $\calB^{du} = \calB$ for any $\Gamma$-algebra $\calB$, there is a bijective correspondence between DT $\Si$-algebras and $\Gamma$-algebras. As noted in \cite{Ste17}, it preserves subalgebras, homomorphisms, congruences, direct products and quotient algebras.

For any $\calL$-DT $\SX$-recognizer $\recF$, let $\bfF^u$ be the  $\calL$-DT $\GX$-recognizer $(\calA^u,a_0,\om)$, and for any $\calL$-DT $\GX$-recognizer $\recG$, let $\bfG^d$ be the $\calL$-DT $\SX$-recognizer $(\calB^d,b_0,\pi)$. Of course, $\bfF^{ud} = \bfF$ and $\bfG^{du} = \bfG$.

For any $a\in A$, $t\in \SXt$ and $wx \in \delta(t)$, $\lr(\calA^u,wx,a) = \{(x,aw^\calA)\}$, and  $(x,aw^\calA)$ is also the element of $\lr(\calA,t,a)$ appearing at the end of the path in $\run(\calA,t,a)$ described by $w$. Hence the following fact.

\begin{lemma}\label{le:lr for DT Si-algebra and hSi-algebra} Let $\algA$ be a DT $\Si$-algebra. Then
\[
lr(\calA,t,a) = \bigcup\{\lr(\calA^u,r,a) \mid r \in \delta(t)\}
\]
for all $a\in A$ and $t \in \SXt$.
\end{lemma}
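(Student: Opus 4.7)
The plan is to prove the identity by structural induction on the $\SX$-tree $t$, with $a \in A$ treated as an arbitrary parameter. The whole statement reduces to unpacking how $\run$ and $\delta$ decompose in parallel over the children of the root, together with the defining equation $f_i^{\calA^u}(a) = \pr_i(f_\calA(a))$ of the associated unary algebra.

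For the base case $t = x \in X$, both sides collapse to $\{(x,a)\}$: on the left because $\run(\calA,x,a) = (x,a)$, and on the right because $\delta(x) = \{x\}$ and $\run(\calA^u,x,a) = (x,a)$ as well.

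For the inductive step, write $t = f(t_1,\ldots,t_m)$ and let $f_\calA(a) = (a_1,\ldots,a_m)$. From clause~(2) of the definition of $\run$ we obtain $\lr(\calA,t,a) = \lr(\calA,t_1,a_1) \cup \ldots \cup \lr(\calA,t_m,a_m)$. On the path side, I use $\delta(t) = f_1\delta(t_1) \cup \ldots \cup f_m\delta(t_m)$ and the key observation that for any $r' \in \delta(t_i)$, viewing $f_i r'$ as a $\Gamma X$-tree (with $f_i$ a unary symbol in the $\Gamma$-algebra $\calA^u$),
\[
\run(\calA^u, f_i r', a) \,=\, (f_i,a)\bigl(\run(\calA^u, r', a_i)\bigr),
\]
since $f_i^{\calA^u}(a) = \pr_i(f_\calA(a)) = a_i$. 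Consequently $\lr(\calA^u, f_i r', a) = \lr(\calA^u, r', a_i)$, so that
\[
\bigcup\{\lr(\calA^u,r,a) \mid r \in \delta(t)\}
  \,=\, \bigcup_{i=1}^m \bigcup\{\lr(\calA^u, r', a_i) \mid r' \in \delta(t_i)\}.
\]
Applying the inductive hypothesis to each $(t_i,a_i)$ turns the right-hand side into $\bigcup_{i=1}^m \lr(\calA,t_i,a_i)$, which matches the decomposition of $\lr(\calA,t,a)$ obtained above.

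There is no real obstacle; the only subtlety is the dual interpretation of $\calA^u$ as a $\Gamma$-algebra versus as a DT $\Gamma$-algebra with 1-tuple outputs, which is needed to make sense of $\run(\calA^u,\cdot,\cdot)$ on path words. Once that convention is fixed, the identity $f_i^{\calA^u}(a) = a_i$ aligns the ``descent into the $i$-th child'' in $\run(\calA,t,a)$ with the ``consumption of the letter $f_i$'' in $\run(\calA^u,f_i r',a)$, and the induction goes through cleanly.
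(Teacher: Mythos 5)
Your proof is correct and matches the paper's reasoning: the paper simply observes that for each path $wx \in \delta(t)$ one has $\lr(\calA^u,wx,a) = \{(x,aw^\calA)\}$ and that $(x,aw^\calA)$ is exactly the leaf of $\run(\calA,t,a)$ at the end of the path described by $w$, then states the lemma as an immediate consequence. Your structural induction just makes that path--leaf correspondence formal, using the same key identity $f_i^{\calA^u}(a) = \pr_i(f_\calA(a))$, so there is nothing to add or correct.
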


An $\calL$-DT $\SX$-recognizer $\recF$ reaches the leaf at the end of the path described by a given $w \in \Gamma^*$ in state $a_0w^\calA$ independently of the tree in which the path appears.
It is therefore meaningful to regard
\[
\Lambda_\bfF : \GXt \ra  L, \: wx \mapsto \om_x(a_0w^\calA) \qquad (w\in \Gamma^*, x\in X)
\]
as the \emph{$\calL$-fuzzy path language defined by $\bfF$}.

\begin{lemma}\label{le:DT-recognizers and unary DT-recognizers}
Let $\recF$ be an $\calL$-DT $\SX$-recognizer and $\recG$ be an $\calL$-DT $\GX$-recognizer.
\begin{itemize}
  \item[\rm{(a)}] $\Lambda_\bfF = \Phi_{\bfF^u}$, and hence $\Lambda_\bfF \in DRec_\calL(\Gamma,X)$.
  \item[\rm{(b)}] $\tdei(\Phi_\bfG) = \Phi_{\bfG^d}$, and hence $\tdei(\Phi_\bfG) \in DRec_\calL(\Si,X)$.
  \item[ \rm{(c)}] $\Phi_\bfF = \tdei(\Lambda_\bfF)$, i.e., $\Phi_\bfF(t) = \bigwedge\{\Lambda_{\bfF}(r) \mid r \in \delta(t)\}$ for every $t\in \SXt$.
  \item[\rm{(d)}]  $\tde(\Phi_\bfF) \se \Lambda_\bfF$.
\end{itemize}
\end{lemma}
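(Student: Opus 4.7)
The plan is to unpack each of the four claims using the single key observation that for any DT $\Si$-algebra $\calA$ and any word $w\in \Ga^*$, the map $w^\calA$ on $A$ defined in Section \ref{se:Top-down algebras and recognizers} coincides with the unary composition induced by $w$ in $\calA^u$, and dually for $w^{\calB^d}$ versus $w^\calB$. Once this is in hand, Lemma \ref{le:Phi(t) from run tree} (which expresses $\Phi_\bfF(t)$ as the meet of $\om_x(a_0 w^\calA)$ over all paths $wx\in\delta(t)$) does most of the work. No step requires completeness of $\calL$, since the suprema involved in $\tde$ are taken against $\ran(\Phi_\bfF)$, which is finite by Corollary~\ref{co: range finite}.

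For part (a), I would view a path $wx\in\GXt$ as a unary tree over the ranked alphabet $\Ga$ and compute $\Phi_{\bfF^u}(wx)$ directly from the defining recursion of $\Phi_{\bfF^u,a_0}$. A straightforward induction on $|w|$ gives $\Phi_{\bfF^u}(wx)=\om_x(a_0 w^{\calA^u})$, and since $f_i^{\calA^u}(a)=\pr_i(f_\calA(a))$ for each $f_i\in\Ga$, the inductive definitions of $w^{\calA^u}$ and $w^\calA$ agree, hence $a_0 w^{\calA^u} = a_0 w^\calA$ and the right-hand side equals $\Lambda_\bfF(wx)$. Part (b) is the mirror image: for $t\in \SXt$, Lemma~\ref{le:Phi(t) from run tree} applied to $\bfG^d$ yields
\[
\Phi_{\bfG^d}(t) \;=\; \bigwedge\{\pi_x(b_0 w^{\calB^d}) \mid wx\in\delta(t)\}.
\]
The equality $f_{\calB^d}(b)=(f_1^\calB(b),\ldots,f_m^\calB(b))$ gives $w^{\calB^d}=w^\calB$ on $B$, and by (a) applied to $\bfG$ (which is already a unary DT recognizer, so $\bfG^{du}=\bfG$), $\pi_x(b_0 w^\calB)=\Phi_\bfG(wx)$; taking the meet over $wx\in\delta(t)$ recovers $\tdei(\Phi_\bfG)(t)$.

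Part (c) is a direct rereading of Lemma~\ref{le:Phi(t) from run tree}: the formula
\[
\Phi_\bfF(t) \;=\; \bigwedge\{\om_x(a_0 w^\calA)\mid wx\in\delta(t)\}
\]
is literally $\bigwedge\{\Lambda_\bfF(r)\mid r\in\delta(t)\}=\tdei(\Lambda_\bfF)(t)$ by the definition of $\Lambda_\bfF$ and $\tdei$. For part (d), fix $r=wx\in\GXt$ and any $t\in\SXt$ with $r\in\delta(t)$; by (c), $\Phi_\bfF(t)\le\Lambda_\bfF(r)$ since $\Lambda_\bfF(r)$ is one of the terms in the meet. Taking the join over all such $t$ gives $\tde(\Phi_\bfF)(r)\le\Lambda_\bfF(r)$, which is the claim.

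There is no real obstacle: the substance of the lemma is the two functorial identities $\bfF^{ud}=\bfF$ and $w^{\calA^u}=w^\calA$, plus Lemma~\ref{le:Phi(t) from run tree}. The only point requiring a moment of care is ensuring $\tde(\Phi_\bfF)$ is well defined without assuming completeness of $\calL$; this is granted because $\ran(\Phi_\bfF)$ is finite, so the join in its definition is over a finite set of values.
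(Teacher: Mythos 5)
Your proposal is correct and follows essentially the same route as the paper: the paper also derives (a) from the fact that $\calA^u$ reaches the leaf of $wx$ in state $a_0w^\calA$ (stated there via $\lr(\calA^u,wx,a_0)=\{(x,a_0w^\calA)\}$ rather than via your induction on $|w|$), proves (b) by matching the leaf states of $\calB^d$ on $t$ with those of $\calB$ on the paths of $t$, gets (c) from Lemma \ref{le:Phi(t) from run tree} (and, alternatively, from $\tdei(\Phi_{\bfF^u})=\Phi_{\bfF^{ud}}=\Phi_\bfF$), and proves (d) by the same one-line estimate that $\Lambda_\bfF(wx)$ is one of the terms in the meet defining $\Phi_\bfF(t)$. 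Your side remark on not needing completeness of $\calL$ also matches the paper's comment preceding the lemma.
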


\begin{proof} (a) follows from the definitions of $\Lambda_\bfF$ and $\bfF^u$: for any $wx \in \GXt$,
\[
\Phi_{\bfF^u}(wx) = \bigwedge \{\om_y(a) \mid (y,a) \in \lr(\calA^u,wx,a_0)\} = \om_x(a_0w^\calA) = \Lambda_\bfF(wx).
\]
For any $r\in \GXt$, $\Phi_\bfG(r) = \pi_x(b)$  with  $\{(x,b)\} = \lr(\calB,r,b_0)$. Hence, we get (b) as follows:
\begin{displaymath}
      \begin{array}{ll}
      \Phi_{\bfG^d}(t) &= \; \bigwedge\{\pi_x(b) \mid (x,b) \in \lr(\calB^d,t,b_0)\}\\
      &= \; \bigwedge\{\pi_x(b) \mid (x,b) \in \bigcup\{\lr(\calB,r,b_0) \mid r \in \delta(t)\}\}\\
      &= \; \bigwedge\{\Phi_\bfG(r) \mid r \in \delta(t)\}\\
      &= \; \tdei(\Phi_\bfG)(t),
      \end{array}
\end{displaymath}
for every $t\in \SXt$.
Statement (c) follows from Lemma \ref{le:Phi(t) from run tree}, but now also from (a) and (b): $\tdei(\Lambda_\bfF) = \tdei(\Phi_{\bfF^u}) = \Phi_{\bfF^{ud}} = \Phi_\bfF$. Finally, for any $wx \in \GXt$,
\begin{displaymath}
      \begin{array}{ll}
      \tde(\Phi_\bfF)(wx) &= \; \bigvee\{\Phi_\bfF(t) \mid t\in \SXt, wx \in \delta(t)\}\\
      &= \;\bigvee\{\bigwedge\{\om_y(a_0v^\calA) \mid vy \in \delta(t)\} \mid t \in \SXt, wx \in \delta(t)\}\\
      &\leq \; \om_x(a_0w^\calA) = \Lambda_\bfF(wx),
      \end{array}
\end{displaymath}
and hence also (d) holds.
\end{proof}

\begin{corollary}\label{co:Fu equiv Gu implies F equiv G} If $\bfF$ and $\bfG$ are $\calL$-DT $\SX$-recognizers, then $\bfF^u \equiv \bfG^u$ implies $\bfF \equiv \bfG$. Similarly, if $\bfF$ and $\bfG$ are $\calL$-DT $\GX$-recognizers, then $\bfF \equiv \bfG$ implies $\bfF^d \equiv \bfG^d$.
\end{corollary}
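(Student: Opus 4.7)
The plan is to derive both implications as immediate consequences of Lemma~\ref{le:DT-recognizers and unary DT-recognizers}, which already connects the behavior of a DT recognizer with the behavior of its unary (path) companion via $\tdei$. No separate induction or new construction is needed; the work has been done in parts (b) and (c) of that lemma, so the corollary is essentially a one-line application of those identities.

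For the first implication, suppose $\bfF$ and $\bfG$ are $\calL$-DT $\SX$-recognizers with $\bfF^u \equiv \bfG^u$, i.e., $\Phi_{\bfF^u} = \Phi_{\bfG^u}$. By part~(a) of the lemma, $\Lambda_\bfF = \Phi_{\bfF^u}$ and $\Lambda_\bfG = \Phi_{\bfG^u}$, so $\Lambda_\bfF = \Lambda_\bfG$. Applying $\tdei$ to both sides and using part~(c), which gives $\Phi_\bfF = \tdei(\Lambda_\bfF)$ and $\Phi_\bfG = \tdei(\Lambda_\bfG)$, one concludes $\Phi_\bfF = \Phi_\bfG$, i.e., $\bfF \equiv \bfG$.

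For the second implication, suppose $\bfF$ and $\bfG$ are $\calL$-DT $\GX$-recognizers with $\Phi_\bfF = \Phi_\bfG$. By part~(b) of the lemma, $\Phi_{\bfF^d} = \tdei(\Phi_\bfF)$ and $\Phi_{\bfG^d} = \tdei(\Phi_\bfG)$, and since the right-hand sides coincide, $\Phi_{\bfF^d} = \Phi_{\bfG^d}$, i.e., $\bfF^d \equiv \bfG^d$.

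There is no real obstacle here: the only thing to be careful about is matching the types (an $\calL$-DT $\SX$-recognizer versus an $\calL$-DT $\GX$-recognizer) so that the appropriate clause of Lemma~\ref{le:DT-recognizers and unary DT-recognizers} applies on each side. The proof is purely a matter of chaining the identities from that lemma.
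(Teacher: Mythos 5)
Your proof is correct and follows exactly the paper's own argument: the first implication is the chain $\Phi_\bfF = \tdei(\Lambda_\bfF) = \tdei(\Phi_{\bfF^u}) = \tdei(\Phi_{\bfG^u}) = \tdei(\Lambda_\bfG) = \Phi_\bfG$ via parts (a) and (c) of Lemma~\ref{le:DT-recognizers and unary DT-recognizers}, and the second follows from part (b) in the same way. No differences worth noting.
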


\begin{proof}
The first implication follows by statements (a) and (c) of Lemma \ref{le:DT-recognizers and unary DT-recognizers}: $\Phi_\bfF = \tdei(\Lambda_\bfF) = \tdei(\Phi_{\bfF^u}) = \tdei(\Phi_{\bfG^u}) = \tdei(\Lambda_\bfG ) = \Phi_\bfG$. The second implication follows similarly from Lemma \ref{le:DT-recognizers and unary DT-recognizers} (b).
\end{proof}

The following example shows that the converses of the inclusion (d) of Lemma \ref{le:DT-recognizers and unary DT-recognizers} and the implications of Corollary \ref{co:Fu equiv Gu implies F equiv G} are not generally valid.

\begin{example}\label{ex:Lambda_bfF se tde(Phi_bfF) does not hold}
Let $\Si = \{f/2\}$, $X = \{x\}$, and let $\algA$ be the DT $\Si$-algebra such that  $A = \{a_0,a,b\}$ and $f_\calA(a_0) = (a,b)$, $f_\calA(a) = f_\calA(b) = (b,b)$.
If $\recF$ is the $\calL$-DT $\SX$-recognizer, where $\om_x(a) = 1$ and $\om_x(a_0) = \om_x(b) = 0$, then $\tde(\Phi_\bfF) = \tilde{0}$ but $\Lambda_\bfF = \{f_1x/1\}$, and hence $\Lambda_\bfF \se \tde(\Phi_\bfF)$ does not hold.

Let $\bfG = (\calA,a_0,\pi)$ be the $\calL$-DT $\SX$-recognizer with $\pi_x(b) = 1$ and $\pi_x(a_0) = \pi_x(a) = 0$. Then $\Phi_\bfF = \tilde{0} = \Phi_\bfG$, but $\Phi_{\bfF^u} = \{f_1x/1\} \neq \{f_2x/1\} = \Phi_{\bfG^u}$. Hence, $\bfF^u \equiv \bfG^u$ does not follow from $\bfF \equiv \bfG$.

That $\bfF \equiv \bfG$ does not follow from $\bfF^d \equiv \bfG^d$, can be seen by considering the $\calL$-DT $\GX$-recognizers $\bfF_1 := \bfF^u$ and $\bfG_1 := \bfG^u$, where $\bfF$ and $\bfG$ are as above. Now $\bfF_1^d = \bfF \equiv \bfG = \bfG_1^d$, but not $\bfF_1 \equiv \bfG_1$.
\end{example}

We may now characterize the $\calL$-DT tree languages in terms of $\calL$-fuzzy path languages.

\begin{theorem}\label{th:Fuzzy DT tree lang and fuzzy path lang}
An $\calL$-fuzzy $\SX$-tree language $\Phi$ is DT-recognizable if and only if $\Phi = \tdei(\Lambda)$ for some DT-recognizable $\calL$-fuzzy $\SX$-path language $\Lambda$.
\end{theorem}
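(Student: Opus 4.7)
The plan is to derive both directions as direct consequences of Lemma~\ref{le:DT-recognizers and unary DT-recognizers}, which already provides the bijective correspondence between $\calL$-DT $\SX$-recognizers and $\calL$-DT $\GX$-recognizers via the transformations $\bfF \mapsto \bfF^u$ and $\bfG \mapsto \bfG^d$, together with the identities $\Phi_\bfF = \tdei(\Lambda_\bfF)$ and $\Lambda_\bfF = \Phi_{\bfF^u}$.

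For the ``only if'' direction, I would take an $\calL$-DT $\SX$-recognizer $\recF$ with $\Phi = \Phi_\bfF$ and set $\Lambda := \Lambda_\bfF$. Part (a) of Lemma~\ref{le:DT-recognizers and unary DT-recognizers} gives $\Lambda_\bfF = \Phi_{\bfF^u}$, so $\Lambda \in DRec_\calL(\Ga,X)$, and part (c) gives $\Phi = \Phi_\bfF = \tdei(\Lambda_\bfF) = \tdei(\Lambda)$.

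For the ``if'' direction, suppose $\Phi = \tdei(\Lambda)$ for some $\Lambda \in DRec_\calL(\Ga,X)$. Choose an $\calL$-DT $\GX$-recognizer $\recG$ with $\Lambda = \Phi_\bfG$. Then part (b) of the same lemma yields
\[
\Phi \;=\; \tdei(\Lambda) \;=\; \tdei(\Phi_\bfG) \;=\; \Phi_{\bfG^d},
\]
and since $\bfG^d$ is an $\calL$-DT $\SX$-recognizer, $\Phi$ is DT-recognizable.

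There is no real obstacle here, since all the substantive work has been carried out in building the $u/d$ correspondence and in Lemma~\ref{le:DT-recognizers and unary DT-recognizers}; the theorem is essentially a two-line bookkeeping consequence of parts (a)--(c) of that lemma. The only minor point to flag in the write-up is that $\tdei(\Lambda)$ is well-defined for any $\Lambda \in DRec_\calL(\Ga,X)$ without assuming completeness of $\calL$, because $\ran(\Lambda)$ is finite by Corollary~\ref{co: range finite}, so the meet defining $\tdei(\Lambda)(t)$ reduces to a finite meet in $\calL$.
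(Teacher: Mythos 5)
Your proposal is correct and follows essentially the same route as the paper: both directions are read off from parts (a)--(c) of Lemma~\ref{le:DT-recognizers and unary DT-recognizers} via the $u$/$d$ correspondence, exactly as in the paper's own two-line proof. Your closing remark about completeness is harmless but unnecessary, since $\tdei(\Lambda)(t)$ is a meet over the finite set $\delta(t)$ and so is always a finite meet regardless of $\ran(\Lambda)$.
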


\begin{proof}
If $\Phi$ is recognized by an $\calL$-fuzzy DT $\SX$-recognizer $\bfF$, then it follows from Lemma \ref{le:DT-recognizers and unary DT-recognizers} that $\Phi = \Phi_\bfF = \tdei(\Lambda_\bfF)$ with $\Lambda_\bfF \in DRec_\calL(\Gamma,X)$.
Assume then that $\Phi = \tdei(\Lambda)$, where $\Lambda = \Phi_\bfG$ for an $\calL$-DT $\GX$-recognizer $\bfG$. Now Lemma \ref{le:DT-recognizers and unary DT-recognizers} yields $\Phi = \tdei(\Phi_\bfG) = \Phi_{\bfG^d} \in DRec_\calL(\Si,X)$.
\end{proof}

\begin{corollary}\label{co:DT-rec is path closed} Any DT-recognizable tree language is path closed.
\end{corollary}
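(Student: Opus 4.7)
The plan is to derive the corollary as a direct consequence of Theorem \ref{th:Fuzzy DT tree lang and fuzzy path lang} together with Lemma \ref{le:properties of tde and tdei}. Even though the statement uses the phrase ``DT-recognizable tree language,'' in the context of this section the natural reading (and the only one that makes the result nontrivial beyond the classical Magidor--Moran theorem) is that $\Phi \in DRec_\calL(\Si,X)$ implies $\tDe(\Phi) = \Phi$. So I would show precisely this.

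First I would invoke Theorem \ref{th:Fuzzy DT tree lang and fuzzy path lang} to write $\Phi = \tdei(\Lambda)$ for some $\calL$-fuzzy $\SX$-path language $\Lambda \in DRec_\calL(\Ga,X)$. Substituting into the definition of the path closure, I would then compute
\[
\tDe(\Phi) \;=\; \tdei(\tde(\Phi)) \;=\; \tdei\bigl(\tde(\tdei(\Lambda))\bigr).
\]
At this point the key identity from Lemma \ref{le:properties of tde and tdei}(d), namely $\tdei(\tde(\tdei(\Lambda))) = \tdei(\Lambda)$, closes the argument: the right-hand side equals $\tdei(\Lambda) = \Phi$, so $\tDe(\Phi) = \Phi$ and $\Phi$ is path closed.

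There is really no obstacle here; the entire content has already been extracted into Theorem \ref{th:Fuzzy DT tree lang and fuzzy path lang} and the Galois-connection portion of Lemma \ref{le:properties of tde and tdei}. The only subtlety worth flagging is that $\tde(\Phi)$ is a priori defined using a supremum, but since $\Phi \in DRec_\calL(\Si,X)$ has finite range by Corollary \ref{co: range finite}, this supremum exists even without assuming $\calL$ is complete, so the statement is meaningful in the paper's general setting where $\calL$ is only assumed to be a nontrivial bounded lattice.
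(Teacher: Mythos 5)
Your proof is correct and is essentially identical to the paper's own argument: both write $\Phi = \tdei(\Lambda)$ via Theorem \ref{th:Fuzzy DT tree lang and fuzzy path lang} and then apply the identity $\tdei(\tde(\tdei(\Lambda))) = \tdei(\Lambda)$ from Lemma \ref{le:properties of tde and tdei}(d). Your side remarks on the intended fuzzy reading of the statement and on why completeness of $\calL$ is not needed are both accurate and consistent with the paper's earlier discussion.
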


\begin{proof}  By Theorem \ref{th:Fuzzy DT tree lang and fuzzy path lang}, if $\Phi \in DRec_\calL(\Si,X)$, then  $\Phi = \tdei(\Lambda)$ for some $\Lambda \in DRec_\calL(\Ga,X)$, and hence $\tDe(\Phi) = \tdei(\tde(\tdei(\Lambda))) = \tdei(\Lambda) = \Phi$ by Lemma \ref{le:properties of tde and tdei}.
\end{proof}

\section{Path closure and DT-recognizability}\label{se:Path closure}

An ordinary regular tree language is  DT-recognizable if and only if it is path closed, and the path closure of the tree language recognized by an NDT tree recognizer is recognized by a DT tree recognizer obtained by a subset construction (cf. \cite{Jur95} or \cite{GeSt84}, for example).
We shall now prove some similar results for fuzzy tree languages.

Throughout this section we assume that the lattice of membership degrees  is a nontrivial bounded chain $\calC = (C,\leq)$.
Moreover,  $\recNF$ is always a $\calC$-NDT $\SX$-recognizer with the underlying NDT $\Si$-algebra $\nalgA$.

Since $\calC$ is a chain, the set $R_\om = \{\om_x(a) \mid x\in X, a\in A\}$ is a sublattice of $\calC$ and $\ran(\Phi_\NF) \se R_\om$.
For each $a\in A$, we define $\Lambda_{\NF,a} : \GXt \ra C$ by
\[
\Lambda_{\NF,a}(wx) = \max \{ \om_x(b) \mid  b \in a w^\frA\} \quad (w\in \Ga^*, x\in X).
\]
The $\calC$-fuzzy path language  $\Lambda_\NF : \GXt \ra C$ defined by $\NF$ is given by
\[
\Lambda_\NF(r) = \max \{\Lambda_{\NF,a}(r) \mid a \in I\} \quad (r\in \GXt).
\]

\begin{lemma}\label{le:Phi_NF(t) leq Lambda_NF(r)} Let $\NF$ be any $\calC$-NDT $\SX$-recognizer.
\begin{itemize}
  \item[{\rm (a)}] If $t\in \SXt$ and $r \in \delta(t)$, then $\Phi_\NF(t) \leq \Lambda_\NF(r)$.
  \item[{\rm (b)}] $\tde(\Phi_\NF) \se \Lambda_\NF$.
\end{itemize}
\end{lemma}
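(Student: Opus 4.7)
The plan is to prove part (a) by tree induction on $t$ after strengthening the statement to a per-state version, and then deduce part (b) immediately from (a) together with the definition of $\tde$.

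Concretely, the strengthened inductive claim I would carry is:
\[
  (\ast) \quad \Phi_{\NF,a}(t) \;\leq\; \Lambda_{\NF,a}(r)
  \quad \text{for every } a \in A,\ t \in \SXt,\ r \in \delta(t).
\]
The base case $t = x \in X$ is immediate, since $\delta(x) = \{x\}$, $a\ve^\frA = \{a\}$, and both sides equal $\om_x(a)$. For the inductive step I would take $t = f(t_1,\ldots,t_m)$; any $r \in \delta(t)$ then has the form $r = f_i\, u$ with $i \in [m]$ and $u \in \delta(t_i)$. Fix an arbitrary $(a_1,\ldots,a_m) \in f_\frA(a)$; since $\calC$ is a chain, the $m$-fold meet in the defining clause of $\Phi_{\NF,a}(t)$ is at most $\Phi_{\NF,a_i}(t_i)$, which by the induction hypothesis is at most $\Lambda_{\NF,a_i}(u)$.

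The key combinatorial step is to compare $\Lambda_{\NF,a_i}(u)$ with $\Lambda_{\NF,a}(f_i u)$. For $u = wx$, the recursive definition of the action $w^\frA$ gives
\[
  a(f_i u)^\frA \;=\; \bigcup\{\pr_i(\bfa)\, u^\frA \mid \bfa \in f_\frA(a)\} \;\supseteq\; a_i u^\frA,
\]
so enlarging the set over which we take the maximum of $\om_x(\cdot)$ yields $\Lambda_{\NF,a_i}(u) \leq \Lambda_{\NF,a}(f_i u) = \Lambda_{\NF,a}(r)$. Chaining the inequalities shows that every term in the max defining $\Phi_{\NF,a}(t)$ is bounded by $\Lambda_{\NF,a}(r)$, which proves $(\ast)$. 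Part (a) then follows by taking the max over $a \in I$ on both sides, using $\Phi_\NF(t) = \max\{\Phi_{\NF,a}(t) \mid a \in I\}$ and $\Lambda_\NF(r) = \max\{\Lambda_{\NF,a}(r) \mid a \in I\}$.

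For part (b), by definition $\tde(\Phi_\NF)(r) = \bigvee\{\Phi_\NF(t) \mid t \in \SXt,\ r \in \delta(t)\}$, and part (a) says each term in this supremum is $\leq \Lambda_\NF(r)$, giving the desired containment $\tde(\Phi_\NF) \se \Lambda_\NF$. I expect the only mildly tricky point to be keeping the book-keeping right at the inductive step: one must remember that $(a_1,\ldots,a_m)$ is an \emph{arbitrary} element of $f_\frA(a)$, so that the bound $\Phi_{\NF,a_i}(t_i) \leq \Lambda_{\NF,a}(f_i u)$ (with $a$ fixed but $a_i$ varying) can be taken uniformly before passing to the supremum; this is precisely where the ``$\supseteq$'' coming from the union in the definition of $a(f_iu)^\frA$ does the work.
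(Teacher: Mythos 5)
Your proposal is correct and follows essentially the same route as the paper: the same per-state strengthening $\Phi_{\NF,a}(t)\leq\Lambda_{\NF,a}(r)$ proved by tree induction, the same key inequality $\Lambda_{\NF,a_i}(u)\leq\Lambda_{\NF,a}(f_iu)$ coming from $a_iu^\frA\subseteq a(f_iu)^\frA$, and part (b) read off from the definition of $\tde$. The only (harmless) difference is that the paper uses the chain hypothesis to pick a tuple attaining the supremum defining $\Phi_{\NF,a}(t)$ and bounds that one term, whereas you bound every term of the supremum uniformly; note also that the step ``the $m$-fold meet is at most $\Phi_{\NF,a_i}(t_i)$'' holds in any lattice and does not actually need $\calC$ to be a chain.
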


\begin{proof}
Let us first show by tree induction that $\Phi_{\NF,a}(t) \leq \Lambda_{\NF,a}(r)$ for any $t\in \SXt$, $r\in \delta(t)$ and $a\in A$.

If $t = x \in X$, then $r$ must be $x$, and hence $\Phi_{\NF,a}(t) = \om_x(a) = \Lambda_{\NF,a}(r)$.

If $t = f(t_1,\ldots,t_m)$, then $r = f_iux$ for some $i\in[m]$, $u \in \Ga^*$ and $x\in X$. Because $\calC$ is a chain, $\Phi_{\NF,a}(t) = \Phi_{\NF,a_1}(t_1) \land \ldots \land \Phi_{\NF,a_m}(t_m)$
for some $(a_1,\ldots,a_m) \in f_\frA(a)$. Moreover, $ux \in \delta(t_i)$, and therefore
$
\Lambda_{\NF,a}(r) \geq \Lambda_{\NF,a_i}(ux) \geq \Phi_{\NF,a_i}(t_i) \geq \Phi_{\NF,a}(t).
$
Now we get (a) as follows:
\[
\Phi_\NF(t) = \max\{\Phi_{\NF,a}(t) \mid a \in I\} \leq \max\{\Lambda_{\NF,a}(r) \mid a \in I\} = \Lambda_\NF(r).
\]
Statement (b) follows from (a) because
$
\tde(\Phi_\NF)(r) = \max\{\Phi_\NF(t) \mid t\in \SXt, r\in \delta(t)\} \leq \Lambda_\NF(r)
$
for every $r\in \GXt$.
\end{proof}

The \emph{subset recognizer} of a $\calC$-NDT $\SX$-recognizer $\recNF$ is the $\calC$-DT $\SX$-recognizer $\recpNF$, where
$\pi = (\pi_x)_{x\in X}$ is defined by $\pi_x(H) = \max \{\om_x(a) \mid a \in H\}$ ($x \in X, H \se A$).

\begin{proposition}\label{pr:LambdaNF=LambdapNF} $\Lambda_{\NF} = \Lambda_{\wp\NF}$ for any $\calC$-NDT $\SX$-recognizer $\NF$.
\end{proposition}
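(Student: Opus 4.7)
The plan is to prove the equality pointwise for each $wx \in \GXt$ by unfolding both sides using the definitions and then invoking Lemma~\ref{le:HwpA = Un awA}.

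First I would rewrite $\Lambda_{\NF}(wx)$ in a single-level form. Since $\calC$ is a chain and $Iw^\frA = \bigcup\{aw^\frA \mid a \in I\}$ by definition, the nested maxima in the definition of $\Lambda_\NF$ collapse:
\[
\Lambda_{\NF}(wx) \;=\; \max\{\Lambda_{\NF,a}(wx) \mid a \in I\} \;=\; \max\{\om_x(b) \mid b \in Iw^\frA\}.
\]

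Next I would unfold $\Lambda_{\wp\NF}(wx)$. The subset recognizer $\wp\NF$ is a $\calC$-DT $\SX$-recognizer with initial state $I$, underlying DT algebra $\wp\frA$, and final state assignment $\pi$. Applying the formula $\Lambda_{\bfF}(wx) = \om_x(a_0 w^{\calA})$ from the definition of $\Lambda_\bfF$ preceding Lemma~\ref{le:DT-recognizers and unary DT-recognizers}, we obtain
\[
\Lambda_{\wp\NF}(wx) \;=\; \pi_x(Iw^{\wp\frA}).
\]

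Now I would apply Lemma~\ref{le:HwpA = Un awA} to replace $Iw^{\wp\frA}$ by $Iw^\frA$, and then use the definition $\pi_x(H) = \max\{\om_x(a) \mid a \in H\}$ to get
\[
\Lambda_{\wp\NF}(wx) \;=\; \pi_x(Iw^{\frA}) \;=\; \max\{\om_x(a) \mid a \in Iw^{\frA}\} \;=\; \Lambda_{\NF}(wx).
\]
Since $wx \in \GXt$ was arbitrary, the identity $\Lambda_{\NF} = \Lambda_{\wp\NF}$ follows.

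The proof is essentially a bookkeeping exercise: there is no genuine obstacle beyond matching the nested-maximum formulation of $\Lambda_{\NF}$ with the single-maximum DT formulation of $\Lambda_{\wp\NF}$ and citing Lemma~\ref{le:HwpA = Un awA} to bridge between $w^\frA$ and $w^{\wp\frA}$. The only point that requires any care is the first step, where one must observe that joins distribute over unions under the $\max$ taken over a chain so that the two levels of $\max$ in $\Lambda_\NF$ can be flattened into one $\max$ ranging over $Iw^\frA$.
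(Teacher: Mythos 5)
Your proof is correct and follows essentially the same route as the paper's: unfold both sides pointwise at $wx$, collapse the nested maxima defining $\Lambda_{\NF}$ into a single maximum over $Iw^{\frA}$, and invoke Lemma~\ref{le:HwpA = Un awA} to identify $Iw^{\frA}$ with $Iw^{\wp\frA}$ so that the result matches $\pi_x(Iw^{\wp\frA}) = \Lambda_{\wp\NF}(wx)$. The only cosmetic difference is the order in which the two sides are unfolded; the flattening step you flag is just the standard identity that a join of joins over a union of finite sets is the join over the union, which needs no distributivity.
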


\begin{proof} Let $\recNF$. For any $wx \in \GXt$,
\begin{align*}
\Lambda_{\NF}(wx) &= \max \{\Lambda_{\NF,a}(wx) \mid  a\in I\}\\
    &= \max \{\omega_x(b) \mid b \in aw^{\frA}, a\in I\}\\
    &= \max \{\omega_x(b) \mid b \in Iw^{\frA}\}\\
    &=  \max\{\omega_x(b) \mid b \in Iw^{\wp\frA}\}\\
    &= \pi_x(Iw^{\wp\frA}) = \Lambda_{\wp\NF}(wx),
\end{align*}
where the fourth equality is justified by Lemma \ref{le:HwpA = Un awA}.
\end{proof}

The minimization theory of usual DT tree recognizers \cite{GeSt78,GeSt84} uses ``normalized'' DT tree recognizers to deal with so-called 0-states. In \cite{GeSt84} the characterization of the DT-recognizable tree languages as the path closed regular tree languages employs normalized NDT tree recognizers. Here the division of states into 0-states and states from which some tree can be accepted does not suffice since there may be several degrees of acceptance.

For any $a\in A$, let
$
M(a) := \max\{\Phi_{\NF,a}(t) \mid t \in \SXt\}.
$
Since $\Phi_{\NF,a}(t) \in R_\om$ for every $t$ and $R_\om$ is finite, $M(a)$ is well-defined. Moreover, the values $M(a)$ can be determined as follows.

For any $a\in A$, let $M_0(a) := \max\{\om_x(a) \mid x\in X\}$, and for each $k\geq 0$, let $M_{k+1}(a) :=$
\[
    M_k(a) \lor \max\{M_k(a_1) \land \ldots \land M_k(a_m) \mid f \in \Si, (a_1,\ldots,a_m)\in f_\frA(a)\}.
\]
Obviously, $M_k(a)$ is the maximal degree of acceptance of any $\SX$-tree of height $\leq k$ ($k \geq 0$) when $\NF$ starts at the root in state $a$. Since
$
M_0(a) \leq M_1(a) \leq M_2(a) \leq \ldots \leq \max\{c \mid c \in R_\om\}
$,
there is a $k$ such that $M_{k+1}(a) = M_k(a)$ for every $a\in A$. Clearly, $M(a) = M_k(a)$ for such a $k$ and any $a\in A$.

We say that $\NF$ is \emph{normalized} if $M(a_1) = \ldots = M(a_m)$ for all $m \in \rSi$, $f\in \Si_m$, $a\in A$, and $(a_1,\ldots,a_m) \in f_\frA(a)$.

\begin{theorem}\label{th:Any NDT rec equiv to normalized}
Any $\calC$-NDT $\SX$-recognizer is equivalent to a normalized
$\calC$-NDT $\SX$-recognizer.
\end{theorem}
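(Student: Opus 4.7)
The plan is to attach to each state $a \in A$ a ``level'' $v \in R_\om$ recording the maximum acceptance value that the (augmented) state is still allowed to contribute, and then to engineer every nondeterministic transition so that all of its target states carry the same level. Since it will turn out that the level alone determines the maximum-acceptance value $M$ on any state of the new recognizer, forcing equal levels on the children of a transition automatically forces equal $M$-values, which is exactly the normalization condition.

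Concretely, first compute $M(a)$ for every $a \in A$ by iterating the monotone sequence $M_k(a)$ defined just before the theorem; this terminates because $R_\om$ is finite. Then define $\NF' = (\frB,J,\pi)$ as follows. Let $B := A \times R_\om$, and for each $m \in \rSi$, $f \in \Si_m$ and $(a,v) \in B$, set
\[
f_\frB((a,v)) := \{\,((a_1,v'),\ldots,(a_m,v')) \mid (a_1,\ldots,a_m) \in f_\frA(a),\; v' = v \wedge M(a_1) \wedge \cdots \wedge M(a_m)\,\}.
\]
Let $J := \{(a,M(a)) \mid a \in I\}$ and $\pi_x((a,v)) := v \wedge \om_x(a)$.

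The core step is to prove by tree induction the identity $\Phi_{\NF',(a,v)}(t) = v \wedge \Phi_{\NF,a}(t)$ for all $(a,v) \in B$ and $t \in \SXt$. The base case is immediate from the definition of $\pi_x$. For $t = f(t_1,\ldots,t_m)$, the inductive hypothesis together with the fact that each $\Phi_{\NF,a_i}(t_i) \leq M(a_i)$ makes the extra meet factor $M(a_1) \wedge \cdots \wedge M(a_m)$ in $v'$ redundant; one then pulls $v$ out of the max using the distributivity of $\wedge$ over finite joins, which is where the chain assumption on $\calC$ is used. Once the identity is in place, equivalence is a one-liner: $\Phi_{\NF'}(t) = \max_{a \in I} (M(a) \wedge \Phi_{\NF,a}(t)) = \max_{a \in I} \Phi_{\NF,a}(t) = \Phi_\NF(t)$, because $\Phi_{\NF,a}(t) \leq M(a)$.

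It remains to verify normalization. From the inductive identity one reads off $M_{\NF'}((a,v)) = v \wedge M(a)$. For any transition $((a_1,v'),\ldots,(a_m,v')) \in f_\frB((a,v))$, the second coordinate $v'$ is $\leq M(a_i)$ for every $i$, hence $M_{\NF'}((a_i,v')) = v' \wedge M(a_i) = v'$ for all $i$, and these values coincide. The only delicate point in the whole argument is the distributivity step in the inductive proof, which is the single place where the hypothesis that $\calC$ is a chain is genuinely needed; everything else is purely combinatorial bookkeeping with the level coordinate.
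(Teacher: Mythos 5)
Your construction is essentially identical to the paper's: the same state set $A\times R_\om$, the same transition rule (your $v'=v\wedge M(a_1)\wedge\cdots\wedge M(a_m)$ is the paper's $d\wedge c$ with $c=\min(M(a_1),\ldots,M(a_m))$), the same initial set and final assignment, the same inductive identity $\Phi_{\NF',(a,v)}(t)=v\wedge\Phi_{\NF,a}(t)$, and the same normalization check via $M_{\NF'}((a_i,v'))=v'\wedge M(a_i)=v'$. The proof is correct and matches the paper's argument.
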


\begin{proof} Let $\recNF$ be any $\calC$-NDT $\SX$-recognizer. We construct a normalized $\calC$-NDT
$\SX$-recognizer $\NG = (\frB,J,\pi)$ as follows.
Let $\frB = (B,\Si)$ be the NDT $\Si$-algebra, where $B :=
A\times R_\om$ and for any $m\in \rSi$, $f\in \Si_m$ and $(a,d)
\in B$,
\[
    f_\frB((a,d)) = \{\, ((a_1,d\land c),\dots,(a_m,d\land c)) \mid  (a_1,\dots,a_m) \in f_\frA(a)\,\},
\]
where $c = \min(M(a_1),\dots,M(a_m))$ for each $(a_1,\dots,a_m) \in f_\frA(a)$. Note that $c\in R_\om$ and that $R_\om$ is closed under meets.
The set of initial states is $J := \{ (a,M(a)) \mid a \in I\}$,
and let $\pi_x((a,d)) := \om_x(a)\land d$ for all $x\in X$ and
$(a,d) \in B$.
To prove that $\NG \equiv \NF$, we show by tree induction that
$\Phi_{\NG,(a,d)}(t) = \Phi_{\NF,a}(t)\land d$ for all $t\in
\SXt$ and $(a,d) \in B$.

Firstly,  $\Phi_{\NG,(a,d)}(x) = \pi_x((a,d)) =
\om_x(a)\land d = \Phi_{\NF,a}(x)\land d$ for $x\in X$.

Let $f = f(t_1,\dots,t_m)$. For each $(a_1,\dots,a_m) \in f_\frA(a)$, we let $c$ denote $\min(M(a_1),\dots,M(a_m))$. As
$\Phi_{\NF,a_1}(t_1)\land \dots \land \Phi_{\NF,a_m}(t_m) \le
c$, we get
\begin{align*}
      &\Phi_{\NG,(a,d)}(t) \\
      &= \max\{\min\{\Phi_{\NG,(a_i,d\land c)}(t_i) \mid i\in [m]\} \mid (a_1, \dots, a_m) \in f_{\frA}(a) \}\\
      &= \max\{\min\{\Phi_{\NF,a_i}(t_i)\land (d\land c) \mid i \in [m] \} \mid (a_1, \dots, a_m) \in f_{\frA}(a)\}\\
      &= \max\{ \Phi_{\NF,a_1}(t_1)\land \dots \land \Phi_{\NF,a_m}(t_m) \mid (a_1, \dots, a_m) \in f_{\frA}(a)\} \land d \\
      &=  \Phi_{\NF,a}(t) \land d.
\end{align*}
In particular, for every $t \in \SXt$,
\begin{align*}
\Phi_\NG(t) &= \max \{ \Phi_{\NG,b}(t) \mid b \in J \} = \max \{ \Phi_{\NG,(a,M(a))}(t) \mid a \in I \}\\
        &= \max \{ \Phi_{\NF,a}(t) \land M(a) \mid a \in I \} = \max \{ \Phi_{\NF,a}(t) \mid a \in I \}\\
        &= \Phi_\NF(t).
\end{align*}
To show that $\NG$ is normalized, let $((a_1,d\land
c),\dots,(a_m,d\land c)) \in f_\frB((a,d))$ be as in the
definition of $\frB$, and consider any $i \in [m]$. For any
$t\in \SXt$,
\[
\Phi_{\NG,(a_i,d\land c)}(t) = \Phi_{\NF,a_i}(t) \land d \land
c \leq M(a_i) \land d \land c = d\land c.
\]
On the other hand, if $t\in \SXt$ is a tree such that
$\Phi_{\NF,a_i}(t) = M(a_i)$, then $\Phi_{\NG,(a_i,d\land
c)}(t) = M(a_i) \land d \land c = d\land c$, and hence
$M((a_i,d\land c)) = d \land c$ for every $i \in [m]$.
\end{proof}

\begin{lemma}\label{le:Normalized NDT Phi(t) = Lambda(r)}
Let $\NF$ be a normalized $\calC$-NDT $\SX$-recognizer. For any $r\in \GXt$, there is a tree $t\in \SXt$ such that $r\in \delta(t)$ and $\Phi_{\NF}(t) = \Lambda_{\NF}(r)$.
\end{lemma}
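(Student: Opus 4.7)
The plan is to prove the statement by combining Lemma~\ref{le:Phi_NF(t) leq Lambda_NF(r)}, which gives the inequality $\Phi_\NF(t) \le \Lambda_\NF(r)$ whenever $r \in \delta(t)$, with a matching lower bound produced by a constructive induction on the length of $w$, where $r = wx$. The induction is on the ``local'' path value starting from an arbitrary state $a \in A$, not on the global $\Lambda_\NF$, because the construction needs to be applied recursively to subtrees along the spine described by $w$.

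Concretely, first I would pick an initial state $a \in I$ realizing $\Lambda_{\NF,a}(r) = \Lambda_\NF(r)$, so that it suffices to exhibit some $t\in\SXt$ with $r\in\delta(t)$ and
\[
\Phi_{\NF,a}(t) \;\ge\; \Lambda_{\NF,a}(r),
\]
since then $\Phi_\NF(t)\ge\Phi_{\NF,a}(t)\ge\Lambda_\NF(r)$ and equality follows from Lemma~\ref{le:Phi_NF(t) leq Lambda_NF(r)}(a). The base case $w=\varepsilon$ is trivial: take $t=x$, whence $\Phi_{\NF,a}(x)=\om_x(a)=\Lambda_{\NF,a}(x)$. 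For the inductive step with $w=f_iu$ and $f\in\Si_m$, rewrite
\[
\Lambda_{\NF,a}(f_iux) \;=\; \max\{\Lambda_{\NF,a'}(ux) \mid a' \in \pr_i(f_\frA(a))\},
\]
pick $a'\in\pr_i(f_\frA(a))$ achieving this maximum, and fix a tuple $(a_1,\dots,a_m)\in f_\frA(a)$ with $a_i=a'$. By the inductive hypothesis there is $s\in\SXt$ with $ux\in\delta(s)$ and $\Phi_{\NF,a'}(s)\ge\Lambda_{\NF,a'}(ux)$.

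The heart of the argument, and the step that genuinely uses normalization, is the choice of the sibling subtrees. For each $j\neq i$ we need a tree $t_j$ whose acceptance degree from $a_j$ is at least $\Lambda_{\NF,a'}(ux)$, so that the meet defining $\Phi_{\NF,a}(f(t_1,\dots,t_{i-1},s,t_{i+1},\dots,t_m))$ (along the tuple $(a_1,\dots,a_m)$) does not drop below the target. Pick $t_j\in\SXt$ with $\Phi_{\NF,a_j}(t_j)=M(a_j)$; such a tree exists by the definition of $M(a_j)$ and the finiteness of $R_\om$. Because $\NF$ is normalized, $M(a_j)=M(a_i)=M(a')$, and by definition $M(a')\ge \Phi_{\NF,a'}(s)\ge\Lambda_{\NF,a'}(ux)$. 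Hence $\Phi_{\NF,a_j}(t_j)\ge\Lambda_{\NF,a'}(ux)$ for every $j\neq i$.

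Finally, form $t=f(t_1,\dots,t_{i-1},s,t_{i+1},\dots,t_m)$. Then clearly $r=f_iux\in\delta(t)$, and since $\calC$ is a chain,
\[
\Phi_{\NF,a}(t) \;\ge\; \Phi_{\NF,a_1}(t_1)\land\dots\land\Phi_{\NF,a'}(s)\land\dots\land\Phi_{\NF,a_m}(t_m) \;=\; \Lambda_{\NF,a'}(ux) \;=\; \Lambda_{\NF,a}(r),
\]
where the equality of the meet with $\Lambda_{\NF,a'}(ux)$ is the observation that the minimum of finitely many chain elements, each at least $\Lambda_{\NF,a'}(ux)$ and one exactly equal to it, equals $\Lambda_{\NF,a'}(ux)$. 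The expected main obstacle is precisely the need to control the sibling contributions; without normalization one could only guarantee that $\Phi_{\NF,a_j}(t_j)$ ranges up to $M(a_j)$, which might be strictly smaller than $\Lambda_{\NF,a'}(ux)$, destroying the meet. Normalization exactly eliminates this obstruction.
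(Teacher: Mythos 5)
Your proof is correct and takes essentially the same route as the paper's: an induction on the path word establishing the statewise claim $\Phi_{\NF,a}(t) \geq \Lambda_{\NF,a}(wx)$, with normalization used exactly as you describe to supply sibling subtrees $t_j$ realizing $M(a_j) = M(a_i)$, followed by lifting to $\Phi_{\NF}$ via a maximizing initial state and the upper bound from Lemma~\ref{le:Phi_NF(t) leq Lambda_NF(r)}. The only cosmetic difference is that the paper carries an equality through the induction where you carry an inequality, which is equivalent given that upper bound.
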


\begin{proof} First we show by induction on $w$ that for any $wx\in \GXt$ and $a\in A$, there is a tree $t\in \SXt$ such that $wx\in \delta(t)$ and $\Phi_{\NF,a}(t) = \Lambda_{\NF,a}(wx)$.

If $w = \ve$, we may choose $t = x$.
Let then $w = f_iux$ for some $f_i\in \Ga$ and $u \in \Ga^*$. For any $a\in A$,
\[
\Lambda_{\NF,a}(wx) = \max\{\om_x(b) \mid b \in aw^\frA\} = \max\{\om_x(b) \mid b \in \pr_i(\bfa) u^\frA, \bfa \in f_\frA(a)\}.
\]
Let $\bfa = (a_1,\ldots,a_m)$ be an element of $f_\frA(a)$ for which the maximal value $\om_x(b)$ is obtained for some $b\in a_iu^\frA$. Then $\Lambda_{\NF,a}(wx) = \Lambda_{\NF,a_i}(ux)$. By the inductive assumption, there is a tree $t_i \in \SXt$ such that $ux\in \delta(t_i)$ and $\Phi_{\NF,a_i}(t_i) = \Lambda_{\NF,a_i}(ux)$. Since $\NF$ is normalized, there exists for each $j\in [m], j \neq i$, a tree $t_j \in \SXt$ such that $\Phi_{\NF,a_j}(t_j) = M(a_i) \geq \Phi_{\NF,a_i}(t_i)$. For $t := f(t_1,\ldots,t_m)$,
\begin{align*}
\Phi_{\NF,a}(t) &\geq \Phi_{\NF,a_1}(t_1) \land \ldots \land \Phi_{\NF,a_m}(t_m)\\
&= \Phi_{\NF,a_i}(t_i) = \Lambda_{\NF,a_i}(ux) = \Lambda_{\NF,a}(wx).
\end{align*}
The converse $\Phi_{\NF,a}(t) \leq \Lambda_{\NF,a}(wx)$ holds by Lemma \ref{le:Phi_NF(t) leq Lambda_NF(r)} because $wx \in \delta(t)$.

To prove the lemma itself, consider any $r\in \GXt$. By definition, $\Lambda_{\NF}(r) = \max\{\Lambda_{\NF,a}(r) \mid a \in I\}$. Let $b\in I$ be a state for which $\Lambda_{\NF}(r) = \Lambda_{\NF,b}(r)$. By the first part of the lemma, there is a $t \in \SXt$ such that $r\in \delta(t)$ and $\Phi_{\NF,b}(t) = \Lambda_{\NF,b}(r)$. For any $a\in I$, $\Phi_{\NF,a}(t) \leq \Lambda_{\NF,a}(r)$ by Lemma \ref{le:Phi_NF(t) leq Lambda_NF(r)}, and therefore
$
\Phi_{\NF,a}(t) \leq \Lambda_{\NF,b}(r) = \Phi_{\NF,b}(t).
$
This implies that $\Phi_{\NF}(t) = \Phi_{\NF,b}(t) = \Lambda_{\NF}(r)$.
\end{proof}

\begin{theorem}\label{th:deltaNF=pNF}  $\Phi_{\pNF} = \tDe(\Phi_{\NF})$ for any normalized $\calC$-NDT $\SX$-recognizer $\NF$.
\end{theorem}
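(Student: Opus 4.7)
The plan is to show that $\Phi_{\pNF}$ and $\tDe(\Phi_{\NF})$ both coincide with $\tdei(\Lambda_{\NF})$, by combining Proposition \ref{pr:LambdaNF=LambdapNF}, Lemma \ref{le:DT-recognizers and unary DT-recognizers}(c) and the normalization lemma, Lemma \ref{le:Normalized NDT Phi(t) = Lambda(r)}.

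First I would observe that $\pNF$ is a $\calC$-DT $\SX$-recognizer, so Lemma \ref{le:DT-recognizers and unary DT-recognizers}(c) applied to it gives $\Phi_{\pNF} = \tdei(\Lambda_{\pNF})$. Combining this with Proposition \ref{pr:LambdaNF=LambdapNF} yields the identity $\Phi_{\pNF} = \tdei(\Lambda_{\NF})$. This half uses no normalization and is essentially bookkeeping.

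The substantive step is to establish $\tde(\Phi_{\NF}) = \Lambda_{\NF}$ under the normalization hypothesis. The inclusion $\tde(\Phi_{\NF}) \se \Lambda_{\NF}$ is exactly Lemma \ref{le:Phi_NF(t) leq Lambda_NF(r)}(b). For the reverse inequality, fix $r\in \GXt$. By Lemma \ref{le:Normalized NDT Phi(t) = Lambda(r)}, the normalization of $\NF$ provides a tree $t\in \SXt$ with $r\in \delta(t)$ and $\Phi_{\NF}(t) = \Lambda_{\NF}(r)$; then
\[
\tde(\Phi_{\NF})(r) = \bigvee\{\Phi_{\NF}(t') \mid t'\in \SXt,\, r\in\delta(t')\} \geq \Phi_{\NF}(t) = \Lambda_{\NF}(r),
\]
which gives $\Lambda_{\NF} \se \tde(\Phi_{\NF})$, and hence equality.

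Finally, I would apply $\tdei$ to this equality: $\tDe(\Phi_{\NF}) = \tdei(\tde(\Phi_{\NF})) = \tdei(\Lambda_{\NF}) = \Phi_{\pNF}$. The only real content of the argument is in the use of normalization to upgrade the inclusion $\tde(\Phi_{\NF}) \se \Lambda_{\NF}$ to equality; everything else is straightforward unwinding of definitions and prior lemmas, so I expect no significant obstacle beyond correctly invoking Lemma \ref{le:Normalized NDT Phi(t) = Lambda(r)}.
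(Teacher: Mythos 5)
Your proof is correct and follows essentially the same route as the paper: both use Lemma \ref{le:Phi_NF(t) leq Lambda_NF(r)} and Lemma \ref{le:Normalized NDT Phi(t) = Lambda(r)} to show that $\tde(\Phi_{\NF})$ agrees with $\Lambda_{\NF}$, and then Proposition \ref{pr:LambdaNF=LambdapNF} together with Lemma \ref{le:DT-recognizers and unary DT-recognizers}(c) to identify $\tdei(\Lambda_{\NF})$ with $\Phi_{\pNF}$. The only difference is cosmetic: you package the key step as the global identity $\tde(\Phi_{\NF}) = \Lambda_{\NF}$ and then apply $\tdei$, whereas the paper carries out the same computation pointwise at each tree $t$.
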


\begin{proof} Let $t \in \SXt$. For every $r \in \delta(t)$, $\Phi_\NF(t) \leq \Lambda_\NF(r)$ by Lemma~\ref{le:Phi_NF(t) leq Lambda_NF(r)}, but also $\Phi_\NF(s) \leq \Lambda_\NF(r)$ for any  $s \in \SXt$ such that $r \in \delta(s)$. On the other hand, by Lemma~\ref{le:Normalized NDT Phi(t) = Lambda(r)} there is an $s \in \SXt$  such that $r \in \delta(s)$ and $\Phi_{\NF}(s) = \Lambda_{\NF}(r)$. Thus
\begin{align*}
    \tDe(\Phi_{\NF})(t) &= \min \{\max\{ \Phi_{\NF}(s) \mid r \in \delta(s)\} \mid r \in \delta(t)\}\\
    &= \min\{\Lambda_{\NF}(r) \mid r \in \delta(t)\}.
\end{align*}
By Proposition~\ref{pr:LambdaNF=LambdapNF}, $\Lambda_{\NF}(r) = \Lambda_{\pNF}(r)$ for any $r\in \GXt$. Hence  we get
\[
   \tDe(\Phi_{\NF})(t) = \min \{\Lambda_{\pNF}(r)  \mid r \in \delta(t)\} = \Phi_{\pNF}(t),
\]
where the second equality follows from Lemma~\ref{le:DT-recognizers and unary DT-recognizers}(c).
\end{proof}

\begin{theorem}\label{th:Path closure of Rec}
The path closure of any regular $\calC$-fuzzy tree language is DT-recognizable.
A regular $\calC$-fuzzy tree language is DT-recognizable if and only if it is path closed.
\end{theorem}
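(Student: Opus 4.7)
The plan is to deduce the theorem directly from the machinery already assembled in this section, with essentially no new computation required. Both claims reduce to an interplay between Theorem~\ref{th:Any NDT rec equiv to normalized}, Theorem~\ref{th:deltaNF=pNF}, and Corollary~\ref{co:DT-rec is path closed}.

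For the first statement, I would start with any regular $\calC$-fuzzy $\SX$-tree language $\Phi$. By definition there is a $\calC$-NDT $\SX$-recognizer $\NF$ with $\Phi = \Phi_\NF$. By Theorem~\ref{th:Any NDT rec equiv to normalized} we may assume $\NF$ is normalized, since replacing it with an equivalent normalized recognizer does not change $\Phi_\NF$. Theorem~\ref{th:deltaNF=pNF} then gives $\Phi_{\wp\NF} = \tDe(\Phi_\NF) = \tDe(\Phi)$. Because $\wp\NF$ is a $\calC$-DT $\SX$-recognizer by its very construction, this shows $\tDe(\Phi) \in DRec_\calC(\Si,X)$.

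For the biconditional, one direction is Corollary~\ref{co:DT-rec is path closed}: any $\Phi \in DRec_\calC(\Si,X)$ satisfies $\tDe(\Phi) = \Phi$, that is, $\Phi$ is path closed. For the converse, suppose $\Phi \in Rec_\calC(\Si,X)$ is path closed. Then $\Phi = \tDe(\Phi)$, and the first part of the theorem gives $\tDe(\Phi) \in DRec_\calC(\Si,X)$, so $\Phi \in DRec_\calC(\Si,X)$.

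There is no genuine obstacle left at this stage, since the heavy lifting was done in proving Theorem~\ref{th:Any NDT rec equiv to normalized} (the normalization construction) and Theorem~\ref{th:deltaNF=pNF} (which tied the subset recognizer of a normalized NDT recognizer to the fuzzy path closure). The only care needed is to ensure that normalization is actually used: without it, Theorem~\ref{th:deltaNF=pNF} would not apply and one would only have the inclusion $\tde(\Phi_\NF)\subseteq \Lambda_\NF$ from Lemma~\ref{le:Phi_NF(t) leq Lambda_NF(r)}(b), which is insufficient to identify $\Phi_{\wp\NF}$ with $\tDe(\Phi_\NF)$. So the proof essentially amounts to saying ``apply Theorem~\ref{th:Any NDT rec equiv to normalized}, then Theorem~\ref{th:deltaNF=pNF}, then Corollary~\ref{co:DT-rec is path closed}'', and recording that path closedness is the obstruction to being DT-recognizable within the regular fragment.
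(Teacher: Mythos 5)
Your proof is correct and follows essentially the same route as the paper: normalize via Theorem~\ref{th:Any NDT rec equiv to normalized}, identify $\Phi_{\wp\NF}$ with $\tDe(\Phi_\NF)$ via Theorem~\ref{th:deltaNF=pNF}, and use Corollary~\ref{co:DT-rec is path closed} for the forward direction of the equivalence. Your closing remark about why normalization cannot be dropped is a sensible observation but not needed for the argument itself.
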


\begin{proof}
The first statement follows directly from Theorems \ref{th:deltaNF=pNF} and \ref{th:Any NDT rec equiv to normalized}.

Consider any $\calC$-NDT $\SX$-recognizer $\NF$. By Theorem \ref{th:Any NDT rec equiv to normalized} we may assume that $\NF$ is normalized.
If $\Phi_{\NF}$ is path closed, then $\Phi_\NF = \tDe(\Phi_\NF) = \Phi_{\wp\NF} \in DRec_\calC(\Si,X)$ by Theorem \ref{th:deltaNF=pNF}.
On the other hand, if $\Phi_\NF$ is DT-recognizable, then it is path closed by Corollary \ref{co:DT-rec is path closed}.
\end{proof}

As a further application of Theorem \ref{th:deltaNF=pNF} we get the following result.

\begin{proposition}\label{pr:DT-recognizability decidable}
It is decidable whether a given regular $\calC$-fuzzy tree language is DT-recognizable.
\end{proposition}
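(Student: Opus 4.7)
The plan is to combine the machinery of Theorems~\ref{th:Any NDT rec equiv to normalized}, \ref{th:deltaNF=pNF}, and \ref{th:Path closure of Rec} with the already-established decidability of equivalence for fuzzy NDT recognizers. The input is presented as some $\calC$-NDT $\SX$-recognizer $\NF$ recognizing the given regular fuzzy tree language $\Phi_\NF$ (if a general $\calC$-NDT recognizer is given instead, Proposition~\ref{pr:NDT equiv to general NDT} produces an equivalent $\calC$-NDT recognizer effectively, since any chain is distributive).

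First I would apply Theorem~\ref{th:Any NDT rec equiv to normalized} to replace $\NF$ by an equivalent normalized $\calC$-NDT $\SX$-recognizer, which can be done effectively: the values $M(a)$ stabilize after finitely many iterations of the operator $M_k$ (bounded by $|R_\om|$), and the construction in the proof of that theorem is explicit. Call the resulting normalized recognizer $\NF$ again. Next, form its subset recognizer $\wp\NF$, which by construction is a $\calC$-DT $\SX$-recognizer, so in particular it may be regarded as a $\calC$-NDT $\SX$-recognizer with singleton NDT-transitions. By Theorem~\ref{th:deltaNF=pNF}, $\Phi_{\wp\NF} = \tDe(\Phi_\NF)$.

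Now by Theorem~\ref{th:Path closure of Rec}, $\Phi_\NF$ is DT-recognizable if and only if it is path closed, i.e., if and only if $\Phi_\NF = \tDe(\Phi_\NF) = \Phi_{\wp\NF}$. Thus the whole question reduces to the Equivalence Problem ``$\NF \equiv \wp\NF$?'' for two $\calC$-NDT $\SX$-recognizers. Since a chain is distributive, Proposition~\ref{pr:NDTequivNDT decidable} applies and this equivalence is decidable.

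The only nontrivial step is really the assembly of the pieces, so I do not expect a serious obstacle; what one must check carefully is effectiveness of each stage: that $R_\om$ and hence the values $M(a)$ can be computed using only the meets and joins available in $\calC$ (our standing convention); that the subset construction and the induced $\pi_x$ can be formed effectively; and that the equivalence check from Proposition~\ref{pr:NDTequivNDT decidable} is itself effective, which it is since the bound $h$ there comes from Lemma~\ref{le:Weak Pumping for NDT recognizers} applied to the parallel product of $\NF$ and $\wp\NF$. Putting this together yields the algorithm and establishes the proposition.
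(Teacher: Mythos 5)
Your proposal is correct and follows essentially the same route as the paper's own proof: normalize $\NF$, use Theorem~\ref{th:deltaNF=pNF} to identify $\tDe(\Phi_\NF)$ with $\Phi_{\wp\NF}$, invoke the path-closure characterization of Theorem~\ref{th:Path closure of Rec}, and reduce to the decidable equivalence test of Proposition~\ref{pr:NDTequivNDT decidable}. Your additional remarks on the effectiveness of computing the values $M(a)$ and of the subset construction are a welcome elaboration of points the paper leaves implicit.
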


\begin{proof}
Consider any $\Phi \in Rec_\calC(\Si,X)$. By Theorem \ref{th:Any NDT rec equiv to normalized} we may assume that  $\Phi = \Phi_\NF$ for a given normalized $\calC$-NDT recognizer $\NF$. By Theorem \ref{th:Path closure of Rec}, $\Phi$ is DT-recognizable if and only if $\tDe(\Phi) = \Phi$. Since $\tDe(\Phi) = \Phi_{\wp\NF}$ by Theorem \ref{th:deltaNF=pNF}, this holds if and only if $\wp\NF \equiv \NF$, which is decidable by Proposition \ref{pr:NDTequivNDT decidable}.
\end{proof}

We shall now consider normalized deterministic $\calC$-fuzzy tree recognizers. Let $\recF$ be any $\calC$-DT $\SX$-recognizer. Similarly as in the nondeterministic case, we set $M(a) := \max\{\Phi_{\bfF,a}(t) \mid t\in \SXt\}$ for any $a\in A$, and
say that $\bfF$ is \emph{normalized} if for all $m \in \rSi$, $f\in \Si_m$ and $a\in A$, $f_\calA(a) = (a_1,\ldots,a_m)$ implies $M(a_1) = \ldots = M(a_m)$.

The statements of the following proposition are obtained as special cases from Theorem \ref{th:Any NDT rec equiv to normalized} and \ref{le:Normalized NDT Phi(t) = Lambda(r)}. To get (a), we have to verify that applied to a DT recognizer, the construction used in the proof of Theorem \ref{th:Any NDT rec equiv to normalized} yields a deterministic recognizer.

\begin{proposition}\label{pr:normalized L-DT recognizers} Let $\recF$ be any $\calC$-DT $\SX$-recognizer.
\begin{itemize}
  \item[{\rm (a)}] $\bfF$ is equivalent to a normalized
$\calC$-DT $\SX$-recognizer.
  \item[{\rm (b)}] If $\bfF$ is normalized, then  there is for any $r\in \GXt$, a $\SX$-tree $t$ such that $r\in \delta(t)$ and $\Phi_{\bfF}(t) = \Lambda_{\bfF}(r)$.
\end{itemize}
\end{proposition}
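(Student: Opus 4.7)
The plan is to obtain both parts as specializations of the nondeterministic results already established, by viewing a $\calC$-DT $\SX$-recognizer as a degenerate $\calC$-NDT $\SX$-recognizer with singleton transitions and a singleton initial set. Concretely, with any $\recF$ I associate the $\calC$-NDT $\SX$-recognizer $\NF = (\frA',\{a_0\},\om)$, where $\frA' = (A,\Si)$ has $f_{\frA'}(a) = \{f_\calA(a)\}$ for all $f\in\Si$ and $a\in A$. A straightforward tree induction gives $\Phi_{\NF,a} = \Phi_{\bfF,a}$ for every $a$, and since $aw^{\frA'} = \{aw^\calA\}$ one also sees that $\Lambda_{\NF,a}(wx) = \om_x(aw^\calA) = \Lambda_{\bfF,a}(wx)$. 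Consequently the two notions of $M(a)$ coincide and $\NF$ is normalized in the NDT sense if and only if $\bfF$ is normalized in the DT sense.

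For (a), I would feed $\NF$ into the construction from the proof of Theorem~\ref{th:Any NDT rec equiv to normalized}. The resulting NDT recognizer $\NG = (\frB,J,\pi)$ has state set $A\times R_\om$, initial set $J = \{(a_0,M(a_0))\}$, and transitions
\[
f_\frB((a,d)) = \{((a_1,d\land c),\ldots,(a_m,d\land c)) \mid (a_1,\ldots,a_m)\in f_{\frA'}(a)\}.
\]
Since $f_{\frA'}(a)$ is a singleton, so is $f_\frB((a,d))$, and $J$ is obviously a singleton. Thus $\NG$ is in fact a $\calC$-DT $\SX$-recognizer $\bfG$ (with a unique $f_\calB((a,d))$ for every pair), and Theorem~\ref{th:Any NDT rec equiv to normalized} guarantees both that $\bfG \equiv \NF \equiv \bfF$ and that every reachable successor of any state has the same value of $M$, i.e.\ that $\bfG$ is normalized in the DT sense. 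The only small verification is that the meet $c = \min(M(a_1),\dots,M(a_m))$ is well-defined and lies in $R_\om$, which is clear because $R_\om$ is a finite sublattice of the chain $\calC$.

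For (b), given a normalized $\calC$-DT $\SX$-recognizer $\bfF$, the associated $\NF$ is a normalized $\calC$-NDT $\SX$-recognizer by the observations above. Then Lemma~\ref{le:Normalized NDT Phi(t) = Lambda(r)} furnishes, for any $r\in\GXt$, a tree $t\in\SXt$ with $r\in\delta(t)$ and $\Phi_\NF(t)=\Lambda_\NF(r)$. Translating back through the identities $\Phi_\NF = \Phi_\bfF$ and $\Lambda_\NF = \Lambda_\bfF$ yields $\Phi_\bfF(t) = \Lambda_\bfF(r)$, as required.

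The only delicate point is the verification in (a) that the Theorem~\ref{th:Any NDT rec equiv to normalized} construction really stays within the deterministic world; everything else is bookkeeping, identifying a DT recognizer with its singleton-valued NDT counterpart and quoting the already-proved nondeterministic results. So the main (mild) obstacle is checking that $f_\frB$ and $J$ are singleton-valued/singleton when the input is a DT recognizer, together with confirming that the normalization condition stated in the DT definition matches the one implicitly used in the NDT proof once one restricts to singleton transitions.
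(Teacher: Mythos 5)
Your proposal is correct and follows essentially the same route as the paper: the authors likewise obtain (a) and (b) as special cases of Theorem~\ref{th:Any NDT rec equiv to normalized} and Lemma~\ref{le:Normalized NDT Phi(t) = Lambda(r)}, with the only point needing verification being that the normalization construction, applied to a (singleton-transition) DT recognizer, again yields a deterministic recognizer. Your write-up simply makes explicit the bookkeeping (singleton transitions, singleton initial set, coincidence of the two notions of $M(a)$ and of normalization) that the paper leaves to the reader.
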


We may now prove for the special case at hand, the following stronger form of Lemma \ref{le:DT-recognizers and unary DT-recognizers} (d).

\begin{proposition}\label{pr:tde(Phi_F) = Lambda_F for norm F}
$\tde(\Phi_\bfF) = \Lambda_\bfF$ for any normalized $\calC$-DT $\SX$-recognizer $\bfF$.
\end{proposition}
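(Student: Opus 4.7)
The plan is to prove the two inclusions $\tde(\Phi_\bfF) \se \Lambda_\bfF$ and $\Lambda_\bfF \se \tde(\Phi_\bfF)$ separately, since the first is essentially free from the earlier results and only the second requires normalization.

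For the inclusion $\tde(\Phi_\bfF) \se \Lambda_\bfF$, I would simply cite Lemma \ref{le:DT-recognizers and unary DT-recognizers}(d), which states exactly this and holds for any $\calL$-DT $\SX$-recognizer, not just normalized ones over a chain. No new work is needed here.

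For the reverse inclusion $\Lambda_\bfF \se \tde(\Phi_\bfF)$, I would fix an arbitrary path $r = wx \in \GXt$ and apply Proposition \ref{pr:normalized L-DT recognizers}(b), which uses the normalization hypothesis: it produces a tree $t \in \SXt$ such that $r \in \delta(t)$ and $\Phi_\bfF(t) = \Lambda_\bfF(r)$. Then, directly from the definition of $\tde$,
\[
\tde(\Phi_\bfF)(r) \;=\; \bigvee\{\Phi_\bfF(s) \mid s \in \SXt,\; r \in \delta(s)\} \;\geq\; \Phi_\bfF(t) \;=\; \Lambda_\bfF(r).
\]
This holds for every $r \in \GXt$, so $\Lambda_\bfF \se \tde(\Phi_\bfF)$.

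Combining the two inclusions gives the equality. The main (and only non-trivial) ingredient is Proposition \ref{pr:normalized L-DT recognizers}(b), which itself rests on the chain hypothesis and normalization to guarantee that the witnesses used along the desired path can be matched in degree by sibling subtrees on the other branches; without normalization, the meets taken at internal nodes could strictly decrease the value along $r$ and the inclusion would fail, as shown in Example \ref{ex:Lambda_bfF se tde(Phi_bfF) does not hold}. Since that proposition is already at our disposal, the proof itself is essentially a one-line invocation for each direction.
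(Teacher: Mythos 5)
Your proof is correct and follows essentially the same route as the paper's: the inclusion $\tde(\Phi_\bfF) \se \Lambda_\bfF$ is taken directly from Lemma \ref{le:DT-recognizers and unary DT-recognizers}(d), and the reverse inclusion is obtained by invoking Proposition \ref{pr:normalized L-DT recognizers}(b) to produce, for each $r\in \GXt$, a witness tree $t$ with $r\in\delta(t)$ and $\Phi_\bfF(t)=\Lambda_\bfF(r)$, so that the supremum defining $\tde(\Phi_\bfF)(r)$ is at least $\Lambda_\bfF(r)$. Nothing is missing.
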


\begin{proof}
Consider any $r \in \GXt$. By Proposition \ref{pr:normalized L-DT recognizers} (b), $\Lambda_\bfF(r) = \Phi_\bfF(t)$ for some $t \in \SXt$ such that $r \in \delta(t)$, which implies that
$
\tde(\Phi_\bfF)(r) = \max \{\Phi_\bfF(t) \mid t \in \SXt, r\in \delta(t)\} \geq \Lambda_\bfF(r)
$, and thus $\tde(\Phi_\bfF) \supseteq \Lambda_\bfF$. The converse inclusion $\tde(\Phi_\bfF) \se \Lambda_\bfF$ is given by Lemma \ref{le:DT-recognizers and unary DT-recognizers}(d).
\end{proof}

\begin{proposition}\label{pr:F equiv G iff Fu equiv Gu} If $\bfF$ and $\bfG$ are normalized $\calC$-DT $\SX$-recognizers, then $\bfF \equiv \bfG$ if and only if $\bfF^u \equiv \bfG^u$.
\end{proposition}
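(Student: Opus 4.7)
The plan is to prove the two directions separately, noting that one direction is already essentially settled while the other is the new content that requires normalization.

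For the implication $\bfF^u \equiv \bfG^u \Rightarrow \bfF \equiv \bfG$, I would simply invoke the first part of Corollary \ref{co:Fu equiv Gu implies F equiv G}, which gives this direction for arbitrary (not necessarily normalized) $\calC$-DT $\SX$-recognizers via the chain $\Phi_\bfF = \tdei(\Lambda_\bfF) = \tdei(\Phi_{\bfF^u}) = \tdei(\Phi_{\bfG^u}) = \tdei(\Lambda_\bfG) = \Phi_\bfG$.

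For the converse $\bfF \equiv \bfG \Rightarrow \bfF^u \equiv \bfG^u$, which is the interesting direction (and which Example \ref{ex:Lambda_bfF se tde(Phi_bfF) does not hold} shows fails without the normalization hypothesis), I would pass through path languages. By Lemma \ref{le:DT-recognizers and unary DT-recognizers}(a), $\Phi_{\bfF^u} = \Lambda_\bfF$ and $\Phi_{\bfG^u} = \Lambda_\bfG$, so it suffices to show $\Lambda_\bfF = \Lambda_\bfG$. Now the key point is that normalization upgrades the inclusion $\tde(\Phi_\bfF) \se \Lambda_\bfF$ of Lemma \ref{le:DT-recognizers and unary DT-recognizers}(d) to the equality $\tde(\Phi_\bfF) = \Lambda_\bfF$, as established in Proposition \ref{pr:tde(Phi_F) = Lambda_F for norm F}. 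The same proposition applied to $\bfG$ gives $\tde(\Phi_\bfG) = \Lambda_\bfG$. From $\Phi_\bfF = \Phi_\bfG$ we then immediately get $\Lambda_\bfF = \tde(\Phi_\bfF) = \tde(\Phi_\bfG) = \Lambda_\bfG$, completing the proof.

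There is essentially no obstacle here; the whole argument is a two-line deduction from the previously established Proposition \ref{pr:tde(Phi_F) = Lambda_F for norm F}, which is where the real work (via Proposition \ref{pr:normalized L-DT recognizers}(b), the specialization of Lemma \ref{le:Normalized NDT Phi(t) = Lambda(r)} to the deterministic case) has already been done. The role of normalization is precisely to ensure that for every path $r \in \GXt$ there exists a witness tree $t$ with $r \in \delta(t)$ and $\Phi_\bfF(t) = \Lambda_\bfF(r)$, so that path-level information is recoverable from tree-level information; without that, two recognizers can agree on all trees while diverging on path words that only arise in ``dead'' branches, exactly as in Example \ref{ex:Lambda_bfF se tde(Phi_bfF) does not hold}.
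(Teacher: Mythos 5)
Your argument is correct and coincides with the paper's own proof: both directions are handled exactly as in the text, the forward implication via Lemma \ref{le:DT-recognizers and unary DT-recognizers}(a) together with Proposition \ref{pr:tde(Phi_F) = Lambda_F for norm F}, and the converse via Corollary \ref{co:Fu equiv Gu implies F equiv G}. Your added remarks on the role of normalization accurately reflect the paper's discussion surrounding Example \ref{ex:Lambda_bfF se tde(Phi_bfF) does not hold}.
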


\begin{proof}
If $\bfF \equiv \bfG$, then
$\Phi_{\bfF^u} = \Lambda_\bfF = \tde(\Phi_\bfF) = \tde(\Phi_\bfG) = \Lambda_\bfG = \Phi_{\bfG^u}$
by Lemma \ref{le:DT-recognizers and unary DT-recognizers} (a) and Proposition \ref{pr:tde(Phi_F) = Lambda_F for norm F}. The converse implication is given by Corollary \ref{co:Fu equiv Gu implies F equiv G}.
\end{proof}

Note that Example \ref{ex:Lambda_bfF se tde(Phi_bfF) does not hold} is valid also under the assumption that $\calC$ is a chain. Hence it shows that Propositions \ref{pr:tde(Phi_F) = Lambda_F for norm F} and \ref{pr:F equiv G iff Fu equiv Gu}  do not hold in full for $\calC$-DT tree recognizers that are not normalized.
Moreover, statement (b) of Proposition \ref{pr:normalized L-DT recognizers} does not hold for the non-normalized $\calC$-DT $\SX$-recognizer $\bfF$ appearing in that example. Indeed, $\Lambda_\bfF(f_1x) = 1$, but there is no tree $t$ such that $f_1x \in \delta(t)$ and $\Phi_\bfF(t) = 1$.

The above results suggest the possibility of dealing with questions concerning $\calC$-DT tree recognizers by a reduction to unary
$\calC$-DT tree recognizers. Furthermore, if  $\Si$ is unary, then any $\calC$-DT $\SX$-recognizer $\recF$ can be turned into a finite automaton that recognizes a family $\big(L_{x,c} \se \Si^* \mid x\in X, c\in \ran(\Phi_\bfF)\big)$ of regular languages that completely determines $\Phi_\bfF$; the language $L_{x,c}$ is recognized  when the set of final states is $\{a\in A \mid \om_x(a) = c\}$. Thus the standard results and methods of the theory of finite automata (cf. \cite{Bra84,Eil74,Sak09}, for example) become applicable. For example, by Propositions \ref{pr:F equiv G iff Fu equiv Gu} and \ref{pr:normalized L-DT recognizers}, the decidability of the equivalence problem `$\bfF \equiv \bfG$?'' of $\calC$-DT tree automata could be inferred this way.

\end{document}